\date{}
\newcommand{\F}{\mathbb{F}}
\newcommand{\M}{\mathcal{M}}
\renewcommand{\angle}[1]{\mathopen{\langle} #1\mathclose{\rangle}}
 \newtheorem{claim}{Claim}[subsection]
\newtheorem{theorem}{Theorem}{}
\newtheorem{lemma}{Lemma}[subsection]
\newtheorem{remark}{Remark}[subsection]
\newtheorem{definition}{Definition}[subsection]
{}
\newtheorem{proposition}{Proposition}[subsection]
\newtheorem{observation}{Observation}[subsection]
\title{\bf Randomized Black-Box PIT for Small Depth +-Regular Non-commutative Circuits}
\begin{document}
\author{G V Sumukha Bharadwaj\footnote{\href{mailto:cs21d003@iittp.ac.in}{cs21d003@iittp.ac.in}} and S Raja\thanks{\href{mailto:raja@iittp.ac.in}{raja@iittp.ac.in}}}
\affil{IIT Tirupati, India}
\maketitle
\begin{abstract}
In this paper, we address the black-box polynomial identity testing (PIT) problem for non-commutative polynomials computed by $+$-regular circuits, a class of homogeneous circuits introduced by Arvind, Joglekar, Mukhopadhyay, and Raja (STOC 2017, Theory of Computing 2019). These circuits can compute polynomials with a number of monomials that are doubly exponential in the circuit size. They gave an efficient randomized PIT algorithm for +-regular circuits of depth 3 and posed the problem of developing an efficient black-box PIT for higher depths as an open problem.

We present a randomized black-box polynomial-time algorithm for +-regular circuits of any constant depth. Specifically, our algorithm runs in 
 $s^{O(d^2)}$ time, where $s$ and $d$ represent the size and the depth of the $+$-regular circuit, respectively. Our approach combines several key techniques in a novel way. We employ a nondeterministic substitution automaton that transforms the polynomial into a structured form and utilizes polynomial sparsification along with commutative transformations to maintain non-zeroness.  Additionally, we introduce matrix composition, coefficient modification via the automaton, and multi-entry outputs —methods that have not previously been applied in the context of black-box PIT. Together, these techniques enable us to effectively handle exponential degrees and doubly exponential sparsity in non-commutative settings, enabling polynomial identity testing for higher-depth circuits. Our work resolves an open problem from \cite{AJMR19}.
 
In particular, we show that if $f$ is a non-zero non-commutative polynomial in $n$ variables over the field $\F$, computed by a depth-$d$ $+$-regular  circuit of size $s$, then $f$ cannot be a polynomial identity for the matrix algebra $\mathbb{M}_{N}(\mathbb{F})$, where $N= s^{O(d^2)}$ and the size of the field $\F$ depending on the degree of  $f$.  Interestingly, the size of the matrices does not depend on the degree of $f$. Our result can be interpreted as an Amitsur-Levitzki-type result \cite{AL} for polynomials computed by small-depth $+$-regular circuits.

 \end{abstract}
 \newpage
 \tableofcontents
 \newpage
\section{Introduction} \label{Introduction}
The non-commutative polynomial ring, denoted by $\F\angle X$, over a field $\F$ in non-commuting variables
$X$, consists of non-commuting polynomials in $X$. These are just $\F$-linear combinations of words
(we call them monomials) over the alphabet $X= \{x_1,\ldots,x_n\}$.
Hyafil \cite{Hya77} and Nisan \cite{N91}, studied the complexity of
\emph{non-commutative} arithmetic computations, in particular the
complexity of computing the determinant polynomial with non-commutative
computations. 

Non-commutative arithmetic circuit families compute non-commutative polynomial families in a non-commutative polynomial ring $\F\angle X$, where multiplication is non-commutative (i.e., for distinct $x,y \in X$, $xy \neq yx$). We now recall the formal definition of non-commutative arithmetic circuits.

\begin{definition}[Non-commutative arithmetic circuits]
A \emph{non-commutative arithmetic circuit} $C$ over a field $\F$ is a
directed acyclic graph such that each in-degree $0$ node of the graph
is labelled with an element from $X\cup \F$, where
$X=\{x_1,x_2,\ldots,x_n\}$ is a set of noncommuting variables. Each
internal node has fan-in two and is labeled by either ($+$) or
($\times$) -- meaning a $+$ or $\times$ gate,
respectively. Furthermore, each $\times$ gate has a designated left
child and a designated right child. Each gate of the circuit
inductively computes a polynomial in $\F\angle{X}$: the polynomials
computed at the input nodes are the labels; the polynomial computed at
a $+$ gate (respectively $\times$ gate) is the sum (respectively product in
left-to-right order) of the polynomials computed at its children. The
circuit $C$ computes the polynomial at the designated output node.
An arithmetic circuit is a \emph{formula} if the fan-out of every gate is at most one.
\end{definition}

Designing an efficient deterministic algorithm for non-commutative polynomial identity testing is a major open problem. Let $f \in \F\angle X$ be a polynomial represented by a non-commutative arithmetic circuit $C$. In this work, we assume that the polynomial $f$ is given by a black-box access to $C$, meaning we can evaluate the polynomial $f$ on matrices with entries from $\F$ or an extension field. Note that the degree of an $n$-variate polynomial computed by the circuit $C$ of size $s$ can be as large as $2^s$  and the sparsity, i.e., the number of non-zero monomials, can be as large as $n^{2^s}$. For example, the non-commutative polynomial $(x+y)^{2^s}$ has degree $2^s$, doubly exponential sparsity $2^{2^s}$, and has a circuit of size $O(s)$.

The classical result of Amitsur-Levitzki \cite{AL} shows that a non-zero non-commutative polynomial $f$ of degree $2d-1$ does not vanish on the matrix algebra $\M_d(\F)$. Bogdanov and  Wee \cite{BW05} have given an efficient randomized PIT algorithm for non-commutative circuits computing polynomials of degree $d=poly(s,n)$. Their algorithm is based on the result of Amitsur-Levitzki \cite{AL}, which states the existence of matrix substitutions $M=(M_1, M_2,\ldots, M_n)$ such that the matrix $f(M_1, M_2,\ldots, M_n)$ is not the zero matrix, where the dimension of the matrices in $M$ depends linearly on the degree $d$ of the polynomial $f$. 

Since the degree of the polynomial computed by circuit C can be exponentially large in the size of the circuit, their approach will not work directly. Finding an efficient randomized PIT algorithm for general non-commutative circuits is a well-known open problem. It was highlighted at the workshop on algebraic complexity theory (WACT 2016) as one of the key problems to work on.

Recently, \cite{cAJMR17,AJMR19} gave an efficient randomized algorithm for the PIT problem
when the circuits are allowed to compute polynomials of exponential degree, but the sparsity could be
exponential in the size of the circuit. To handle doubly-exponential sparsity, they studied a class of homogeneous non-commutative circuits, that they call +-regular circuits, and gave an efficient deterministic white-box PIT algorithm. These circuits can compute non-commutative
polynomials with the number of monomials doubly exponential in the circuit size. For the black-box setting, they obtain an efficient randomized PIT algorithm only for depth-3 +-regular circuits. In particular, they show that if a non-zero non-commutative polynomial $f \in \F\angle{X}$ is computed by a depth-3 +-regular circuit of size $s$, then $f$ cannot be a polynomial identity for the matrix algebra $\mathbb{M}_{s}(\F)$ for a sufficiently large field $\F$.
Finding an efficient randomized PIT algorithm for higher depth +-regular circuits is listed as an interesting open problem. We resolve this problem for constant depth +-regular circuits. 
In particular, we show that if $f \in \F\angle{X}$ is a non-zero non-commutative polynomial computed by a depth-$d$ +-regular circuit of size $s$, then $f$ cannot be a polynomial identity for the matrix algebra $\mathbb{M}_{N}(\mathbb{F})$, with $N=s^{O(d^2)}$ and the size of the field $\F$ depends on the degree of polynomial $f$. This resolves an open problem given in \cite{AJMR19}. We note that we get a \emph{black-box} randomized polynomial time black-box PIT algorithm for constant depth $+$-regular circuits. 

\subsection*{Our results}

We consider the black-box PIT problem for \emph{+-regular circuits}. These are a natural subclass of homogeneous non-commutative circuits and these circuits can compute polynomials of exponential degree and a double-exponential number of monomials. Recall that a polynomial $f$ is homogeneous if all of its monomials have the same degree. The syntactic degree is inductively defined as follows: For a leaf node labeled by a variable, the syntactic degree is 1, and 0 if it is labeled by a constant. For a $+$ gate, its syntactic degree is the maximum of the syntactic degree of its children. For a $\times$ gate, its syntactic degree is the sum of the syntactic degree of its children. A circuit is called homogeneous if all gates in the circuit compute homogeneous polynomials. Now we recall the definition and some observations of $+$-regular circuits from \cite{AJMR19}.

\begin{definition}[$+$-regular circuits \cite{AJMR19}] \label{reg-def}
  A non-commutative circuit $C$, computing a polynomial in $\F\angle X$, where $X=\{x_1,x_2,\ldots,x_n\}$, is \emph{$+$-regular} if it
  satisfies the following properties:
\begin{enumerate}
\item The circuit is homogeneous. The $+$ gates are of unbounded fanin and $\times$ gates are of fanin 2.
\item The $+$ gates in the circuit are partitioned into layers (termed
  $+$-layers) such that if $g_1$ and $g_2$ are $+$ gates in the same $+$-layer
  then there is no directed path in the circuit between $g_1$ and
  $g_2$.
\item  All gates in a $+$-layer have the same syntactic degree. 
\item  The output gate is a $+$ gate.
\item Every input-to-output path in the circuit goes through a gate in each $+$-layer. 
\item Additionally, we allow scalar edge labels in the circuit. For
  example, suppose $g$ is a $+$ gate in $C$ whose inputs are gates
  $g_1,g_2,\ldots,g_t$ such that $\beta_i\in\F$ labels edge $(g_i,g),
  i\in[t]$. If polynomial $P_i$ is computed at gate $g_i, i\in
  [t]$, then $g$ computes the polynomial $\sum_{i=1}^t\beta_iP_i$.
\end{enumerate}
\end{definition}

The $+$-depth, denoted by $d^+$, refers to the number of $+$ layers in $C$. The $+$ layers in circuit C are numbered from the bottom upwards.
For $i\in[d]$, let $\mathcal{L}_i^+$ represent the $i$-th layer of addition ($+$) gates, and let $\mathcal{L}_i^\times$ represent the $i$-th layer of multiplication ($\times$) gates that are inputs to the addition gates in $\mathcal{L}_i^+$. It's important to note that all gates in $\mathcal{L}_i^\times$ and $\mathcal{L}_i^+$ have the same syntactic degree. 

The sub-circuit in $C$ between any two consecutive addition layers $\mathcal{L}_i^+$ and $\mathcal{L}_{i+1}^+$ consists of multiplication gates and is denoted by $\Pi^*$. The inputs of this sub-circuit come from layer $\mathcal{L}_i^+$. Let $\mathcal{L}_{i+1}^\times$ consist of all output gates of this sub-circuit, where $1\leq i\leq d-1$. Note that all the gates of $\mathcal{L}_{i+1}^\times$ are product gates. It is important to note that this sub-circuit depth, which is the number of gates in any input-to-output gate path in the sub-circuit, can be arbitrary and is only bounded by the size of the circuit $C$. 
The bottom-most $\times$-layer $\mathcal{L}_1^\times$ can be assumed without loss generality to be the input variables and gates in $\mathcal{L}_1^+$ compute homogeneous linear forms.

\begin{remark}
\label{top-bottom-plus}
   If the top layer is $\Pi^*$ (i.e., the output gate is a $\times$ gate), we can add an extra + gate at the top with the + gate having a single input (i.e., fan-in 1). This ensures that the top layer is a $\Sigma$ layer. If the bottom layer is $\Pi^*$, then for each input variable, we can add a sum gate having a single input. This will increase the circuit size by at most $n+1$, where $n$ is the number of input variables. This allows us to assume that in $+$-regular circuits, both the top and bottom layers are $\Sigma$ layers. This is done to simplify the analysis.
\end{remark}

The size of the $+$-regular circuit is the number of gates in the circuit.   
As noted earlier, the non-commutative polynomial $(x+y)^{2^s}$ can be computed by a depth-3 $+$-regular circuit, denoted by $\Sigma\Pi^*\Sigma$,  of size $O(s)$ using repeated squaring.  This circuit consists of two addition layers, namely $\mathcal{L}_1^+,\mathcal{L}_2^+$ and two multiplication layers, namely $\mathcal{L}_1^\times,\mathcal{L}_2^\times$.  The multiplication layer $\mathcal{L}_1^\times$ consists of only the two input gates labeled by $x$ and $y$ respectively. The addition layer $\mathcal{L}_1^+$ consists of only one addition gate computing the homogeneous linear form $(x+y)$. The multiplication layer $\mathcal{L}_2^\times$ consists of only one gate computing the polynomial $(x+y)^{2^s}$. The addition layer $\mathcal{L}_2^+$ consists of only the ouptut gate computing the polynomial $(x+y)^{2^s}$. In this example, the top-most addition gate (i.e., the output gate) essentially has one input.

As noted in \cite{AJMR19}, the computational power of +-regular circuits is limited. Nisan's rank-based argument \cite{nisan91} can be adapted to show that the non-commutative permanent cannot be computed by polynomial-size +-regular circuits. However, despite their structural constraints, polynomial-size +-regular circuits with +-depth 2 can compute polynomials with exponential degree and a doubly exponential number of monomials. 

The main result of the paper is the following theorem.
\begin{theorem}
 Let $f$ be a non-commutative polynomial of degree $D$ over $X=\{x_1,\ldots,x_n\}$ computed by a +-regular circuit of depth $d$ and size $s$. Then $f\not\equiv 0$  if and only if $f$ is not identically zero on  the matrix algebra $\mathbb{M}_{N}(\mathbb{F})$, with $N=s^{O(d^2)}$ and $\F$ is sufficiently large.
\end{theorem}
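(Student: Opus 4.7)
The plan is to proceed by induction on the depth $d$, using the depth-$3$ case of \cite{AJMR19} (giving $N=s$) as the base case. For the inductive step, I would view the depth-$d$ $+$-regular circuit $C$ as a depth-$(d-2)$ $+$-regular circuit in which each ``leaf gate'' is itself a $\Sigma\Pi^*\Sigma$ block, and then compose the corresponding substitution constructions. Concretely, the top $\Sigma\Pi^*\Sigma$ block would be handled by augmenting the depth-$3$ automaton construction so that it accepts matrix-valued inputs, and those matrix-valued inputs would be produced by the inductive substitution applied to the lower part of $C$.

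The workhorse would be a nondeterministic substitution automaton $\mathcal{A}$ whose states correspond to gates in the $+$-layers of $C$ and whose transitions on reading the symbol $x_i$ follow the wiring of the $\Pi^*$ sub-circuits, with edge weights reproducing the scalar labels of $C$. Replacing $x_i$ by the transition matrix $M_i$ of $\mathcal{A}$ on symbol $x_i$, the $(u,v)$-entry of $M_{i_1}\cdots M_{i_D}$ records the total weight of $\mathcal{A}$-paths labeled $x_{i_1}\cdots x_{i_D}$ from $u$ to $v$, so that a designated entry of $f(M_1,\ldots,M_n)$ equals a ``coefficient-modified'' restriction of $f$ to the monomials that correspond to accepting paths. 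For depths greater than $3$, I would invoke \emph{matrix composition}: the entries of the outer block's substitution matrices would themselves be $N_{\text{in}}\times N_{\text{in}}$ matrices coming from the inductive substitution for the lower sub-circuit, yielding an outer dimension of $N_{\text{out}}\cdot N_{\text{in}}$. To avoid being forced into a single choice of accepting state that might kill $f$ at a deeper layer, I would use the \emph{multi-entry output} technique, retaining a set of designated entries and asking only that some entry be non-zero.

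The main obstacle I anticipate is showing that after all these transformations $f(M_1,\ldots,M_n)\neq 0$. The approach is first to sparsify: after substituting the inner automaton matrices, each entry of the resulting expression is a non-commutative polynomial whose sparsity is polynomial in $s$ rather than doubly exponential, because $\mathcal{A}$ collapses many monomials of $f$ into the same accepting path. I would then apply a commutative transformation (in the spirit of the reduction used in Amitsur--Levitzki-type PIT) to the sparsified polynomial, reducing non-zeroness to a low-degree commutative identity test that is handled by Schwartz--Zippel, using the assumption that $\F$ is sufficiently large in the degree $D$ of $f$. Tracking the dimension across the induction, each step multiplies $N$ by a factor of $s^{O(d)}$, where the extra $O(d)$ in the exponent accounts for the combined overhead of sparsification and of the nondeterministic choices introduced at depth $d$; iterating over the $d$ $+$-layers gives $N=s^{O(d^2)}$, matching the claimed bound.
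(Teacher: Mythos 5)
Your proposal correctly identifies several of the ingredients (induction on $+$-depth with the depth-$3$ result as a base case, nondeterministic substitution automata, matrix composition via Kronecker product, multi-entry outputs, and a final Schwartz--Zippel test), and the dimension bookkeeping $N = s^{O(d^2)}$ is compatible with the paper's. However, where the sketch commits to specifics it diverges from what is actually needed, and it is silent on the hardest part of the argument.

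First, the automaton you describe is white-box: you propose states corresponding to gates of $C$ and transitions that ``follow the wiring of the $\Pi^*$ sub-circuits''. The paper's model is black-box, so the automaton must be a fixed, circuit-independent object (depending only on $s$ and $d$), and it has no access to where one factor $Q_{ij}$ ends and the next begins inside a monomial. The paper's automaton (Figure \ref{fig1}) therefore nondeterministically \emph{guesses} the factor boundaries, and this guessing is exactly what makes the problem hard: wrong guesses produce a whole ``spurious'' polynomial $F$ alongside the intended structured polynomial $\hat f$, and the analysis must show $\hat f + F \neq 0$. Your sketch does not acknowledge the existence of $F$ at all, nor the boundary-detection obstacle that forces its existence.

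Second, ``each entry of the resulting expression is a non-commutative polynomial whose sparsity is polynomial in $s$'' is not what happens and is not what the paper needs. After the Step-1 transformation the degree and sparsity are still exponential; what has been gained is an \emph{ordered power-sum} structure (Definition \ref{powersum-def}), and the reason one can later pass to a commutative polynomial is Claim \ref{ord-pow-sum-c-nc} about such polynomials together with the product-sparsification lemma (Lemma \ref{poly-proj}), which keeps at most $s-1$ of the $D_2$ factors non-commutative. This sparsification reduces the number of non-commutative factors, not the number of monomials. The subsequent commutative transformation (Lemma \ref{com-conversion}) is also not ``in the spirit of Amitsur--Levitzki'': Amitsur--Levitzki cannot be applied here because the degree is exponential; the conversion relies on the ordered power-sum shape to encode block positions in fresh commutative variables.

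Third, and most seriously, even granting all of the above there is a step your sketch cannot reach: after Steps 1--3 the commutative images $\hat f_3^{(c)}$ and $F_3^{(c)}$ of the structured and spurious parts may cancel. The paper resolves this with a coefficient-modification automaton (Lemma \ref{lem-modp-counting}) that, via modulo-$p$ counting of $\xi$-pattern lengths/exponent sums, rescales the contributions so that the spurious monomials can no longer exactly cancel the structured ones, exploiting the invariant in Claims \ref{obs-good-mon} and \ref{obs-bad-mon} that distinguishes structured from spurious monomials. Without this idea (or a substitute for it) the non-vanishing argument breaks at the final stage, so the proposal as written has a genuine gap rather than merely a different route.
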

 
For degree $D$ non-zero non-commutative polynomial $f$, the classical Amitsur-Levitzki \cite{AL} theorem guarantees  that  $f$  does not vanish on the matrix algebra $\M_{\frac{D}{2}+1}(\F)$. If $D=2^{\Omega(s)}$, this gives us an exponential time randomized PIT algorithm \cite{BW05}, where $s$ is the size of the circuit computing $f$. In contrast, in our result, the dimension of the matrices is independent of the degree of the polynomial.
If the sparsity of the polynomial, i.e., the number of non-zero monomials, is doubly exponential, then the main result of \cite{AJMR19} gives only an exponential time randomized PIT algorithm as their matrix dimension depends on the logarithm of the sparsity.  

This above theorem demonstrates that if the polynomial $f$ is computed by a $+$-regular circuit of size $s$ and depth $o(\sqrt{s}/\log s)$, we can determine if $f$ is identically zero or not using a $2^{o(s)}$ time randomized PIT algorithm, which is exponentially faster than the existing methods. In particular, if depth is $O(1)$ then 
our algorithm runs in polynomial time. It is important to note that the number of product gates (within each $\Pi^*$ layers) in any input-to-output path can be arbitrary and is only bounded by the circuit size $s$.

We note that \cite{AJMR19} presented a white-box deterministic polynomial-time PIT for arbitrary depth $+$-regular circuits. For the small-degree case, \cite{raz05PIT} provided a white-box deterministic polynomial-time PIT for non-commutative ABPs, while \cite{FS13, AGKS15} have shown a deterministic quasi-polynomial-time black-box PIT algorithm for non-commutative ABPs. In the commutative setting, a randomized polynomial time PIT was given by \cite{IM83}.

\subsection{Outline of the Proofs: High-level Idea}

In the rest of the paper, we will use "n.c." as an abbreviation for "non-commutative." We begin by explaining the main ideas behind the randomized PIT algorithm for depth-5 \( + \)-regular circuits, as this represents the primary bottleneck. The existing techniques are not directly applicable to this case, and solving the depth-5 case requires several new ideas, which we outline below. It turns out that this is the key bottleneck, as we can later adapt the results from the depth-5 \( + \)-regular circuits to higher depths with additional techniques.

Consider an n.c. polynomial $f$ over $X=\{x_1,\ldots,x_n\}$ computed by a $\Sigma\Pi^*\Sigma\Pi^*\Sigma$ circuit of size $s$. 
The polynomial $f$, computed by a $\Sigma\Pi^*\Sigma\Pi^*\Sigma$ circuit of size $s$, can be written as follows:
\begin{center}
 \begin{equation}
 \label{eq1}
 f=\sum_{i \in [s]}\prod_{j \in [D_2]} Q_{ij}.
\end{equation}
\end{center}
Here, the degree of each $Q_{ij}$, $i\in[s],j\in[D_2]$, is denoted by $D_1$ and can be computed by a $\Sigma\Pi^*\Sigma$ circuit of size at most $s$. 
As the size of the circuit is $s$, the output gate's fan-in is bounded by $s$.
The syntactic degree $D$ of the circuit can be expressed as $D=D_1\times D_2$. In general, both $D_1$ and $D_2$ can be exponential in $s$.  Note that each $Q_{ij}$ is a polynomial computed at layer $\mathcal{L}_2^+$. 
One natural idea is that since each $Q_{ij}$ can be computed by a $\Sigma\Pi^*\Sigma$ circuit, we can try to use the known result of depth-3 $+$-regular circuits \cite{AJMR19} and convert the given polynomial $f$  into a commutative polynomial, and then perform the randomized PIT  using the Polynomial Identity Lemma for commutative polynomials (also known as the DeMillo-Lipton-Schwartz-Zippel Lemma \cite{DL78,Zippel79,Sch80}).

Recall that in \cite{AJMR19}, the given polynomial $f$ is computed by a depth-3 +-regular circuit of size $s$. That is, $f$ is a sum of products of homogeneous linear forms. Formally, $f=\sum\limits_{i=1}^s P_i$, where for all $i \in [s], P_i=L_{i,1} \cdots L_{i,D}$ and $D$ could be exponential in $s$. They show that there exists an index set $I \subseteq[D]$ of size at most $s-1$ such that by considering only those linear forms positions indexed by $I$ as n.c.  and the remaining as commutative, the non-zeroness of $f$ is preserved. This fact is crucially used in their black-box identity-testing algorithm for depth-3 +-regular circuits.
In our depth-5 setting, that is, when $f=\sum\limits_{i \in [s]}\prod\limits_{j \in [D_2]} Q_{ij}$, it is only natural to wonder if there exists such a small index.
Note that the number of $Q_{ij}$ polynomials in each product, denoted by $D_2$, can be exponential, in general.
It is generally impossible to have a small index set of polynomial size. 
This is because if the index set is only polynomial in size, then as a result, no variables in some of the $Q_{ij}$ are considered non-commutative. Thus, these $Q_{ij}$ are considered commutative, possibly resulting in the commutative polynomial becoming 0. We cannot consider a n.c. polynomial as commutative and still maintain non-zeroness, in general.

To resolve this problem, we convert the n.c. polynomial into a commutative polynomial in several steps, utilizing the fact that a $+$-regular circuit computes it.

\subsubsection{Step 1: Transforming the Polynomial for Improved Structure}
In this step, we show that the polynomial can be converted into a more structured polynomial at the cost of introducing some spurious monomials.
We show that in each $Q_{ij}$ there is a small index set 
such that by considering only those homogeneous linear forms appearing in that $Q_{ij}$ as n.c.  and the remaining as commutative, the non-zeroness of $f$ is preserved.  The index sets are encoded using n.c. variables. This results in an exponential-sized index set. This is because the number of terms in the product, denoted by $D_2$, can be exponential in general and each term in the product has a small index set.

 This fact is formalized in Lemma \ref{gen-proj}. However, this fact alone will not be sufficient to design an identity-testing algorithm for depth-5 +-regular circuits.  This is because, to choose a small index set from each $Q_{ij}$, one would have to know the \emph{boundary} between $Q_{ij}$ and $Q_{i,j+1}$ (i.e., the position at which $Q_{ij}$ ends and $Q_{i,j+1}$ begins), for any $i  \in [s], j \in [D_2-1]$. 

 \begin{definition}[Boundary of a Polynomial]
 \label{boundary}
 Let $P=\prod_{i \in [k]}P_i$ be a non-commutative polynomial over $X$. Each non-zero monomial $m$ of $P$ can be expressed as: $m=m_1m_2\cdots m_k$. We say the boundary of $P_i$, $i \in [k]$ as the position at which $P_{i}$ ends. Similarly, the boundary of $m_i$, $i \in [k]$ in the monomial $m$ is defined as the position at which $m_{i}$ ends.
 \end{definition}

Since non-commutative polynomials are $\F$-linear combinations of words/strings (called monomials) over $X$,  for a n.c. monomial $m$, we can identify the variable at position $e$ in $m$, where $1 \leq e\leq |m|$.
 
 To know where $Q_{ij}$ ends and $Q_{i,j+1}$ begins exactly, the algorithm will have to keep track of the degree (length) of $Q_{ij}$. This is not feasible as the degree of $Q_{ij}$ could be exponential in $s$. This is one of the main challenges in designing a black-box identity-testing algorithm for depth-5 +-regular circuits and is one of the main differences between depth-3 and depth-5 +-regular circuits (small depth $+$-regular circuits, in general).

To address this issue, we first convert each $Q_{ij}$ into a more \emph{structured n.c.  polynomial}, which we refer to as a \emph{$k$-ordered power-sum polynomial} (see Definition \ref{powersum-def}), where $k$ is the number of variables. This new n.c.  polynomial has the property that we can consider it as a commutative polynomial preserving non-zeroness (Claim \ref{ord-pow-sum-c-nc}). 
Note that when the variables in any n.c. linear form is considered commutative, then non-zeroness is preserved. However, this may not be true for n.c. polynomials, in general. For example, the polynomial $(xy-yx)$ is 0 if $x$ and $y$ are commuting variables. 
This property (Claim \ref{ord-pow-sum-c-nc}) alone will not let us consider the whole polynomial as commutative. This is because
we are dealing with the sum of the product of $Q_{ij}$ polynomials and the number of terms in the product in general could be exponential. 
Thus, we can not use a fresh set of variables for each \emph{$k$-ordered power-sum polynomial} as this will have exponentially many \emph{new} variables, in general. Because of this, we use the \emph{same $k$ variables} to convert each $Q_{ij}$ in the product to a \emph{$k$-ordered power-sum polynomial}, instead of using a fresh set of $k$ variables for each $Q_{ij}$. 
If we simply consider the resulting product as commutative, then variables that belong to $k$-ordered power-sum polynomials of different $Q_{i,j_1}$ and $Q_{i,j_2}$ in the product, could mix (i.e., exponents of the same variable appearing in different $k$-ordered power-sum polynomials gets added). As a result, we cannot guarantee that the resulting commutative polynomial will preserve non-zeroness.

To address this issue, we will not be considering the modified polynomial as commutative at this stage. Instead, we will transform each $Q_{ij}$ polynomial into a more structured n.c. polynomial, which we denote by $\hat{Q}_{ij}$, by introducing a \emph{new} set of n.c.  variables. We show that this transformation preserves non-zeroness. 

The substitution automaton we use for the first step generates some spurious monomials along with a structured polynomial (Claim \ref{f-hat-spurious}).
The spurious monomials are produced because we cannot definitively identify the boundaries of each $Q_{ij}$ polynomial. We can consider spurious monomials as noise generated by our approach.

Let $F_1$ be the sum of all spurious monomials produced and $\hat{f}_1$ be the structured polynomial resulting from this transformation.
We will prove that $\hat{f}_1+F_1\not\equiv0$ (see Lemma \ref{gen-proj} and Claim \ref{fhat-spurious-non-zero}). The key to this proof lies in demonstrating that the spurious monomials possess a distinctive property, allowing us to differentiate them from the structured part.

 After completing Step 1, we can transform the polynomial computed by the depth-5 circuit into a combination of a structured part and a spurious part. One of the outcomes of this first step is that we can efficiently identify the boundaries of the $\hat{Q}_{ij}$ polynomials in the structured part, particularly using a small automaton, even though this process introduces some spurious monomials. A similar concept of boundary also applies to the monomials in the spurious part, and we show that these boundaries can also be identified using a small automaton.
 Another outcome is that each $\hat{Q}_{ij}$ polynomial can be treated as commutative without resulting in a zero polynomial. These two outcomes are crucial for the remaining steps of our method. 

\subsubsection{Step 2: Product Sparsification}
In the second step, we demonstrate that within the polynomial $\hat{f}_1$ if we treat a small number of the  $\hat{Q}_{ij}$ polynomials as n.c. while considering the rest as commutative, non-zeroness is preserved. A key aspect of this step is the ability to treat each $\hat{Q}_{ij}$ polynomial as commutative.
Although treating n.c. polynomials as commutative while preserving non-zeroness is generally not feasible, in Step 1, we transformed $f$ into a sum of a structured part $\hat{f}_1$ and a spurious part $F_1$. The n.c. polynomials $\hat{Q}_{ij}$ within $\hat{f}_1$ can be treated as n.c.  without resulting in a zero polynomial (as established in Claim \ref{ord-pow-sum-c-nc}). We then show that the n.c. polynomial obtained after this transformation remains non-zero (see Lemma \ref{poly-proj}), a result we refer to as \emph{product sparsification}. 
We call this sparsification because we reduce the number of n.c. $\hat{Q}_{ij}$ polynomials in each product, which can be exponential, to a small number of them while preserving non-zeroness. It is important to note, however, that the sum of the exponents of the n.c. variables (a.k.a \emph{n.c. degree}) in this new polynomial can still be generally exponential. 

It's important to highlight that this product sparsification step affects both the structured part $\hat{f}_1$ and the spurious part $F_1$ of the polynomial obtained after Step 1.  
 Let $\hat{f}_2$ and $F_2$ represent the polynomials derived from $\hat{f}_1$ and $F_1$ respectively as a result of this product sparsification step.
We first consider the product sparsification of the n.c. polynomial $\hat{f}_1$. 
We will return to  $\hat{f}_2+F_2$ later.

\subsubsection{Step 3: Commutative Transformation}
In the third and final step, we show that the sum of products of a small number of structured $\hat{Q}_{ij}$ n.c. polynomials can be transformed into a commutative polynomial while preserving non-zeroness (see Lemma \ref{com-conversion}).
As mentioned earlier, the n.c. degree of the polynomial obtained after Step 2 can be exponential in general. However, we show that this exponential degree n.c. polynomial can be converted into a commutative polynomial using only a small number of \emph{new} commutative variables.  
It is noteworthy that there is no known method to convert a general exponential degree n.c. polynomial into a commutative polynomial with just a small number of commutative variables while preserving non-zeroness. 

Such transformations are known only when the number of non-zero monomials is bounded single-exponentially (a.k.a sparsity of the polynomial)  or when the polynomial is computed by a $\Sigma\Pi^*\Sigma$ circuit \cite{AJMR19}. The key to our transformation lies in the structured nature of the $\hat{Q}_{ij}$ polynomials, each of which requires only a small number of \emph{new commutative variables}. Since the number of n.c $\hat{Q}_{ij}$ polynomials is small,  we conclude that we have to introduce only a small number of \emph{new commutative variables} for this transformation while ensuring non-zeroness is maintained.

It's important to highlight that this commutative transformation step also affects both the structured part $\hat{f}_2$ and the spurious part $F_2$ of the polynomial derived after Step 2.  Let $\hat{f}^{(c)}_3$ and $F^{(c)}_3$ represent the polynomials derived from $\hat{f}_2$ and $F_2$ respectively as a result of this commutative transformation step.
We are ready to consider the sum of the structured part and spurious part now.

\subsubsection{Efficient Coefficient Modifications}
Both Steps (2) and (3) will be applied to the structured polynomial $\hat{f}_1$ and the spurious polynomial $F_1$ obtained after Step 1. Let $\hat{f}^{(c)}_3$ and $F^{(c)}_3$ denote the respective commutative polynomials obtained after applying these two steps to 
$\hat{f}_1$ and $F_1$. Since the same commutative variables are used, it is possible for monomials in $\hat{f}^{(c)}$ and $F^{(c)}$ to cancel each other.

If $\hat{f}^{(c)}_3+F^{(c)}_3\neq 0$,  we have successfully transformed the exponential degree n.c. polynomial into a commutative polynomial using only a small number of commutative variables while preserving non-zeroness. We can then evaluate the resulting commutative polynomial $\hat{f}^{(c)}_3+F^{(c)}_3$ by randomly chosen matrices for non-zeroness. 

On the other hand, if $\hat{f}^{(c)}_3+F^{(c)}_3=0$, we know that the transformations in Steps (2) and (3) carried out only on the structured part $\hat{f}_1$ obtained from Step (1)
ensure that $\hat{f}_3^{(c)}\neq 0$, which implies that $F_3^{(c)}\neq 0$. 
If a group of n.c. monomials in $\hat{f}_1$ transform into a commutative monomial 
$m$ with coefficient $\alpha_m$ in $\hat{f}_3^{(c)}$ after Steps (2) and (3) then 
another group of n.c. monomials in $F$ transformed into the \emph{same} commutative monomial $m$ with the coefficient $-\alpha_m$ in $F_3^{(c)}$.

To address this cancellation issue, we show that we can carefully modify the coefficients of at least one such group of n.c. monomials in the polynomial $\hat{f}_1+F_1$ obtained after Step 1  before executing Steps (2) and (3). The key to this coefficient modification lies in the fact that the spurious monomials introduced in Step (1) have a distinctive property, enabling us to differentiate them from the structured part using a small automaton.
We show that such a coefficient modification preserves non-zeroness (see Lemma \ref{lem-modp-counting}).
 We then establish that if we apply Steps (2) and (3) on this newly modified polynomial, we get a commutative polynomial while preserving non-zeroness.
 
Though our transformation of a non-commutative polynomial into a commutative one involves modifying the monomial coefficients using substitution matrices obtained from an automaton, it does not turn a zero polynomial into a non-zero one. Since we apply the same matrix substitution for each occurrence of a given variable, monomials that cancel before the transformation will continue to cancel afterward, as they are affected in the same way. This property is crucially used in applying our method inductively for higher depths.

\subsubsection{Matrix Compositions}

Each of these steps (Steps (1)-(3) and coefficient modification step) is performed using a small substitution automaton, which defines a substitution matrix for each n.c. variable. Throughout the process, we obtain four different sets of matrices.
It’s important to note that the matrices used in each step evaluate a n.c. polynomial obtained from the previous step.

As our model is black-box, it is not possible to evaluate the polynomial in this way. We need a single matrix substitution for each n.c. variable.
We show that the substitution matrices used across these four steps can be combined into a single matrix for each n.c. variable (see Lemma \ref{composing-substitution-mat}).

This lemma on matrix composition enables us to establish the existence of small substitution matrices for testing non-zeroness in a sequence of steps, which is a key novelty and contribution of this work.
This enables us to develop an efficient randomized polynomial identity testing (PIT) algorithm for depth-5 $+$-regular circuits (see  Theorem \ref{thm-depth-5}).
We further extend this idea to higher depths to develop an efficient randomized PIT algorithm (see Theorem \ref{main-thm}).
\\
Although \cite{AJMR19} provides a randomized polynomial-time black-box PIT algorithm for \( + \)-regular circuits of depth 3, existing techniques are not directly applicable to this case. In particular, depth-5 \( + \)-regular circuits require new ideas, which we introduce in our algorithm. While our algorithm operates in a black-box setting, it can be described as a sequence of transformations that convert the non-commutative polynomial into a commutative one, enabling identity testing. In Lemma \ref{composing-substitution-mat}, we show that the matrix substitutions can be composed into a single matrix for each input variable, capturing the entire transformation process. Our coefficient modification approach introduces a novel technique in polynomial identity testing, which may be useful for developing PIT algorithms for other circuit models.

\section{Preliminaries}
\subsection{Substitution Automaton}
The paper uses the standard definition of substitution automaton from automata theory using the terminology of \cite{AJR16,AJR18,AJMR19}.

\begin{definition}[Substitution NFA]
 A finite \emph{nondeterministic substitution automaton}
is a finite nondeterministic automaton $\mathcal{A}$ along with a substitution map $$\psi:Q\times
X\to \mathcal{P}(Q\times Y\cup \F),$$ where $Q$ is the set of states of $\mathcal{A}$, and $Y$
is a set of variables and $\mathcal{P(S)}$ is the power set of $\mathcal{S}$. If $(j,u) \in \psi(i,x)$ it means
that when the automaton $\mathcal{A}$ in state $i$ reads variable $x$ it
can replace $x$ by $u\in Y\cup \F$ and can make a transition to state $j\in
Q$. In our construction, $Y$ consists of both commuting and noncommuting variables.

Now, for each $x\in X$ we can define the transition matrix $M'_x$ as follows:
\begin{equation}\label{trans-matrix}
M'_x(i,j)=  u, 1\le i,j\le |Q|, \textrm{ where }(j,u) \in \psi(i,x).
\end{equation}
\end{definition}

We observe that all the automatons we design are more restrictive in that, for any \( i \in Q \) and \( x \in X \), if both \( (j,u) \) and \( (j,v) \) belong to \( \psi(i,x) \), then \( u \) must equal \( v \). In other words, the substitution applied to variable \( x \) when transitioning from state \( i \) to state \( j \) is uniquely determined by the destination state \( j \).

The substitution map $\psi$ can be naturally extended to handle strings as follows: $\hat{\psi}:Q\times
X^*\to \mathcal{P}(Q\times (Y\cup \F)^*)$. For a state $j \in Q$ and string $m' \in (Y\cup \F)^*$,  if $(j,m') \in \hat{\psi}(i,x)$, then it means that the automaton starting at state $i$, on input string $m \in X^*$, can nondeterministically move to state $j$ by transforming the input string $m$ to $m'$ on some computation path.

Now, we explain how a substitution automaton $\mathcal{A}$ processes a given polynomial $f$. First, we will describe the output of a substitution automaton $\mathcal{A}$ on a monomial $w\in X^D$. Let $s$ and $t$ be the designated
initial and final states of $\mathcal{A}$, respectively. For each variable $x\in X$, we have a transition matrix $M'_x$ of size $|Q|\times |Q|$. If $w=x_{j_1}x_{j_2}\ldots x_{j_d}$, then the output of the substitution automaton $\mathcal{A}$ on the monomial $w$ is given by the $(s,t)$-th entry of the matrix $M'_w=M'_{x_{j_1}}M'_{x_{j_2}}\ldots M'_{x_{j_D}}$. Note that this entry can be a polynomial since there could be multiple paths from $s$ to $t$ labeled by the same monomial $w$. 
The output of the substitution automaton \( \mathcal{A} \) on the given polynomial \( f \) is provided by the \( (s,t) \)-th entry of the matrix \( f(M'_{x_1}, M'_{x_2}, \ldots, M'_{x_n}) \), which is obtained by substituting the matrix \( M'_{x_i} \) for each variable \( x_i \) where \( i \in [n] \) in the polynomial.

The substitution automaton processes each monomial of $f$ and sums over the resulting monomials.

\subsection{Kronecker Product and the Mixed Product Property}
\label{subsec:kronecker}
We will crucially use the Kronecker product of matrices and the mixed product property of the Kronecker product in the proof of Lemma \ref{composing-substitution-mat}. We define them below.
\begin{definition} (Kronecker Product of Matrices) \label{kronecker_product-def}
Let $A$ be a $m \times n$ matrix and let $B$ be a $p \times q$ matrix. Then their Kronecker product $A \otimes B$ is defined as follows.
\begin{displaymath}
\mathbf{A \otimes B} = \left( \begin{array}{cccc}
a_{11}B &  a_{12}B & \ldots & a_{1n}B \\
\vdots & \vdots & \ddots & \vdots\\
a_{m1}B &  a_{m2}B & \ldots & a_{mn}B
\end{array} \right) 
\end{displaymath}
where $a_{ij}$ are the entries of matrix $A$.
\end{definition} 

\begin{lemma}(Mixed Product Property) \label{mixed_product_property-lem}
Let $A,B,C,D$ be matrices of dimensions such that the matrix products $AB$ and $CD$ can be defined.  Then, the mixed product property states that:
$$(A \otimes B)\cdot(C \otimes D) = (A\cdot C) \otimes (B\cdot D).$$
\end{lemma}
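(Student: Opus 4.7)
The plan is to prove the identity by direct computation, exploiting the block structure of the Kronecker product. First, I would fix compatible dimensions: take $A$ of size $m \times n$, $B$ of size $p \times q$, $C$ of size $n \times r$, and $D$ of size $q \times s$, so that both $AC$ and $BD$ are defined. Under these dimensions, $A \otimes B$ and $C \otimes D$ have sizes $mp \times nq$ and $nq \times rs$ respectively, making the product on the left well-defined and of size $mp \times rs$; likewise $(AC) \otimes (BD)$ has size $mp \times rs$, so the two sides of the claimed identity are at least of the same shape.

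Next, I would view both Kronecker products in block form using Definition \ref{kronecker_product-def}. The matrix $A \otimes B$ consists of an $m \times n$ arrangement of $p \times q$ blocks, where the $(i,j)$-th block is exactly $a_{ij} B$; similarly, $C \otimes D$ has $(j,k)$-th block equal to $c_{jk} D$. Applying standard block matrix multiplication, the $(i,k)$-th block of $(A \otimes B)(C \otimes D)$ becomes
$$\sum_{j=1}^{n} (a_{ij} B)(c_{jk} D) \;=\; \left(\sum_{j=1}^{n} a_{ij} c_{jk}\right) BD \;=\; (AC)_{ik}\, BD,$$
which is precisely the $(i,k)$-th block of $(AC) \otimes (BD)$, establishing the equality block-by-block and hence entrywise.

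The only point requiring a moment of care is justifying that block matrix multiplication is legal here: this reduces to checking that the column-block width of $A \otimes B$ matches the row-block height of $C \otimes D$, both of which equal $q$ (the number of columns of $B$ and the number of rows of $D$). There is no substantial obstacle; the proof is essentially a one-line calculation, and the lemma is recorded explicitly because it will be invoked in the proof of Lemma \ref{composing-substitution-mat}, where substitution matrices from the several automata constructed in Steps (1)--(3) and the coefficient-modification step are combined into a single matrix for each noncommutative variable via Kronecker products.
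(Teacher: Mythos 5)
Your proof is correct, and it is the standard textbook argument for the mixed-product property of the Kronecker product. The paper itself does not supply a proof of this lemma at all — it is recorded as a known fact and cited when needed in the proof of Lemma \ref{composing-substitution-mat} — so there is no paper proof to compare against; your block-multiplication derivation, together with the dimension bookkeeping showing that the inner block widths agree (both equal to the number of columns of $B$ and rows of $D$), is exactly what a self-contained treatment would include.
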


\subsection{The Polynomial Identity Lemma}

We state the well-known DeMillo-Lipton-Schwartz-Zippel lemma (a.k.a The Polynomial Identity Lemma) for commutative polynomials.
\begin{lemma}[DeMillo-Lipton-Schwartz-Zippel Lemma \cite{DL78,Zippel79,Sch80}]  
\label{dlsz}
Let $f(x_1,x_2,\ldots,x_n)$ be a non-zero degree $D$ polynomial over a field $\mathbb{F}$, and let $S\subseteq \mathbb{F}$ be a finite subset. If we choose a uniformly at random point $a=(a_1,a_2,\ldots,a_n) \in S^n$ then 
$$Pr[f(a_1,a_2,\ldots,a_n)=0]\leq D/|S|.$$
\end{lemma}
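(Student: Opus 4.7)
The plan is to prove the lemma by induction on the number of variables $n$, which is the standard route and suffices for the claimed bound.

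For the base case $n=1$, the polynomial $f$ is univariate of degree $D$ over the field $\F$. Over any field, such a non-zero polynomial has at most $D$ roots, since distinct roots $\alpha_1,\ldots,\alpha_r$ force $(x_1-\alpha_1)\cdots(x_1-\alpha_r)$ to divide $f$, bounding $r \leq D$. Hence at most $D$ elements of $S$ can be roots, so $\Pr[f(a_1)=0] \leq D/|S|$ when $a_1$ is drawn uniformly from $S$.

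For the inductive step, I would isolate one variable, say $x_n$, and write
$$f(x_1,\ldots,x_n) \;=\; \sum_{i=0}^{k} f_i(x_1,\ldots,x_{n-1})\, x_n^{i},$$
where $k$ is the largest exponent of $x_n$ that appears in $f$ with a non-zero coefficient, so $f_k \not\equiv 0$. The coefficient $f_k$ is a non-zero polynomial in $n-1$ variables of degree at most $D-k$. Let $E$ denote the event that $f_k(a_1,\ldots,a_{n-1})=0$. By the inductive hypothesis applied to $f_k$, we have $\Pr[E] \leq (D-k)/|S|$. Conditioned on the complement $\overline{E}$, the univariate specialization $g(x_n) := f(a_1,\ldots,a_{n-1},x_n)$ is a non-zero polynomial in $x_n$ of degree exactly $k$. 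Since $a_n$ is drawn from $S$ independently of $a_1,\ldots,a_{n-1}$, the base case gives $\Pr[g(a_n)=0 \mid \overline{E}] \leq k/|S|$.

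Combining via the union bound,
$$\Pr[f(a)=0] \;\leq\; \Pr[E] + \Pr\!\big[\,\overline{E} \text{ and } g(a_n)=0\,\big] \;\leq\; \frac{D-k}{|S|} + \frac{k}{|S|} \;=\; \frac{D}{|S|},$$
which completes the induction. The argument is classical and essentially routine; the one point to be careful about is that the randomness in $a_n$ is independent of $a_1,\ldots,a_{n-1}$, which is what legitimizes applying the univariate base case to $g$ after conditioning on $\overline{E}$.
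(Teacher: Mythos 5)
Your proof is the standard and correct induction-on-variables argument for the Polynomial Identity Lemma: the base case via root counting for univariate polynomials and the inductive step via isolating the top coefficient of one variable and a union bound are both sound. The paper itself does not prove this lemma at all — it is stated in the preliminaries as a classical result with citations to the original sources — so there is nothing to compare against; your write-up is a correct self-contained proof of the cited fact.
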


\section{Composing Substitution Matrices}
In our proof outline, we discussed the process of transforming the polynomial computed by a depth-5 +-regular circuit into a commutative polynomial. Our method involves a series of transformations of the polynomial computed by a depth-5 +-regular circuit (or any constant depth +-regular circuit, in general). The final result is a commutative polynomial that maintains non-zeroness.  

Each transformation uses substitution matrices obtained from the corresponding substitution automaton. It is important to note that a sequence of matrix substitutions cannot be used in our black-box model, as it requires a single matrix substitution for each input variable. We show that this sequence of substitutions can be combined into the substitution of a single matrix. 

We formally present this in the following lemma, and the proof can be found in Appendix \ref{Appendix-B}. The proof relies on the Kronecker product, and in particular, on the mixed product property of the Kronecker product and matrix multiplication (see \cref{subsec:kronecker}).
\begin{lemma}[Composing $\mathrm{K}$ Substitution Matrices]
 \label{composing-substitution-mat}
 Let $f\in \F\angle X$ be a non-zero n.c. polynomial of degree $D$ over $X=\{x_1,\ldots,x_{n}\}$. Let $\mathrm{K}$ be a natural number. For each
$2\leq i\leq \mathrm{K}+1,n_i\in \mathbb{N}$, define sets of non-commutative variables $Z_i=\{z_{i1},\ldots,z_{in_i}\}$, where $n_i\in \mathbb{N}$.

Assume we have matrices $A_i=(A_{i1},A_{i2},\ldots,A_{in_i})$ of dimension $d_i$ for each $2 \leq i \leq \mathrm{K}+1$. For each $j \in [n_i]$, the matrix $A_{ij}$ belongs to $ \F^{d_i\times d_i}\angle{Z_{i+1}}$ and is expressed as: $$A_{ij}=\sum_{k=1}^{n_{i+1}} A^{(k)}_{ij} z_{i+1,k},$$ where $A^{(k)}_{ij}\in \F^{d_i\times d_i}$.  Set $n_1=n$ and define  $f_0=f$. For $i\geq 1$, let $f_i$ be the $\left[1,d_i\right]$-th entry of $f_{i-1}(A_{i1},A_{i2},\ldots,A_{in_{i}})$ for all $i\geq 1$.

There exists a matrix substitution $C=(C_{1},C_{2},\ldots,C_{n})$ where 
each $C_i$ is of dimension  $\left(\prod_{i\in[\mathrm{K}]}d_i\right)$ such that the polynomial $f_\mathrm{K}$ is equivalent to the $\left[1,\prod_{i\in[\mathrm{K}]}d_i\right]$-th entry of the matrix  $f(C_{1},C_{2},\ldots,C_{n})$.

\end{lemma}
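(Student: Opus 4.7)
The plan is to proceed by induction on $\mathrm{K}$, constructing the substitution matrices $C_j$ of dimension $D := \prod_{i\in [\mathrm{K}]} d_i$ by iteratively absorbing each level of substitution into the previous one via the Kronecker product. For the base case $\mathrm{K}=1$, we simply take $C_j = A_{1j}$, since $f_1$ is by definition the $[1,d_1]$-entry of $f_0(A_{11},\ldots,A_{1n_1})$. For the inductive step, suppose we have obtained matrices $B_j \in \F^{D_{\mathrm{K}-1} \times D_{\mathrm{K}-1}}\angle{Z_\mathrm{K}}$ of dimension $D_{\mathrm{K}-1} := \prod_{i=1}^{\mathrm{K}-1} d_i$ such that $f_{\mathrm{K}-1}$ equals the $[1, D_{\mathrm{K}-1}]$-entry of $f(B_1,\ldots,B_n)$; crucially, each $B_j$ is \emph{linear} in $Z_\mathrm{K}$, since it arose from composing $\mathrm{K}-1$ layers of linear substitutions (each $A_{ij}$ being linear in $Z_{i+1}$). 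We may thus write $B_j = \sum_{k=1}^{n_\mathrm{K}} B_j^{(k)} z_{\mathrm{K},k}$ with $B_j^{(k)} \in \F^{D_{\mathrm{K}-1}\times D_{\mathrm{K}-1}}$, and define
$$C_j := \sum_{k=1}^{n_\mathrm{K}} B_j^{(k)} \otimes A_{\mathrm{K}k}.$$
Each $C_j$ then has dimension $D$ with entries in $\F\angle{Z_{\mathrm{K}+1}}$, as required.

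The main technical step is to verify that $f(C_1,\ldots,C_n)$, viewed as a block matrix with $d_\mathrm{K} \times d_\mathrm{K}$ blocks, coincides with $f(B_1,\ldots,B_n)$ after the substitution $z_{\mathrm{K},k} \mapsto A_{\mathrm{K}k}$. For a single monomial $x_{j_1}\cdots x_{j_t}$, repeated application of the mixed product property (Lemma \ref{mixed_product_property-lem}) gives
$$C_{j_1}\cdots C_{j_t} \;=\; \sum_{k_1,\ldots,k_t} \bigl(B^{(k_1)}_{j_1}\cdots B^{(k_t)}_{j_t}\bigr) \otimes \bigl(A_{\mathrm{K} k_1}\cdots A_{\mathrm{K} k_t}\bigr),$$
which is precisely the block-matrix form of $B_{j_1}\cdots B_{j_t}$ with each $z_{\mathrm{K},k}$ replaced by $A_{\mathrm{K}k}$. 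Extending by linearity yields the analogous identity for $f$ itself.

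Finally, under the standard Kronecker indexing, the $[1,D]$-entry of $f(C_1,\ldots,C_n)$ equals the $[1, d_\mathrm{K}]$-entry of the $[1, D_{\mathrm{K}-1}]$-block of $f(B_1,\ldots,B_n)\bigl|_{z_{\mathrm{K},k} \mapsto A_{\mathrm{K}k}}$, which by the induction hypothesis is the $[1, d_\mathrm{K}]$-entry of $f_{\mathrm{K}-1}(A_{\mathrm{K}1},\ldots,A_{\mathrm{K} n_\mathrm{K}})$, i.e.\ $f_\mathrm{K}$. I expect the main subtlety to lie in the careful bookkeeping of the nested block/Kronecker indexing — ensuring that the flat index $[1,D]$ correctly decomposes into the nested $[1,d_i]$ entries at every level — and in formally justifying that linearity in the next variable set is preserved at each inductive step, so that the decomposition $B_j = \sum_k B_j^{(k)} z_{\mathrm{K},k}$ is always well-defined and the Kronecker-product construction of $C_j$ goes through.
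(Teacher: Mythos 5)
Your proof is correct and follows essentially the same route as the paper: an induction on $\mathrm{K}$ with the same key construction $C_j = \sum_k B_j^{(k)} \otimes A_{\mathrm{K}k}$, verified via the mixed product property of the Kronecker product and the Kronecker block-indexing of the $[1,D]$ entry. The only cosmetic difference is that the paper separately works out $\mathrm{K}=2$ as an explicit base case and then phrases the inductive step as a reduction to it, whereas you fold that computation directly into the inductive step; the subtlety you flag — that each $B_j$ remains linear in the next variable block — is indeed the point one must carry through the induction, and the paper handles it in the same tacit way you do.
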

In our application of the above lemma, the final substitution matrices $A_{\mathrm{K}+1}$ only contain commutative variables as their entries. Consequently, the resulting polynomial  $f_\mathrm{K}$ is a commutative polynomial in these variables.

\section{Black-Box PIT for $\Sigma\Pi^*\Sigma\Pi^*\Sigma$ Circuits}
In this section, we show that if $f\in \F\angle X$ is a n.c. polynomial of degree $D$ computed by a depth-5 $+$-regular circuit of size $s$, then there exists an efficient randomized algorithm for identity testing the polynomial $f$. The main result of this section is the following theorem.
\begin{theorem}
\label{thm-depth-5}
 Let $f$ be a n.c. polynomial of degree $D$ over $X=\{x_1,\ldots,x_n\}$ computed by a $\Sigma\Pi^*\Sigma\Pi^*\Sigma$ circuit of size $s$. Then $f\not\equiv 0$  if and only if $f$ is not identically zero on the matrix algebra $\mathbb{M}_{s^6}(\mathbb{F})$.
\end{theorem}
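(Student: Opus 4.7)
The plan is to follow the three-step transformation program laid out in Section 1.1 and then invoke the matrix composition lemma to collapse the resulting sequence of substitutions into a single black-box matrix substitution. Starting from the decomposition
\[
f = \sum_{i\in[s]} \prod_{j\in[D_2]} Q_{ij},
\]
where each $Q_{ij}$ is a degree-$D_1$ polynomial computed by a $\Sigma\Pi^*\Sigma$ subcircuit of size at most $s$, I would first construct the Step-1 nondeterministic substitution automaton. Using the depth-3 small-index-set result of \cite{AJMR19} inside each $Q_{ij}$, this automaton replaces all linear forms outside a small index set by commutative tokens and encodes each $Q_{ij}$ as a $k$-ordered power-sum polynomial $\hat Q_{ij}$ in a fresh set of n.c.\ variables. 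Because the exponentially large boundary between $Q_{ij}$ and $Q_{i,j+1}$ cannot be detected deterministically, the automaton must guess those boundaries nondeterministically, which introduces spurious monomials $F_1$ alongside the structured part $\hat f_1$. Lemma \ref{gen-proj} and Claim \ref{fhat-spurious-non-zero} would then be used to conclude that $\hat f_1 + F_1 \not\equiv 0$, by exploiting a syntactic property that distinguishes spurious monomials from structured ones.

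Next I would apply Step 2 (product sparsification, Lemma \ref{poly-proj}) using a second small-dimensional substitution automaton that keeps only a small number of the $\hat Q_{ij}$ as n.c.\ and commutativises the rest, followed by Step 3 (commutative transformation, Lemma \ref{com-conversion}) using a third automaton that converts the surviving structured n.c.\ polynomials into a commutative polynomial in a few new variables. Both steps are applied in parallel to the structured part $\hat f_1$ and the spurious part $F_1$, yielding commutative polynomials $\hat f_3^{(c)}$ and $F_3^{(c)}$ over a shared variable set.

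The main obstacle is the possibility that $\hat f_3^{(c)} + F_3^{(c)} \equiv 0$: because the two parts share commutative variables, they can conspire to cancel even when $\hat f_3^{(c)} \not\equiv 0$. My strategy is to insert a coefficient-modification step \emph{between} Step 1 and Step 2. Using a small automaton that recognises the distinguishing feature of the spurious monomials produced in Step 1, I would rescale, via Lemma \ref{lem-modp-counting}, the coefficients of one offending group of n.c.\ monomials in $\hat f_1 + F_1$ so that their images under Steps 2 and 3 no longer cancel at some commutative monomial $m$. Lemma \ref{lem-modp-counting} guarantees that this rescaling preserves non-zeroness of $\hat f_1 + F_1$, and since the rescaling is itself implemented by a small substitution automaton, it fits cleanly into the composition framework.

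Finally, I would invoke Lemma \ref{composing-substitution-mat} to fuse the four automata (Step 1, coefficient modification, Step 2, Step 3) into one matrix substitution $(C_1,\dots,C_n)$ whose dimension is the product of the four individual dimensions. A careful accounting of these dimensions—each bounded by a small polynomial in $s$ coming from the $O(s)$-dimensional depth-3 projection used inside Step 1, the size-$s$ bounds on the outer product and outer sum in Steps 2 and 3, and the small counting automaton used in the coefficient modification—yields the claimed bound $s^6$. The $[1,\,s^6]$-th entry of $f(C_1,\dots,C_n)$ is then a non-zero commutative polynomial of degree at most $D$, so Lemma \ref{dlsz} over a sufficiently large field produces an explicit matrix substitution witnessing that $f$ is not a polynomial identity of $\mathbb{M}_{s^6}(\F)$.
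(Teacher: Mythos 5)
Your proposal follows the paper's architecture closely — the same Step~1 automaton producing $\hat f_1 + F_1$, the same product-sparsification and commutative-transformation steps, the same modulo-counting coefficient modification slotted between Steps~1 and 2, the same appeal to Lemma~\ref{composing-substitution-mat} to fuse everything into one matrix substitution, and the same dimension accounting that yields $O(s^6)$. The route is essentially the paper's route.

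There is, however, one concrete gap in the last step. You assert that the $[1,s^6]$-th entry of $f(C_1,\dots,C_n)$ is a non-zero commutative polynomial, and then apply Lemma~\ref{dlsz} to that single entry. But this does not follow from the composition lemma as stated, because two of the automata in your pipeline do not have a single accepting state: the Step-2 product-sparsification automaton has two accepting states (at dimensions $4s-3$ and $4s-2$), so its output is $\mathbf{B}[1,4s-3]+\mathbf{B}[1,4s-2]$, and likewise the modulo-counting automaton used in the coefficient-modification phase has two accepting states. After composing all four substitutions, the non-zero commutative polynomial $f^{(c)}$ is recovered only as a \emph{sum of four entries} of $f(C_1,\dots,C_n)$, not as the single $[1,s^6]$ entry, and a priori any one of those four entries could vanish even though their sum does not. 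The paper handles this explicitly (the ``multi-entry output'' device): it defines the output polynomial as the appropriate sum of entries, proves that sum is non-zero, and then observes that non-vanishing of the sum forces at least one entry of the evaluated matrix to be non-zero, which is all that is needed to conclude $f$ is not a polynomial identity of $\mathbb{M}_{s^6}(\F)$. Your proof needs this multi-entry extraction made explicit — either by reformulating the composition lemma to track sets of accepting columns, or by summing the relevant entries before invoking Lemma~\ref{dlsz} — since the single-entry claim as written is unjustified.
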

 The polynomial $f$ can be written as follows: $f=\sum\limits_{i \in [s]}\prod\limits_{j \in [D_2]} Q_{ij}$. Here, the degree of each $Q_{ij}$ is $D_1$, where $i\in[s],j\in[D_2]$, and it can be computed by a $\Sigma\Pi^*\Sigma$ circuit of size at most $s$. We establish the Theorem \ref{thm-depth-5} in three steps. First, we transform each polynomial $Q_{ij}$ into a more structured n.c. polynomial, as defined below (see Definition \ref{powersum-def}). This structured polynomial has the property that we can simply consider it as a commutative polynomial preserving non-zeroness (Claim \ref{ord-pow-sum-c-nc}). As noted above, this is not true for n.c. polynomials, in general.

\begin{definition}
\label{powersum-def}
 Let $s\in \mathbb{N}\cup \{0\}$. We call a n.c. polynomial $g$ over $\xi=\{\xi_1,\xi_2,\ldots,\xi_s\}$ as an $s$-ordered power-sum polynomial if it is of the form $$g=\sum_{i_1\geq 0,\ldots,i_s\geq 0} \alpha_{\overline{i}}.\xi_1^{i_1}\xi_2^{i_1}\ldots\xi_s^{i_s}, \text{ where $\alpha_{\overline{i}} \in \F$}.$$
\end{definition}
\begin{remark}
In the above definition, it is important to note that  $i_j \geq 0$ for each $j \in [s]$. However, in this paper, we will focus on a special case of $s$-ordered power-sum polynomials, where $i_j > 0$, for each $j \in [s-1]$ and $i_{s} \geq 0$. 
\end{remark}
 We have the following observation about the \emph{$s$-ordered power-sum polynomial} and a proof of this can be found in Appendix \ref{small_proofs}.
\begin{claim}
\label{ord-pow-sum-c-nc}
Suppose $g \in \F\angle \xi$ is an $s$-ordered power-sum polynomial. Let $g^{(c)}$ be the polynomial obtained by treating variables in $\xi$ as commutative. Then, $g\not \equiv 0$ if and only if $g^{(c)}\not \equiv 0$.
\end{claim}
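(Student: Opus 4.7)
The plan is to exhibit a coefficient-preserving bijection between the non-commutative monomials appearing in $g$ and the commutative monomials appearing in $g^{(c)}$; once this is in place the biconditional is immediate. Concretely, the non-commutative polynomial $g$ is a linear combination of words of the form $\xi_1^{i_1}\xi_2^{i_2}\cdots\xi_s^{i_s}$, with the variables appearing in the canonical index-increasing order. The first step would be to observe that two distinct exponent tuples $\overline{i}\neq \overline{j}$ yield two distinct such words: reading the word left to right, one recovers $i_1$ as the length of the initial $\xi_1$-run, then $i_2$ as the length of the following $\xi_2$-run, and so on. Hence these ordered words are pairwise distinct elements of the free monoid on $\xi$ and therefore linearly independent in $\F\angle\xi$, so the coefficient of $\xi_1^{i_1}\cdots\xi_s^{i_s}$ in $g$ is exactly $\alpha_{\overline{i}}$. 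In particular, $g\equiv 0$ iff every $\alpha_{\overline{i}}=0$.

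The second step would be to make the analogous observation on the commutative side. In the commutative ring $\F[\xi_1,\ldots,\xi_s]$ a monomial is determined up to equality by its exponent tuple $(i_1,\ldots,i_s)$, so distinct tuples give distinct commutative monomials and the coefficient of $\xi_1^{i_1}\cdots\xi_s^{i_s}$ in $g^{(c)}$ also equals $\alpha_{\overline{i}}$. Thus $g^{(c)}\equiv 0$ iff every $\alpha_{\overline{i}}=0$. Combining the two steps yields $g\equiv 0 \Longleftrightarrow g^{(c)}\equiv 0$.

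I do not anticipate a genuine obstacle here; the only point worth flagging is that the ordered structure of the power-sum polynomial is exactly what makes the word-to-tuple decoding unambiguous and prevents collisions among non-commutative monomials. Without the fixed ordering, distinct tuples could yield distinct non-commutative words that collapse on the commutative side (for instance, $\xi_1\xi_2$ and $\xi_2\xi_1$ collapse to a single commutative monomial while being distinct in $\F\angle\xi$), and the claim would fail; so the proof essentially records why imposing the canonical order $\xi_1,\xi_2,\ldots,\xi_s$ is precisely the condition needed to make the commutative/non-commutative coefficient vectors coincide.
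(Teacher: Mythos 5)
Your proposal is correct and takes essentially the same route as the paper: both argue that the fixed ordering of $\xi_1,\ldots,\xi_s$ means distinct exponent tuples give distinct non-commutative words and distinct commutative monomials, so commuting introduces no cancellations and the coefficient vectors coincide. You simply spell out the bijection more explicitly than the paper does.
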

\begin{remark}
\label{k-ord-coeff-comm-polys}
    We will later allow coefficients $\alpha_{\overline{i}}$ to be commutative polynomials. The proof of this generalized statement follows the same reasoning as the proof of Claim \ref{ord-pow-sum-c-nc}.
\end{remark}

Let us define the following concept, which will be relevant in subsequent sections:
\begin{definition}[$\xi$-Pattern/Ordered Power-Sum Monomial]
\label{e-pattern}
We refer to the string $\xi_1^{\ell_1}\xi_2^{\ell_2} \cdots \xi_k^{\ell_k}$ as a \emph{$\xi$-pattern}, regardless of the specific exponents of the $\xi$ variables. We also refer \emph{$\xi$-pattern} as an \emph{ordered power-sum monomial}.
\end{definition}

We now explain the three steps of our method to establish Theorem \ref{thm-depth-5}.
\subsection{Step 1: Transforming the Polynomial for Improved Structure}
\label{step1}

The initial step of our method involves transforming the polynomial to introduce more structure. During this process, we obtain a structured polynomial but also introduce some additional spurious monomials. This is one of the main differences between this work and \cite{AJMR19}. 
We show that these spurious monomials have a distinguishing property that can be used to differentiate them from the structured part.
We discuss the process of transforming each polynomial $Q_{ij}$ into an $s$-ordered power-sum polynomial. To do this, we introduce a \emph{new} set of commutative and n.c. variables as follows.\\
Let $Z=\{z_1,\ldots,z_n\}$ and let $Y=\{y_{ij} \mid i \in [n] \text{ and } j \in [s-1] \}$. The variables in $Y$ and $Z$ are commutative. Let $\xi=\{\xi_1,\xi_2,\cdots,\xi_s\}$ be the set of n.c. variables.

\begin{figure}
\begin{center}
\begin{tikzpicture}
\node(pseudo) at (-1,0){}; 
\node(0) at (0,0)[shape=circle,draw, minimum size=1cm]
     {$q_0$}; 
     \node(1) at (2,0)[shape=circle,draw, minimum size=1cm] {$q_1$};
     \node(2) at (4,0)[shape=circle,draw, minimum size=1cm] {$q_2$}; 
     \node(3) at (6,0)[shape=circle,draw, minimum size=1cm] {$q_{s-3}$};
     \node(4) at (9,0)[shape=circle,draw, minimum size=1cm, minimum size=1cm] {$q_{s-2}$};
     \node(5) at (12,0)[shape=circle,draw,double,fill=green!40, minimum size=1cm] {$q_{s-1}$}; 

     \path [->] (0) 
     edge node [above] {$y_{1i}\xi_1$} (1) (1) 
     edge node [above] {$y_{2i}\xi_2$} (2) (2)
     edge[dotted] node [above] {$\cdots$} (3) (3) 
     edge node [above] {$y_{s-2,i}\xi_{s-2}$} (4) (4)
     edge node [above] {$y_{s-1,i}\xi_{s-1}$} (5) (5)
     
  (4)      edge [bend left=30]  node [below]  {$y_{s-1,i}\xi_{s-1}$}     (0)
  (5)      edge [bend left=45] node [below]   {$z_{i}\xi_{s}$}  (0)
  (0)      edge [loop above]    node [above]  {$z_{i}\xi_1$}     ()
  (1)      edge [loop above]    node [above]  {$z_{i}\xi_2$}     ()
(2)      edge [loop above]    node [above]  {$z_{i}\xi_3$}     ()
  (3)      edge [loop above]    node [above]  {$z_{i}\xi_{s-2}$}   ()
  (4)      edge [loop above]    node [above]  {$z_{i}\xi_{s-1}$}   ()
  (5)       edge [loop above]    node [above]  {$z_{i}\xi_{s}$}   ()
  (pseudo) edge                                       (0);
\end{tikzpicture}
\caption{The transition diagram for the variable $x_i : 1\leq i\leq n$}\label{fig1}
\end{center}
\end{figure}
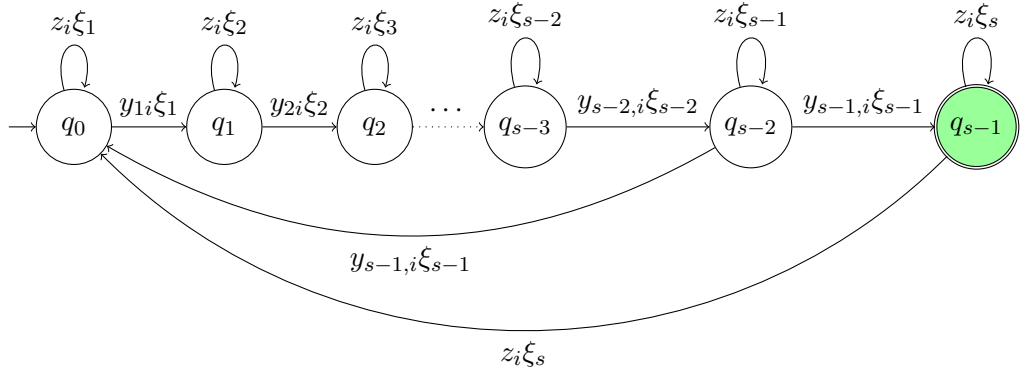
\subsubsection{Idea of the Substitution Automaton}
As each $Q_{ij}$ polynomial can be computed by a $\Sigma\Pi^*\Sigma$ circuit of size bounded by $s$, it is natural to try to use the PIT results that exist for $\Sigma\Pi^*\Sigma$ circuits. In \cite{AJMR19}, it was shown that there is a small substitution automaton that transforms the polynomial into a commutative polynomial while preserving non-zeroness. We can try to transform each $Q_{ij}$ polynomial into a commutative polynomial using the result of \cite{AJMR19}. This approach presents two issues.
\begin{enumerate}
    \item As explained in the proof outline, one would have to identify the \emph{boundary} between $Q_{ij}$ and $Q_{i,j+1}$ (i.e., the position at which $Q_{ij}$ ends and $Q_{i,j+1}$ begins), for any $i  \in [s], j \in [D_2-1]$. This task was straightforward in \cite{AJMR19} because each  $Q_{ij}$ is a linear form in their work.
    \item We need to ensure that the resulting commutative polynomial is non-zero because the same commutative variables appearing in different transformed $Q_{ij}$ polynomials get mixed (i.e., exponents of the same variable appearing at different transformed $Q_{ij}$ polynomials are added). 
    Additionally, after converting to the commutative polynomial, we must guarantee that the commutative counterpart of each $Q_{ij}$ polynomial is non-zero.  In the case of \cite{AJMR19}, since each $Q_{ij}$ is a linear form, we can treat them as commutative without the concern that $Q_{ij}$ may become a zero polynomial. However, as noted, if the degree of $Q_{ij}$ is greater than 1, we cannot guarantee non-zero values by merely considering them commutative. For example, the non-commutative polynomial  $xy-yx$ becomes zero if  $x$ and $y$ are allowed to commute.
    
\end{enumerate}
As the degree of $Q_{ij}$ could be exponential in $s$, it is not feasible to detect the \emph{boundary} by counting. So we \emph{guess} the boundary using a substitution automaton. The two states $q_{s-2}$ and $q_{s-1}$ of the substitution automaton given in Figure \ref{fig1} guess the boundary between $Q_{ij}$ and $Q_{i,j+1}$. We need two states for guessing the boundary. This is because 
in \cite{AJMR19}, it was proved that for polynomials computed by the 
$\Sigma\Pi^*\Sigma$ circuits, 
there is a small index set $I$ such that if we consider all linear forms appearing at positions indexed by $I$ as n.c. and rest as commutative then non-zeroness is preserved. Depending on whether the position of the last linear form is part of $I$ or not, the automaton is either in state $q_{s-2}$ or $q_{s-1}$, respectively. 

We consider the output of the substitution automaton given in Figure \ref{fig1} on the n.c. polynomial $f$.
 The $s \times s$ substitution matrix $\mathbf{M_{x_i}}$ for each variable $x_i$ is defined from Figure \ref{fig1}, as follows:


\begin{displaymath}
\mathbf{M_{x_i}} =
\left( \begin{array}{ccccccc}
z_{i}\xi_1 & y_{1i}\xi_1 & 0 & \ldots & 0 & 0 \\
0 & z_{i}\xi_2 & y_{2i}\xi_2 &  \ldots & 0 & 0 \\
\vdots & \vdots & \vdots & \ddots &\vdots &\vdots  \\
y_{s-1,i}\xi_{s-1} & 0 & 0 & \ldots & z_{i}\xi_{s-1} & y_{s-1,i}\xi_{s-1} \\
z_{i}\xi_{s} & 0 & 0 & \ldots & 0 & z_{i}\xi_{s} 
\end{array} \right) 
\end{displaymath}

It is helpful to consider $\mathbf{M_{x_i}}$ as the sum of matrices as follows. This view is useful when we compose matrices (see Lemma \ref{composing-substitution-mat}) from all three steps to obtain a single matrix later on.

\begin{displaymath}
\mathbf{M_{x_i}} =
\left( \begin{array}{ccccccc}
z_{i} & y_{1i} & 0 & \ldots & 0 & 0 \\
0 & 0 & 0 &  \ldots & 0 & 0 \\
\vdots & \vdots & \vdots & \ddots &\vdots &\vdots \\
0 & 0 & 0 &  \ldots & 0 & 0 \\
0 & 0 & 0 &  \ldots & 0 & 0
\end{array} \right) \cdot \xi_1 + 
 \ldots +
\left( \begin{array}{cccccc}
0 & 0 & 0 &  \ldots & 0 & 0\\
0 & 0 & 0 &  \ldots & 0 & 0 \\
\vdots & \vdots & \vdots & \ddots &\vdots &\vdots  \\
0 & 0 & 0 &  \ldots & 0 & 0 \\
z_{i} & 0 & 0 & \ldots & 0 & z_{i} 
\end{array} \right) \cdot \xi_s
\end{displaymath}
\subsubsection*{Output of the Automaton}
Let $\mathbf{M}=f(\mathbf{M_{x_1},M_{x_2},\ldots,M_{x_n}})$. Then we consider the \emph{output of the automaton} as:
\begin{equation}
\label{output-fig1}
    f'=\mathbf{M}[q_0,q_{s-1}]
\end{equation}
which is a polynomial in the variables $\xi\sqcup Y\sqcup Z$. 

Suppose a monomial $m$ is computed by a $+$-regular circuit $C$. The monomial $m$ has non-zero coefficient in $\prod_{j \in [D_2]} Q_{ij}$ for some $i \in [s]$. This monomial can be written as $m=m_1\cdot m_2\cdots m_{D_2}$, where each sub-monomial $m_j \in X^{D_1}$ has non-zero coefficient in $Q_{ij}$. 

Next, we consider the output of the substitution automaton $\mathcal{A}$ on $m$. 
The automaton knows how to replace/substitute any variable $x_j$ at any state $q$. For simplicity, Figure \ref{fig1} illustrates only the information for variable $x_i$.
Suppose  $m=x_{i_1}.x_{i_2}.x_{i_3}\ldots x_{i_D}$, the output of the
substitution automaton $\mathcal{A}$ on the monomial $m$ is given by $$\mathbf{M_{m}}[q_0,q_{s-1}]$$ where $\mathbf{M_{m}=M_{x_{i_1}}\cdot M_{x_{i_2}}\cdots M_{x_{i_D}}}$. 
 Each variable $x_i, i \in [n]$, is substituted by a degree two monomial over $\xi \sqcup Y \sqcup Z$ (one n.c. variable and one commutative variable). Consequently, the automaton transforms the monomial $m$ into a degree $2D$ polynomial over $\xi \sqcup Y \sqcup Z$. Importantly, the \emph{new} n.c. degree (i.e., sum of exponents of $\xi$ variables) equals to the original degree $D$.
 
\subsubsection*{Computation by the substitution automaton}
The automaton has exponentially many paths (with states allowed to repeat) from  $q_0$ to $q_{s-1}$, all labeled by the same monomial $m$. Each computation path transforms the monomial $m$, originally over the variables $X$, into a new monomial over $\xi \sqcup Y \sqcup Z$.

For any path $\rho$  from $q_0$ to $q_{s-1}$, we denote the transformed monomial as  $m_{\rho}$. 
The polynomial computed by $\mathbf{M_{m}}[q_0,q_{s-1}]$ is given by $$ \sum\limits_{\rho:q_0 \overset{m}\leadsto q_{s-1} 
} m_{\rho},$$ which is a polynomial in $\F[Y\sqcup Z]\angle \xi $.
Recall that n.c. polynomials are $\F$-linear combinations of words/strings (called monomials). For a n.c. monomial $m$, we can identify the variable at position $e$ in $m$, where $1 \leq e\leq |m|$.

Recall that $m$ can be written as $m=m_1\cdot m_2\cdots m_{D_2}$.
Each computation path $\rho$ substitutes each n.c. variable in $m$ according to the automaton's transition rules, resulting in a monomial $m_{\rho}$ over new variables  $\xi\sqcup Y\sqcup Z$. We group all commutative variables appearing in $m_{\rho}$ and denote it by $c_m$, which is a commutative monomial over $ Y\sqcup Z$. The resulting monomial $m_{\rho}$ has the following form and the proof can be found in Appendix \ref{step1-proofs}

\begin{proposition}
\label{m-rho-form}
Let $\rho$ be a path from $q_0$ to $q_{s-1}$  labeled by the monomial $m$.
The transformed monomial $m_{\rho}$ can be expressed in the form: $m_{\rho}=c_{m}\cdot m'_1\cdot m'_2\cdots m'_{N}$, where $N\geq 1$, $c_m$ is a monomial over $ Y\sqcup Z$, and each $m'_\ell$ (for $\ell \in [N]$) is given by  $m'_\ell=\xi^{\ell_1}_1.\xi^{\ell_2}_2\cdots\xi^{\ell_s}_s$, where $\ell_k>0$ for $k\in[s-1]$, and $\ell_s\geq 0$.
\end{proposition}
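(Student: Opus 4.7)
The plan is to read the form of $m_\rho$ off directly from the state sequence of the path $\rho$. My first observation would be that every transition of the substitution automaton in Figure~\ref{fig1} emits exactly one commutative variable (from $Y \sqcup Z$) multiplied by one $\xi$-variable, and that the index of the $\xi$-factor is determined solely by the source state: every outgoing edge from $q_j$ --- whether the self-loop, the forward edge to $q_{j+1}$, or (when $j \in \{s-2, s-1\}$) the backward reset to $q_0$ --- carries the factor $\xi_{j+1}$. Since the $Y \sqcup Z$ variables commute with everything (in particular with the $\xi$'s), all of the commutative factors appearing in $m_\rho$ may be collected to the left into a single monomial $c_m \in \F[Y \sqcup Z]$, leaving a pure non-commutative $\xi$-word on the right.

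Next I would decompose $\rho$ at its visits to $q_0$. The key structural fact is that $q_0$ has no incoming edges apart from its own self-loop and the two backward edges from $q_{s-2}$ and from $q_{s-1}$. Hence the positions at which $\rho$ occupies $q_0$ (including the initial one, but not counting the final state $q_{s-1}$) partition its transitions into $N \geq 1$ consecutive segments, each beginning at $q_0$. The first $N-1$ segments must terminate with a backward reset edge from $q_{s-2}$ or $q_{s-1}$, while the last ends at $q_{s-1}$ because $\rho$ does.

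Finally, within a single segment the only transitions available before its last one are self-loops and forward edges, so the state sequence is non-decreasing. Since each intermediate segment ends with a reset (available only at $q_{s-2}$ or $q_{s-1}$) and the last segment ends at $q_{s-1}$, every segment reaches at least $q_{s-2}$. Writing the segment's state sequence as $q_0^{a_1} q_1^{a_2} \cdots q_{k-1}^{a_k}$ with $k \in \{s-1, s\}$ and each $a_i \geq 1$ (at least one transition is needed at $q_{j-1}$ in order to reach $q_j$), the emission rule from my first observation outputs $\xi_1^{a_1} \xi_2^{a_2} \cdots \xi_k^{a_k}$. Padding with a zero exponent on $\xi_s$ when $k = s-1$ puts this into the desired pattern $\xi_1^{\ell_1} \cdots \xi_s^{\ell_s}$ with $\ell_1, \ldots, \ell_{s-1} > 0$ and $\ell_s \geq 0$. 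Concatenating the $N \geq 1$ segments yields the claimed factorization $m_\rho = c_m \cdot m'_1 \cdots m'_N$. The argument is essentially careful bookkeeping on the automaton's transitions, so I do not foresee a significant technical obstacle; the only mild subtlety is tracking the two distinct ways $\ell_s = 0$ can arise --- an intermediate segment resetting directly from $q_{s-2}$ without ever visiting $q_{s-1}$, or the final segment arriving at $q_{s-1}$ on its very last transition with no self-loop there.
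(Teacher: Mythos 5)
Your proof is correct and follows essentially the same approach as the paper: decompose $\rho$ at its returns to $q_0$, note that the $\xi$-index of each emitted factor is determined by the source state, and argue that each segment visits $q_0, q_1, \ldots$ in non-decreasing order until a reset from $q_{s-2}$ or $q_{s-1}$ (or the final arrival at $q_{s-1}$). Your version is in fact somewhat more careful than the paper's, which is terse and even slightly imprecise about where $\xi_s$ can appear; the one spot where your phrasing could mislead is "decompose $\rho$ at its visits to $q_0$" — since $q_0$ has a self-loop, you should cut only at arrivals via the backward reset edges (plus the initial position), not at every occupancy of $q_0$, which is what your later segment-structure analysis implicitly assumes anyway.
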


For $i\neq j$, the exponents of the $\xi$ variables in the sub-monomials $m'_i$ and $m'_j$ can vary. In particular, it is generally possible that
$(i_1,i_2,\ldots,i_s)\neq (j_1,j_2,\ldots,j_s)$.

\subsubsection*{Types of sub-monomials: Two cases}
 It is important to note that the number of new sub-monomials $m'_i$, denoted as $N$,  may not be equal to $D_2$. This is because $N$ depends on how many times the path $\rho$ returns to the initial state $q_0$ (see Figure \ref{fig1}). Also, the sum of exponents of $\xi$ variables in each sub-monomial $m'_i$ in $m_\rho$ can vary.
 This leads us to consider two possible cases for each computation path $\rho$ that starts at $q_0$ and ends at $q_{s-1}$:  (recall $m=m_1\cdot m_2\cdots m_{D_2}$).

\begin{itemize} \label{two_cases}
 \item \label{case1} {\bf Case 1:} For each $j<D_2$, the boundary between $m_j$ and $m_{j+1}$ in $m$ is respected by the path $\rho$.  
 In this  computation path $\rho$, the state of $\mathcal{A}$ is at $q_0$ precisely when it begins processing each sub-monomial $m_j \in X^{D_1}$ for $j \in [D_2]$. This means that when $\mathcal{A}$ reads the last variable of the sub-monomial $m_{j-1}$ (for $j>1$), it transitions back to state $q_0$. As a result,
 $\mathcal{A}$ is in state $q_0$ exactly when it reads the first variable  of the sub-monomial $m_j$. This holds true for all sub-monomials $m_j$ where $j\in[N]$. 
 By Proposition \ref{m-rho-form}, the transformed monomial can be expressed as: $m_{\rho}=c_{m}\cdot m'_1\cdot m'_2\cdots m'_{N}$ where $c_m$ is a monomial over $ Y\sqcup Z$ and each $m'_\ell$ is of form $m'_\ell=\xi^{\ell_1}_1\cdots\xi^{\ell_s}_s$.
 In this case, we observe that $N=D_2$ since there are exactly $D_2$ sub-monomials in $m$.  \underline{\emph{or}}

 \item {\bf Case 2:} For some $j<D_2$, the boundary between $m_j$ and $m_{j+1}$ in $m$ is not respected by the path $\rho$. In this case, there exists a sub-monomial $m_j$, where $j \in [D_2]$, such that either (1) the computation path $\rho$ visits the state $q_0$ while processing the variable located at position $c$, where $1 < c \leq D_1$. This means $\rho$ returns to $q_0$ in the middle of processing $m_j$.
 \underline{\emph or} (2) the path $\rho$ is in a state $q_j, j\neq 0$ (i.e., other than the initial state $q_0$) while processing the variable that appears  at the first position of the sub-monomial $m_j$. 
By Proposition \ref{m-rho-form}, the transformed monomial can be expressed as: $m_{\rho}=c_{m}\cdot m'_1\cdot m'_2\cdots m'_{N}$ where $c_m$ is a monomial over $ Y\sqcup Z$ and each $m'_\ell$ is of form $m'_\ell=\xi^{\ell_1}_1\cdots\xi^{\ell_s}_s$.
 In this case, we cannot definitively say whether $N$ is equal to $D_2$ or not. \\
\end{itemize}
\begin{remark}
    Any path $\rho$ from $q_0$ to $q_{s-1}$ labeled by a monomial $m \in X^{D}$ will satisfy either Case 1 or Case 2, but not both.\\
    \end{remark}
In Case 1, we can make the following important observation about the obtained monomial $m_{\rho}$ and the proof can be found in Appendix \ref{step1-proofs}. Recall that $D_1$ is the degree of $Q_{ij}$ polynomial for all $i \in [s]$ and $j \in [D_2]$.
\begin{claim}
\label{obs-good-mon}
Let $\rho$ be a path from $q_0$ to $q_{s-1}$  labeled by the monomial $m$ that satisfies Case 1.
 In this case, for each sub-monomial $m'_\ell$, where $\ell\in [D_2]$,  of the monomial $m_\rho$, the sum of the exponents of its n.c. variables  is $D_1$. That is, $\sum_{j \in [s]} \ell_j=D_1$.
\end{claim}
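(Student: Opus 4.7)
The plan is to exploit the key structural observation that, in the automaton of Figure \ref{fig1}, each transition on a single input variable of $X$ produces exactly one $\xi$-variable factor (together with one commutative factor from $Y \sqcup Z$). Inspecting every entry of $\mathbf{M}_{x_i}$ confirms this: every nonzero entry has the form $z_i\xi_k$ or $y_{ji}\xi_j$. Consequently, if the path $\rho$ reads the monomial $m \in X^D$, then the transformed monomial $m_\rho$ has n.c.\ degree (i.e., total $\xi$-exponent sum) equal to exactly $D$. This is the basic accounting lemma on which everything hangs.

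Next, I would pin down what delineates a ``block'' $m'_\ell$ in the decomposition of Proposition \ref{m-rho-form}. By the shape $m'_\ell = \xi_1^{\ell_1}\xi_2^{\ell_2}\cdots\xi_s^{\ell_s}$ with $\ell_k > 0$ for $k \in [s-1]$, each block requires the path to pass through each of the states $q_0, q_1, \ldots, q_{s-2}$ at least once in order. Reading Figure \ref{fig1}, the only transitions that return the automaton to $q_0$ are the two edges out of $q_{s-2}$ and $q_{s-1}$ (labeled $y_{s-1,i}\xi_{s-1}$ and $z_i\xi_s$, respectively); all other edges either move the state forward from $q_k$ to $q_{k+1}$ or loop at a state. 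Thus the block boundaries in $m'_1 \cdots m'_N$ correspond precisely to the visits of $\rho$ to state $q_0$ (with a final boundary at $q_{s-1}$).

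Now I would invoke the Case 1 hypothesis. By definition of Case 1, $\rho$ is in state $q_0$ exactly when it begins processing each $m_j \in X^{D_1}$ for $j \in [D_2]$, and at no other intermediate position. Combined with the previous paragraph, this forces $N = D_2$ and identifies each block $m'_\ell$ with the segment of $\rho$ that reads exactly the sub-monomial $m_\ell$. Since $|m_\ell| = D_1$ and each variable read contributes exactly one $\xi$-factor (by the first paragraph), the total $\xi$-exponent sum in $m'_\ell$ is $\sum_{j \in [s]} \ell_j = D_1$, which is the claim.

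The only mildly delicate step is the second one: making the identification ``block boundaries $\Leftrightarrow$ visits to $q_0$'' rigorous, because the self-loops at intermediate states $q_k$ (labeled $z_i\xi_{k+1}$) can inflate individual exponents $\ell_k$ within a block. I would argue this is not an obstacle: a self-loop at $q_k$ does not return the path to $q_0$, so it stays within the same block and merely increases $\ell_k$; the block still ends only when the path transitions back to $q_0$ via $q_{s-2}$ or $q_{s-1}$. No case in Proposition \ref{m-rho-form} is violated, and the counting argument above goes through unchanged.
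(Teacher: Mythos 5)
Your proof is correct and takes essentially the same approach as the paper's: the paper's one-sentence argument boils down to exactly your first-paragraph observation that every transition of the automaton in Figure~\ref{fig1} produces precisely one $\xi$-factor, combined with the Case~1 hypothesis that block boundaries align with the $m_j$-boundaries. You merely spell out the ``block boundary $\Leftrightarrow$ return to $q_0$'' correspondence more carefully; one small slip is in your last paragraph, where the self-loop at $q_k$ (labeled $z_i\xi_{k+1}$) inflates $\ell_{k+1}$ rather than $\ell_k$, but this is an inessential indexing typo and does not affect the argument.
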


 For all paths $\rho$ that satisfy Case 2, this is not true. We note this down as the following claim and the proof can be found in Appendix \ref{step1-proofs}.
 \begin{claim}
 \label{obs-bad-mon} 
 Let $\rho$ be apath from  $q_0$ to $q_{s-1}$ labeled by the monomial $m$ that satisfies Case 2. In this case, 
there exists a sub-monomial $m'_\ell$, where $\ell\in [N]$, in the obtained monomial $m_\rho$ such that the sum of the exponents of its n.c. variables  is \emph{not} equal to $D_1$. That is, $\sum_{j \in [s]} \ell_j\neq D_1$.
 \end{claim}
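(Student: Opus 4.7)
The plan is to establish a natural bijection between the sub-monomials $m'_1, \ldots, m'_N$ of $m_\rho$ and the ``blocks'' of the path $\rho$, where each block is a maximal sub-path of $\rho$ that starts in the initial state $q_0$ and ends either the next time $\rho$ returns to $q_0$ or (for the last block) when $\rho$ reaches the accepting state $q_{s-1}$. By inspection of Figure \ref{fig1}, every transition of the automaton on reading an $X$-variable is labelled by a commutative factor together with exactly one $\xi_k$, so a single letter of $X$ consumed by $\rho$ contributes exactly one $\xi$-letter to the corresponding $m'_\ell$. Consequently the n.c.\ degree $\sum_{j\in[s]} \ell_j$ of $m'_\ell$ equals the length of block $\ell$, i.e.\ the number of $X$-letters consumed within that block, and the sum of all block-lengths is $|m| = D = D_1 D_2$.

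I would then argue by contrapositive. Suppose that $m'_\ell$ satisfies $\sum_{j\in[s]} \ell_j = D_1$ for every $\ell \in [D_2]$ listed in the claim. Because each block contains at least one letter (in fact at least $s-1$, by the structure of the automaton), the only way the block-lengths can sum to $D_1 D_2$ while the first $D_2$ of them each equal $D_1$ is to have $N = D_2$ and every block of length exactly $D_1$. But then $\rho$ returns to $q_0$ precisely at positions $D_1, 2 D_1, \ldots, (D_2-1)D_1$, which are exactly the boundaries between the factors $m_1, m_2, \ldots, m_{D_2}$; thus $\rho$ is in state $q_0$ just before reading the first letter of each $m_j$ and never visits $q_0$ strictly inside any $m_j$. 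This is precisely the description of Case 1, contradicting the hypothesis that $\rho$ satisfies Case 2.

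The main technical point to verify is that the decomposition of $m_\rho$ in Proposition \ref{m-rho-form} genuinely coincides with the block decomposition of $\rho$ described above; here the specific topology of the automaton matters — the only edges leading back to $q_0$ are those from $q_{s-2}$ and from $q_{s-1}$, and these occur exactly at the end of a $\xi$-pattern of the form $\xi_1^{\ell_1}\cdots\xi_s^{\ell_s}$ produced inside a block. Once this correspondence is in place, the counting argument uniformly handles both sub-cases of Case 2 — whether $\rho$ returns to $q_0$ in the interior of some $m_j$ or $\rho$ is in a non-initial state when starting to read the first letter of some $m_j$ — since in either case the block-lengths cannot all equal $D_1$, yielding the desired $m'_\ell$ with $\sum_{j\in[s]} \ell_j \neq D_1$.
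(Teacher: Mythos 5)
Your proof is correct, and it takes a genuinely different route from the paper. The paper argues \emph{directly}, splitting Case~2 into its two sub-cases: (a) $\rho$ returns to $q_0$ strictly inside some $m_j$, where the length of the prefix processed at that moment is not a multiple of $D_1$, forcing some block to have wrong n.c.\ degree; and (b) $\rho$ is in a state $q_r$, $r\neq 0$, at the start of some $m_j$, where the block straddling $m_j$ is shown to have n.c.\ degree strictly greater than $D_1$. Your proof avoids this two-case split entirely by taking the \emph{contrapositive} and counting: if every block has n.c.\ degree exactly $D_1$, then, since block lengths sum to $D=D_1 D_2$ (one $\xi$-letter per $X$-letter consumed), there must be exactly $D_2$ blocks with boundaries at positions $D_1, 2D_1, \ldots$, which is the defining description of Case~1 — contradiction. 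Both arguments rest on the same load-bearing observation (each transition of the automaton from Figure~\ref{fig1} consumes one $X$-letter and emits exactly one $\xi$-letter, and the block boundaries are the returns to $q_0$), but yours buys a cleaner, uniform treatment at the cost of a reformulation step. One small caveat: your phrasing ``the first $D_2$ of them each equal $D_1$'' implicitly presumes $N\geq D_2$; it would be more careful to say that the hypothesis is that \emph{all $N$} blocks have length $D_1$ (the claim's ``$\ell\in[D_2]$'' should really be read as ``$\ell\in[N]$''), from which $N D_1 = D_1 D_2$ forces $N = D_2$ — but this is cosmetic and the argument goes through unchanged.
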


We crucially utilize Claims \ref{obs-good-mon} and \ref{obs-bad-mon} later to ensure the non-zeroness of the transformed commutative polynomial.

\subsubsection*{The \emph{structured} part and the \emph{spurious} part}
For a monomial $m=m_1\cdot m_2\cdots m_{D_2}$, we define the polynomial $\hat{f}_m$ as the sum of all monomials that are obtained from computation paths $\rho$ labeled by $m$ from Case 1 above. Similarly, the polynomial $F_m$ is defined as the sum of all monomials obtained from computation paths $\rho$  labeled by $m$ from Case 2 above.  
We consider the output of the substitution automaton $\mathcal{A}$ on the given n.c. polynomial $f\in \F\angle X$. 

The output of the automaton is the sum of all monomials produced by computation paths $\rho$ starting from $q_0$ and leading to $q_{s-1}$, with these paths labeled by monomials generated by a depth-5 $+$-regular circuit.

Let $Mon(f)$ be the set of all monomials computed/generated by the given depth-5 $+$-regular circuit computing $f$. That is, suppose $m$ is computed by $\prod_{j\in[D_2]} Q_{i,j}$ for some $i\in[s]$, with coefficient $\alpha_{m,i}$ then $\alpha_{m,i}\cdot m\in Mon(f)$.
Let 
\begin{equation}
    \hat{f}=\sum\limits_{\alpha_{m,i}\cdot m \in Mon(f)}\hat{f}_{\alpha_{m,i}\cdot m}
\end{equation}

\begin{equation}
\label{F-poly-defn}
    F=\sum\limits_{\alpha_{m,i}\cdot m \in Mon(f)}F_{\alpha_{m,i}\cdot m}.
\end{equation}
 We refer to $F$ as the sum of spurious monomials obtained from the automaton, which can be viewed as noise resulting from our method.

We assume that the linear forms in the $Q_{i,j}$ polynomials are numbered from 1 to $D_1$.
For $I \subseteq [D_1]$ with size at most $s-1$, define $Q_{i,j,I}$ as the polynomial obtained from $Q_{i,j}$ by treating linear forms indexed by $I$ as non-commuting and the rest of the linear forms as commuting. We also substitute the n.c variables that appear in linear forms indexed by \( I \) with double-indexed commutative variables \( Y \), as shown in the substitution automaton (see Figure \ref{fig1}).

We have the following proposition regarding the polynomial $\hat{f}$. The proof can be found in Appendix \ref{step1-proofs}.
 \begin{claim}
 \label{f-hat-form}
The polynomial $\hat{f}$ can be expressed as 
 
 $$\hat{f} =\sum_{i\in [s]}\prod_{j\in[D_2]} \sum_{\substack{I \subseteq [D_1], |I|=s-1} }Q_{i,j,I}\times \xi_I$$
 where 
$\xi_I=\xi^{\ell_1}_1.\xi^{\ell_2-\ell_1}_2\cdots\xi^{D-\ell_{s-1}}_{s}$ for $I=\{\ell_1,\ell_2,\cdots,\ell_{s-1}\}$ such that $\ell_1<\ell_2<\cdots<\ell_{s-1}$.\\
 \end{claim}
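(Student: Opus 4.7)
The plan is to verify Claim \ref{f-hat-form} by unpacking $\hat{f}$ at the level of individual circuit monomials and matching the result against the claimed product formula. Recall that $\hat{f} = \sum_{\alpha_{m,i} m \in \mathrm{Mon}(f)} \hat{f}_{\alpha_{m,i} m}$, where $\hat{f}_{\alpha_{m,i} m}$ collects $\alpha_{m,i}\,m_\rho$ over all Case~1 paths $\rho$ of $\mathcal{A}$ from $q_0$ to $q_{s-1}$ labeled by $m$. The key structural fact is that a Case~1 path is, by definition, one that sits at $q_0$ exactly at the start of every $Q_{i,j}$-boundary in $m = m_1 \cdot m_2 \cdots m_{D_2}$.

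I would fix $m$ with coefficient $\alpha_{m,i}$ produced by $\prod_{j \in [D_2]} Q_{i,j}$, and observe that every Case~1 path $\rho$ splits uniquely into sub-paths $\rho_1, \ldots, \rho_{D_2}$, where $\rho_j$ reads $m_j$, starts at $q_0$, and ends at $q_0$ (for $j < D_2$) or at $q_{s-1}$ (for $j = D_2$). Inspecting Figure~\ref{fig1}, each $\rho_j$ is determined by the $s-1$ positions in $[D_1]$ at which it takes a non-self-loop transition; call this set $I_j = \{\ell_1 < \cdots < \ell_{s-1}\}$. Writing $i^{(j)}_\ell$ for the variable index at position $\ell$ of $m_j$, the automaton labels combine to give
$$m_{\rho_j} = \Bigl(\prod_{r=1}^{s-1} y_{r,\, i^{(j)}_{\ell_r}} \prod_{\ell \in [D_1]\setminus I_j} z_{i^{(j)}_\ell}\Bigr) \cdot \xi_{I_j}.$$
This identity must be checked uniformly over Case~A ($D_1 \notin I_j$, where the last edge is the $z_i\xi_s$ return from $q_{s-1}$ to $q_0$) and Case~B ($D_1 \in I_j$, where the last edge is the $y_{s-1,i}\xi_{s-1}$ direct jump from $q_{s-2}$ to $q_0$), and analogously for the two sub-cases of the terminating $\rho_{D_2}$; the $\xi_{I_j}$ formula from the claim collapses both realisations consistently into a single $\xi$-pattern.

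Since all $y, z$ variables commute past everything, the commutative parts can be pulled to the front, giving $m_\rho = c_{m,\rho}\cdot\xi_{I_1}\xi_{I_2}\cdots\xi_{I_{D_2}}$. Summing first over the $I_j$ independently and then over all circuit monomials $m$ of $\prod_j Q_{i,j}$ with coefficients $\alpha_{m,i}$, the inner $\alpha$-weighted sum at each index $j$ assembles exactly into $Q_{i,j,I_j}$: substituting $x_i \mapsto y_{r,i}$ at the $r$-th element of $I_j$ captures the noncommutative treatment of those linear forms (different $y$-indices per position in $I_j$ prevent any accidental collapse), while substituting $x_i \mapsto z_i$ elsewhere captures the commutative treatment via the common $z$-variables. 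Distributing the Cauchy product over $j$ and summing over $i \in [s]$ yields the claimed identity. The main obstacle is the bookkeeping in the second paragraph: the Case~A/B dichotomy deposits its terminal commutative variable via different edges of $\mathcal{A}$ and produces a different $\xi$-distribution (exponent of $\xi_s$ being positive vs.\ zero), but both must collapse into the single $Q_{i,j,I_j} \cdot \xi_{I_j}$ summand. Once this uniformity is verified, the remaining interchange of sums and products is routine.
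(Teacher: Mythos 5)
Your proof is correct and follows essentially the same route as the paper's own argument: fix a circuit monomial $m$ with its producing index $i$, decompose each Case~1 path $\rho$ into sub-paths $\rho_1, \ldots, \rho_{D_2}$ bounded by visits to $q_0$, read off from Figure~\ref{fig1} that $\rho_j$ is indexed by the set $I_j$ of positions where a $y$-labeled (non-self-loop) edge is taken, pull the commuting $Y\sqcup Z$ letters to the front, and then use that the $I_j$'s range independently over all size-$(s-1)$ subsets of $[D_1]$ to factor the sum over $\rho$ into a product over $j$, which after summing over $m$ reassembles into $\prod_j \hat{Q}_{ij}$. The one spot where you put noticeably more weight than the paper does — the Case~A/B dichotomy at the terminal edge of each $\rho_j$ (the $z_i\xi_s$ return from $q_{s-1}$ versus the $y_{s-1,i}\xi_{s-1}$ shortcut from $q_{s-2}$) — is handled correctly: the two realisations correspond to $D_1 \notin I_j$ versus $D_1 \in I_j$, and the single formula $\xi_{I_j}=\xi_1^{\ell_1}\cdots\xi_s^{D_1-\ell_{s-1}}$ (with exponent $0$ allowed on $\xi_s$) covers both, so your worry that the bookkeeping might not be uniform dissolves once you note the two cases are indexed by disjoint choices of $I_j$ rather than being two readings of the same $I_j$. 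One small notational point worth being explicit about: the $Q_{i,j,I}$ that actually appears in $\hat f \in \F[Y\sqcup Z]\angle\xi$ is the commuting-variable version (positions in $I$ rewritten over $Y$, positions outside $I$ rewritten over $Z$), i.e.\ what the paper elsewhere calls $Q_{i,j,I}^{(c)}$; your description of the substitution $x_i \mapsto y_{r,i}$ at the $r$-th element of $I_j$ and $x_i \mapsto z_i$ elsewhere is exactly this, so your use is consistent even though the claim's notation elides the ${}^{(c)}$.
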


Let 
\begin{equation}
\label{eq:q-ij-defn}
    \hat{Q}_{ij}=\sum\limits_{I \subseteq [D_1], |I|=s-1}Q_{i,j,I}\times \xi_I.
\end{equation}

Then we can express $\hat{f}$ as:  

\begin{equation}
\label{f-hat-defn}
   \hat{f}= \sum_{i\in [s]}\left(\prod_{j\in[D_2]}\hat{Q}_{ij}\right). 
\end{equation}

The output of the substitution automaton $\mathcal{A}$ on the polynomial $f$ 
is given by: $$f'= \hat{f} +F.$$ This is stated in the following claim. The proof can be found in Appendix \ref{step1-proofs}.
\begin{claim}
\label{f-hat-spurious}
 Let $f$ be a homogeneous n.c. polynomial computed by a $\Sigma\Pi^*\Sigma\Pi^*\Sigma$  circuit of size $s$. 
 Then, the output  $f'\in\F[Y \sqcup Z]\angle \xi$ of the substitution automaton $\mathcal{A}$ on the polynomial $f$ can expressed as $f'= \hat{f} +F,$ where $\hat{f}$
 be the structured and spurious part as defined in Equation \ref{f-hat-defn} and 
 $F$  be the spurious part as defined in Equation \ref{F-poly-defn}.

 \end{claim}

\subsubsection{ Non-zeroness of $\hat{f}$}
We establish that  $f'$ is non-zero by first proving that $\hat{f}$ is not zero. 
This is shown in Lemma \ref{gen-proj}, which builds on the result of PIT for $\Sigma\Pi^*\Sigma$ circuits (see Section 6.2 in \cite{AJMR19}). We briefly discuss this result.

 Let $Z=\{z_1,\ldots,z_n\}$ be the set of \emph{new} commuting variables.
Let $g\in \F\angle X$ be a polynomial of degree $D$ computed by a $\Sigma\Pi^*\Sigma$ circuit of size $s$.
Then $g$ can be expressed as $g=\sum_{i\in[s]}\prod_{j\in[D]}L_{ij}$, where $L_{ij}$ are homogeneous linear forms. 
Let $P_i=\prod_{j\in[D]}L_{ij}$, $i\in [s]$. We have $g=\sum_{i\in[s]}P_i$. For $I \subseteq [D]$ with size at most $s-1$, define $P_{i,I}$ as the polynomial obtained from $P_i$ by treating linear forms indexed by $I$ as noncommuting and the rest of the linear forms as commuting.
We replace each n.c.  variable $x_i$ appearing in $[D]\setminus I$ by a \emph{new} commuting variable $z_i$.

The number of n.c. linear forms appearing in $P_{i,I} \in \F[Z]\angle X$ is bounded by $|I|<s$. 
This is because the linear forms that appear with indices other than those in \( I \) are treated as commutative. Consequently, the number of non-commutative linear forms in \( P_{i,I} \in \mathbb{F}[Z] \langle X \rangle \) is bounded by \( |I| < s \).
We refer to this as the n.c. degree of the polynomial $P_{i,I}$.
Since this degree is small, $P_{i,I}$ can be converted into a commutative polynomial while preserving its non-zeroness.  Let $P^{(c)}_{i,I}$ denote the commutative polynomial obtained from $P_{i,I}$ and define $g_I=\sum_{i\in[s]}P^{(c)}_{i,I}$.
To keep all guesses of  the set $I$ distinct, additional commutative variables $\xi=\{\xi_1,\xi_2,\cdots,\xi_{k+1}\}$ are introduced in \cite{AJMR19}.
The transformed commutative polynomial obtained in \cite{AJMR19} is given by:  
\begin{equation}
\label{g-hat}
    g^{\star}=\sum\limits_{I \subseteq [D_1], |I|=k}g_I\times \xi'_I
\end{equation}

where 
$\xi'_I=\xi^{\ell_1-1}_1.\xi^{\ell_2-\ell_1-1}_2\cdots\xi^{D-\ell_k}_{k+1}$ with $I=\{\ell_1,\ell_2,\cdots,\ell_k\}$. The degree of the monomial $\xi'_I$ is $D-|I|$.

By Lemma 6.2 in \cite{AJMR19}, there exists a set of indices $I \subseteq [D]$, $|I| < s$, such that $g_I\neq 0$ implying $g^{\star}\neq 0$. Replacing $\xi'_I$ with $\xi_I=\xi^{\ell_1}_1.\xi^{\ell_2-\ell_1}_2\cdots\xi^{D_1-\ell_k}_{k+1}$
in $g^{\star}$ retains the non-zeroness of $g^{\star}$ while the degree of $\xi_I$ becomes $D$.

\begin{remark}
\begin{enumerate}
    \item     Without loss of generality, we assume that the automaton nondeterministically guesses exactly $(s-1)$ indices, i.e., $|I|=s-1$ and the rest as commutative. If $|I|<s-1$,  adding more indices still preserves non-zeroness.  
    \item If the degree of the polynomial $g$ is smaller than $(s-1)$, we will handle this small-degree case separately (See \cref{small-degree-case}). For now, we assume $D_1\geq s-1$ in Lemma \ref{gen-proj}.\\
\end{enumerate}

\end{remark}
We have the following lemma that shows $\hat{f}\neq 0$. The proof can be found in Appendix \ref{appendix:proof of Q_ij sparsification}.

\begin{lemma} \label{gen-proj}
Let $f=\sum_{i \in [s]}\prod_{j \in [D_2]} Q_{ij}$ be a n.c. polynomial over $X=\{x_1,\cdots,x_n\}$, computed by a  $\Sigma\Pi^*\Sigma\Pi^*\Sigma$ circuit of size $s$. 
Let $D_1\geq s-1$ denote the degree of the polynomial $Q_{ij}$, $i\in[s],j\in[D_2]$.
Let $\hat{f}\in\F[Y \sqcup Z]\angle \xi$ be defined as 
$\hat{f} =\sum_{i\in [s]} \prod_{j\in[D_2]}\hat{Q}_{ij},$ where 
$\hat{Q}_{ij}$ are as defined in Equation \ref{eq:q-ij-defn}. Then, if  $f\neq 0$ then $\hat{f}\neq 0$. 
\end{lemma}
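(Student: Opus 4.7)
\emph{Step 1 (Reduction via $\xi$-pattern decoding).}
My first move is to expand $\hat f$ and exploit that different choices of index-set tuples produce distinct $\xi$-patterns. For each $I=\{\ell_1<\cdots<\ell_{s-1}\}\subseteq[D_1]$, the monomial $\xi_I=\xi_1^{\ell_1}\xi_2^{\ell_2-\ell_1}\cdots\xi_s^{D_1-\ell_{s-1}}$ has each of $\xi_1,\ldots,\xi_{s-1}$ occurring with a strictly positive exponent, so the symbol $\xi_1$ appears inside a single $\xi_I$-block only at the very beginning. Hence in any product $\xi_{I_1}\xi_{I_2}\cdots\xi_{I_{D_2}}$ the block boundaries are exactly the positions where $\xi_1$ follows a non-$\xi_1$ symbol, and the tuple $(I_1,\ldots,I_{D_2})$ can be uniquely recovered. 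Because each coefficient $Q_{i,j,I_j}\in\F[Y\sqcup Z]$ commutes with the $\xi$-variables, expanding $\hat f=\sum_i\prod_j\hat Q_{ij}$ yields
\[
\hat f \;=\; \sum_{(I_1,\ldots,I_{D_2})} G_{(I_j)}\,\xi_{I_1}\xi_{I_2}\cdots\xi_{I_{D_2}},
\qquad G_{(I_j)} \;:=\; \sum_{i\in[s]}\prod_{j\in[D_2]} Q_{i,j,I_j},
\]
with pairwise distinct $\xi$-monomials on the right. So it suffices to exhibit one tuple $(I_1^*,\ldots,I_{D_2}^*)$ for which the commutative polynomial $G_{(I_j^*)}\in\F[Y\sqcup Z]$ is non-zero.

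\emph{Step 2 (Witness monomial and the coefficient identification).}
Since $f\neq 0$, fix a monomial $m\in X^D$ with $\mathrm{coeff}(m,f)\neq 0$. Homogeneity of each $Q_{ij}$ (they all have the same syntactic degree $D_1$) forces a unique factorization $m=m_1\cdots m_{D_2}$ with $m_j\in X^{D_1}$, and $\mathrm{coeff}(m,f)=\sum_i\prod_j\mathrm{coeff}(m_j,Q_{ij})\neq 0$. For any $I_j$ of size $s-1$, let $\mathrm{Sub}_{I_j}$ denote the commutative substitution implicitly used by the automaton: a variable at the $k$-th position of $I_j$ is replaced by its $y_{k,\cdot}$ image, and every other variable by the corresponding $z_\cdot$. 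Two monomials $m_j$ and $m_j'$ collide under $\mathrm{Sub}_{I_j}$ exactly when they agree on positions in $I_j$ and permute one another at positions outside $I_j$. Consequently, the coefficient of the commutative monomial $\prod_j\mathrm{Sub}_{I_j^*}(m_j)$ in $G_{(I_j^*)}$ equals
\[
\sum_{i\in[s]}\prod_{j\in[D_2]}\mathrm{coeff}\bigl(\mathrm{Sub}_{I_j^*}(m_j),\,Q_{i,j,I_j^*}\bigr),
\]
and this collapses to $\sum_i\prod_j\mathrm{coeff}(m_j,Q_{ij})=\mathrm{coeff}(m,f)\neq 0$ \emph{provided} that for each $j$ the set $I_j^*$ separates $m_j$ from every other $X^{D_1}$-monomial appearing in $\bigcup_i\mathrm{supp}(Q_{ij})$.

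\emph{Step 3 (Producing the separating indices via \cite{AJMR19}).}
To construct such $I_j^*$'s I invoke Lemma~6.2 of \cite{AJMR19} for each fixed $j$, applied to the $\Sigma\Pi^*\Sigma$ polynomial $\sum_{i\in[s]}Q_{ij}$ jointly with the pre-specified witness $m_j$: it delivers an index set of size at most $s-1$ that separates $m_j$ from all competing monomials in the combined support, and the Remark preceding the lemma lets us pad this set to size exactly $s-1$ without losing non-zeroness. Assembling the resulting tuple $(I_1^*,\ldots,I_{D_2}^*)$ and combining with Step~2 gives $G_{(I_1^*,\ldots,I_{D_2}^*)}\neq 0$, and hence $\hat f\neq 0$ by Step~1. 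The main obstacle is exactly this step: we need a \emph{single} index set $I_j^*$ that simultaneously handles all $i\in[s]$, because the product over $j$ is performed before the sum over $i$ collapses cancellations. If we applied \cite{AJMR19} to each $Q_{ij}$ individually, the resulting $I_{ij}^*$ would depend on $i$ and break the factorization in Step~1. What rescues the argument is applying \cite{AJMR19}'s separation to the combined family (the $s$-fold sum) with a fixed witness, and carefully verifying that the distinguishing-positions argument in the proof of Lemma~6.2 goes through for any prescribed witness monomial—this is the one nontrivial strengthening that one must confirm.
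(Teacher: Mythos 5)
Your proposal takes a genuinely different route from the paper's: the paper proves Lemma~\ref{gen-proj} by induction on $D_2$, peeling off the leftmost factor $Q_{i1}$, isolating it with a right derivative with respect to a witness monomial for the remaining product, applying the depth-$3$ transform from \cite{AJMR19} only to that single factor, then taking a left derivative in the $\xi$-variables and substituting scalars for $Y\sqcup Z$ before invoking the induction hypothesis on $\sum_i\beta_i\prod_{j\ge 2}Q_{ij}$. You instead try a direct argument: first peel apart $\hat f$ along distinct $\xi$-patterns (this step is sound and is essentially what the paper records in Claim~\ref{f-hat-form}), then reduce to showing a single coefficient polynomial $G_{(I_j^*)}$ is nonzero by naming a witness monomial of $f$ and asking for index sets that ``separate'' each $m_j$.

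There are two gaps in the proposal, and both concern the coefficient identification in Step~2. First, the displayed equation
\[
[\,\mu\,]\,G_{(I_j^*)} \;=\; \sum_{i\in[s]}\prod_{j\in[D_2]}\mathrm{coeff}\bigl(\mathrm{Sub}_{I_j^*}(m_j),\,Q_{i,j,I_j^*}\bigr)
\]
is not a consequence of per-$j$ separation. The commutative ring $\F[Y\sqcup Z]$ is \emph{shared} across all $D_2$ factors (the construction uses the same $y_{k,i}$ and $z_i$ in every $\hat Q_{ij}$), so a given monomial $\mu$ of the product $\prod_j Q_{i,j,I_j^*}^{(c)}$ generally admits many factorizations $\mu=\mu_1'\cdots\mu_{D_2}'$ with $\mu_j'\neq\mathrm{Sub}_{I_j^*}(m_j)$: the $Y$- and $Z$-variable multiplicities can be redistributed among factors. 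This is exactly the ``exponents of the same variable appearing in different blocks get mixed'' obstruction that the paper flags in its outline; it is the reason the paper keeps the $\xi$-variables noncommuting through Step~1 and defers the collapse to commutativity to a later sparsified stage. Your separation hypothesis controls collisions within a single factor but not cross-factor factorizations, so the identity above does not hold.

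Second, even granting the within-factor reduction, Step~3 asks Lemma~6.2 of \cite{AJMR19} to produce, for a prescribed $m_j$, an index set $I_j^*$ of size $\le s-1$ that isolates $m_j$ from the \emph{entire} combined support $\bigcup_i\mathrm{supp}(Q_{ij})$. That lemma asserts existence of an $I$ with $g_I\neq0$ (non-vanishing of the transformed polynomial, shown by a rank/pivot argument on the linear forms), not separation of a chosen witness; those are incomparable statements, and the separation version is in general false since an $(s-1)$-element position set cannot distinguish one monomial from an exponentially large family of permutationally related monomials. You correctly flag this as ``the one nontrivial strengthening that one must confirm,'' but it is precisely the point that does not go through, and it is also why the sum $\sum_i Q_{ij}$ (which may itself vanish even when $f\neq 0$) cannot be fed to Lemma~6.2 as you propose. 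The paper's inductive derivative-and-substitute argument is designed precisely to sidestep both of these obstacles: it never needs to identify a single commutative monomial's coefficient inside the full product, and it only applies the \cite{AJMR19} machinery to one factor at a time.
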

We will need a generalization of this Lemma \ref{gen-proj} for polynomials computed by larger depth +-regular circuits. The generalization is stated in Lemma \ref{gen-proj-highdepth}.

The resulting n.c. polynomial $\hat{f}$ (from Lemma \ref{gen-proj})  still has an exponential degree in $\xi$ variables, but each $\hat{Q}_{ij}$ is structured as  $s$-ordered power-sum polynomials. Importantly, $\hat{f}$ does not contain any monomials from the spurious polynomial $F=\sum_{\substack{m \in Mon(f)}}F_m$.

\subsubsection{Small degree case}
\label{small-degree-case}

It is important to note that we require the degree $D_1$ of each polynomial $Q_{ij}$ to be at least $s-1$. If $D_1< s-1$, we can not use Lemma \ref{gen-proj}. 
The advantage of this lemma lies in the fact that in the polynomial $\hat{f}$, each polynomial $\hat{Q}_{ij}$ is an $s$-ordered power-sum polynomial. 
By Claim \ref{ord-pow-sum-c-nc}, each $\hat{Q}_{ij}$ can be treated as a commutative polynomial without leading to a zero polynomial.
This property is essential for the product sparsification lemma (see Lemma \ref{poly-proj}). \\

When $D_1 < s-1$, we can use a small substitution automaton of size $c$, bounded by $s$, to substitute fresh n.c. double-indexed variables at each position within each 
$Q_{ij}$. Let $Z=\{z_{\ell k}\mid \ell \in [c] \text{ and } k \in [n]\}$ be the \emph{new} set of n.c. variables. The automaton replaces the variable $x_k$ in the $\ell$-th position of a monomial of $Q_{ij}$ with $z_{\ell k}$ (this is a standard set-multilinear conversion of non-commutative polynomial $Q_{ij}$).  To avoid introducing new notation for these transformed polynomials, we refer to the resulting polynomial by $\hat{Q}_{ij}$, which remains n.c. over $Z$.
Unlike the high-degree case, in the small-degree case, the automaton can identify the \emph{boundary} between $Q_{ij}$ and $Q_{i,j+1}$, ensuring that \emph{no spurious monomials are produced}, that is, $F=0$.

The substitution automaton that accomplishes these substitutions and the corresponding substitution matrix for each variable $x_j$ can be found in Appendix \ref{automaton-small-degree}.
The following proposition is true because there is a bijection between monomials of $\hat{Q}_{ij}$ and $Q_{ij}$.
\begin{proposition}
    For $i\in[s],j\in[D_2]$, each $Q_{ij}$ in the polynomial $f\in\F\angle X$ is transformed into $\hat{Q}_{ij}$ such that $\hat{Q}_{ij}\equiv 0$ if and only if $Q_{ij}\equiv 0$.
\end{proposition}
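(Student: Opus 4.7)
The plan is to establish the claimed bijection between monomials of $Q_{ij}$ and $\hat{Q}_{ij}$ and then argue that this bijection is coefficient-preserving, so that $\hat{Q}_{ij}$ is identically zero precisely when $Q_{ij}$ is.

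First, I would make the bijection explicit. A monomial $m = x_{k_1} x_{k_2} \cdots x_{k_{D_1}}$ of $Q_{ij}$ is substituted, position by position, according to the automaton: the variable occurring in position $\ell$ of any degree-$D_1$ monomial in a block $Q_{ij}$ is replaced by the corresponding double-indexed variable $z_{\ell k_\ell}$. Thus the induced monomial in $\hat{Q}_{ij}$ is $\hat{m} = z_{1 k_1} z_{2 k_2} \cdots z_{D_1 k_{D_1}}$. The map $m \mapsto \hat{m}$ is clearly injective because the first index of each $z$-variable records the position: from $\hat{m}$ one can recover the original sequence $(k_1, k_2, \ldots, k_{D_1})$, hence $m$. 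In particular, distinct monomials of $Q_{ij}$ produce distinct monomials of $\hat{Q}_{ij}$, and no two such $\hat{m}$'s can collide.

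Next, I would argue that the coefficients are preserved. Since in the small-degree regime $D_1 < s-1$, the automaton has enough states to track the position exactly; unlike the high-degree case in \S\ref{step1}, there is no need to guess a boundary, so every accepting computation path on a monomial $m \in X^{D_1}$ is uniquely determined by the sequence of positions $1, 2, \ldots, D_1$. Consequently, there is only one computation path on $m$, and the automaton does not introduce any spurious contributions (i.e., $F = 0$ in this setting). Therefore the coefficient of $\hat{m}$ in $\hat{Q}_{ij}$ equals the coefficient of $m$ in $Q_{ij}$.

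Finally, combining these two facts: the linear map sending $m$ to $\hat{m}$ is an injection on the monomial basis that preserves coefficients. Hence $\hat{Q}_{ij} = 0$ as a polynomial in $\F\angle Z$ if and only if every coefficient vanishes if and only if every coefficient of $Q_{ij}$ vanishes, i.e., $Q_{ij} = 0$. The main subtlety, and essentially the only nontrivial point, is verifying that the position-tracking automaton in Appendix \ref{automaton-small-degree} is in fact deterministic on monomials of length at most $D_1 < s-1$ (so that the computation path on $m$ is unique and no boundary-guessing nondeterminism enters); once this is checked, the proposition follows immediately from the bijection.
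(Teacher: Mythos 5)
Your proof is correct and matches the paper's one-line justification---namely, that there is a bijection between monomials of $Q_{ij}$ and $\hat{Q}_{ij}$---merely making explicit that the position-recording first index makes $m \mapsto \hat{m}$ injective and that the determinism of the position-counting automaton (one computation path per monomial) makes the map coefficient-preserving. Since the paper leaves these points as an assertion, your elaboration is the same argument spelled out rather than a different one.
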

It is easy to observe the following because the first index of each variable $z_{\ell k}$ indicates the position of the variable $x_k$ within each $Q_{ij}$.

\begin{observation}
    Suppose $\hat{Q}_{ij}\neq 0$. If we treat the variables $z_{\ell k}, \ell \in [c],k \in [n]$, appearing in $\hat{Q}_{ij}$ as commuting, the resulting commutative polynomial $\hat{Q}^{(c)}_{ij}$ remains non-zero.
\end{observation}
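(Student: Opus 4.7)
The plan is to exploit the fact that the automaton's substitution $x_k \mapsto z_{\ell,k}$ at position $\ell$ injects positional information into the variable name itself, so that even after commutativizing, no two monomials of $\hat{Q}_{ij}$ can collide.

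First I would observe that, because $Q_{ij}$ is computed by a $\Sigma\Pi^*\Sigma$ circuit (hence homogeneous of degree $D_1$), every monomial of $Q_{ij}$ has the form $x_{k_1} x_{k_2} \cdots x_{k_{D_1}}$. The substitution automaton described in the small-degree case reads position $\ell$ of such a monomial (with $1\le \ell \le D_1 < s-1 \le c$) and replaces $x_{k_\ell}$ by $z_{\ell,k_\ell}$. Thus the corresponding monomial of $\hat{Q}_{ij}$ is exactly $z_{1,k_1} z_{2,k_2} \cdots z_{D_1,k_{D_1}}$, and the coefficient is preserved. This gives a bijection $\mu$ between monomials of $Q_{ij}$ and monomials of $\hat{Q}_{ij}$, which is essentially the content of the preceding proposition.

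Next I would argue that this bijection survives commutativization. Suppose two monomials of $\hat{Q}_{ij}$, namely $z_{1,k_1} z_{2,k_2}\cdots z_{D_1,k_{D_1}}$ and $z_{1,k'_1} z_{2,k'_2}\cdots z_{D_1,k'_{D_1}}$, become the same commutative monomial in $\hat{Q}^{(c)}_{ij}$. Since in both monomials each first index $\ell\in[D_1]$ occurs exactly once, equality as commutative monomials forces the multisets $\{z_{\ell,k_\ell}\}_\ell$ and $\{z_{\ell,k'_\ell}\}_\ell$ to agree, which in turn forces $k_\ell = k'_\ell$ for every $\ell\in[D_1]$. Hence the two monomials were already identical in the non-commutative setting. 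So commutativization is injective on the support of $\hat{Q}_{ij}$, and therefore no coefficient cancellation can take place.

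Combining the two observations: if $\hat{Q}_{ij}\not\equiv 0$, pick any monomial $z_{1,k_1}\cdots z_{D_1,k_{D_1}}$ with nonzero coefficient $\alpha$ in $\hat{Q}_{ij}$; by the injectivity above, the same monomial appears in $\hat{Q}^{(c)}_{ij}$ with coefficient $\alpha \neq 0$, so $\hat{Q}^{(c)}_{ij}\not\equiv 0$. There is no real obstacle here beyond checking that the first-index labeling of $z_{\ell,k}$ really does make distinct positions contribute distinct variables, which is immediate from the construction of the small-degree substitution automaton; the homogeneity of $Q_{ij}$ (ensured because the $+$-regular circuit is homogeneous) is what rules out any ambiguity in aligning monomial positions with the index $\ell$.
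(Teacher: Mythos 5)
Your proposal is correct and is essentially the same argument the paper sketches: the paper's one-line justification is precisely that the first index of $z_{\ell k}$ records the position $\ell$, so distinct positions contribute distinct variable names and commutativization cannot merge two distinct monomials of $\hat{Q}_{ij}$. You have simply made this explicit by verifying that the map $z_{1,k_1}\cdots z_{D_1,k_{D_1}}\mapsto$ (commutative monomial) is injective on the support because each first index appears exactly once in every monomial.
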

 
This guarantees that for the small degree case, we can transform the polynomial $f$ similarly to Lemma \ref{gen-proj}, ensuring that each $Q_{ij}$ is transformed into $\hat{Q}_{ij}$ which can be regarded as a commutative polynomial without making it zero.   While $\hat{Q}_{ij}$ remains a n.c. polynomial over $Z$, we acknowledge that our model is black-box and we do not know the value of $D_1$. However, for the purpose of analyzing the existence of matrices of small dimensions for identity testing, we can assume $D_1$ is known.

Thus, we can successfully transform the given polynomial in both scenarios -- whether
$D_1\geq s-1$ or $D_1<s-1$ -- ensuring that the resulting $\hat{Q}_{ij}$ can be considered as a commutative polynomial without making it a zero polynomial.

However, it is important to note that this transformation alone will not provide a black-box PIT, as we cannot guarantee the non-zeroness of the sum of products of these $\hat{Q}_{ij}$ polynomials. This is because if we simply treat all 
$\hat{Q}_{ij}$ as commutative,  the variables across different $\hat{Q}_{ij}$ polynomials could mix, which may lead to cancellations. At this stage, the variables in $Z$ are still considered n.c. in the polynomial $\hat{f}$.

\subsubsection{Non-zeroness of $f'$}
By Lemma \ref{gen-proj}, we established that $\hat{f} \not\equiv 0$. 
Next, we show that the polynomial $f'=\hat{f}+ F \not\equiv 0$.
In $\hat{f}$, for every monomial $m=m_1m_2\cdots m_{D_2}$, and for all $\ell \in [D_2]$ each $m_\ell$ takes the form $\xi^{\ell_1}_1.\xi^{\ell_2}_2\cdots\xi^{\ell_s}_s$  where  $\sum_{k\in[s]}\ell_k=D_1$ (see Claim \ref{obs-good-mon}). 
However, this property does not hold for monomials appearing in $F$ (see Claim \ref{obs-bad-mon}). Specifically, for any monomial $m'=m'_1m'_2\cdots m'_{N}$ in $F$,
there exists a sub-monomial  $m'_a=\xi^{a_1}_1.\xi^{a_2}_2\cdots\xi^{a_s}_s$  such that $\sum_{h\in[s]}a_h\not=D_1$.
This distinction ensures that the monomials of $\hat{f}$ do not cancel with those of $F$. Thus, we conclude that $f'=\hat{f}+F \not\equiv 0$. It's important to note that 
 if $f\equiv0$ then clearly $f'\equiv 0$ as well (converse statement). We note these observations in the following claim. 
 \begin{claim}
 \label{fhat-spurious-non-zero}
 Let $f$ be a homogeneous n.c. polynomial computed by a depth-5 +-regular circuit of size $s$. Then, $f\not\equiv 0$ if and only if $f'=\hat{f}+F\not\equiv 0$.
 \end{claim}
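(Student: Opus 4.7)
The plan is to prove the two directions separately. The forward direction ($f \equiv 0 \Rightarrow f' \equiv 0$) is immediate from the fact that $f'$ is computed as an entry of $f(\mathbf{M_{x_1}}, \ldots, \mathbf{M_{x_n}})$; substitution of matrices into a polynomial is a ring homomorphism into $\F^{s \times s}\angle{\xi \sqcup Y \sqcup Z}$, so $f \equiv 0$ forces the whole matrix, and in particular the $(q_0, q_{s-1})$-entry, to vanish identically. So the real content is the reverse direction.

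For the reverse direction, I would start by invoking Lemma \ref{gen-proj} to conclude that $\hat{f} \not\equiv 0$ whenever $f \not\equiv 0$. The goal then is to show that the monomials contributed by $\hat{f}$ cannot be cancelled by monomials of $F$, i.e., that the two polynomials have disjoint monomial supports in $\F[Y \sqcup Z]\angle{\xi}$. Once disjointness is established, any witness monomial of $\hat{f}$ survives in $\hat{f}+F$, giving $f' \not\equiv 0$.

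The main step, and the one deserving real care, is proving the disjointness. I would use Proposition \ref{m-rho-form} to note that every monomial produced by the automaton factors as $c \cdot m'_1 m'_2 \cdots m'_N$, with $c \in \F[Y \sqcup Z]$ and each $m'_\ell = \xi_1^{\ell_1}\xi_2^{\ell_2}\cdots\xi_s^{\ell_s}$ having $\ell_k > 0$ for $k \in [s-1]$. The key observation is that this block decomposition is \emph{uniquely determined} by the monomial itself: reading the $\xi$-word from left to right, a new block necessarily begins at each occurrence of $\xi_1$ that follows a $\xi_k$ with $k > 1$, because within a single block the exponents of $\xi_1,\ldots,\xi_{s-1}$ must be strictly positive and appear in the fixed order $\xi_1 \prec \xi_2 \prec \cdots \prec \xi_s$. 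Hence the number of blocks and the individual block-degrees $\sum_k \ell_k$ are intrinsic to the monomial. By Claim \ref{obs-good-mon}, every monomial of $\hat{f}$ has all $N = D_2$ blocks satisfying $\sum_k \ell_k = D_1$; by Claim \ref{obs-bad-mon}, every monomial of $F$ has at least one block with $\sum_k \ell_k \neq D_1$. These conditions are mutually exclusive, so no monomial can occur in both supports.

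The main obstacle I anticipate is justifying the uniqueness of the block decomposition cleanly — this is what makes the block-degree invariant well-defined on monomials, and it is the lynchpin of the non-cancellation argument. Once this is in place, combining $\hat{f} \not\equiv 0$ from Lemma \ref{gen-proj} with disjoint supports yields $f' = \hat{f} + F \not\equiv 0$, completing the claim.
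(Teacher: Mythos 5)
Your proof follows the paper's approach exactly: invoke Lemma \ref{gen-proj} for $\hat{f}\not\equiv 0$, handle the converse trivially, and use Claims \ref{obs-good-mon} and \ref{obs-bad-mon} to show that monomials of $\hat{f}$ and $F$ cannot coincide. Your explicit argument that the block decomposition $m'_1m'_2\cdots m'_N$ is intrinsic to the $\xi$-word (a new block begins precisely at each $\xi_1$ following a $\xi_k$ with $k>1$, assuming $s\geq 3$) is the right detail to insist on and is a genuine refinement --- the paper's two claims are stated per computation path $\rho$, and it is exactly this uniqueness that promotes the degree-per-block condition to a well-defined invariant of the monomial itself, which is what the disjoint-support conclusion actually needs.
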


Next,  we can simplify the polynomial $f'$ using the Polynomial Identity Lemma for commutative polynomials. We replace the commuting variables $Y \sqcup Z$ with scalar substitutions from $\F$ or an extension field, yielding a non-zero polynomial.
Let us denote this resulting non-zero polynomial as $\tilde{f}$. After this substitution, the only remaining variables in $\tilde{f}$ will be n.c. variables $\xi$.

Let us denote new polynomials obtained after replacing the commuting variables by scalars in $\hat{f}$ and $F$ by $\hat{f}_1$ and $F_1$ respectively. That is, $\tilde{f}=\hat{f}_1+F_1$.

One of the goals of this transformation is to ensure that if $\xi_s$ is followed by $\xi_1$ in the transformed monomials (for all such occurrences of $\xi_s$ followed by $\xi_1$), there must be a transition from $Q_{i,j}$ to $Q_{i,j+1}$ for some $j\in[D_2]$. As noted, we cannot be sure of this. However, all those monomials where this transition occurs are captured in the structured part  $\hat{f}_1$. Since there is no such structure in $F_1$, we cannot conclude anything about the monomials appearing in the spurious part $F_1$.

\begin{remark}
\label{remark-product-of-e-pattern}
    It is important to note that each monomial of $\tilde{f}$ is a product of $\xi$-patterns (see Definition \ref{e-pattern}),  and the boundaries of each $\xi$-pattern can be easily identified by an automaton which is crucially used by the remaining steps.
\end{remark}

\subsection{Step 2: Product Sparsification} \label{step2-product-sparsification}

In the second step of our transformation, we prove a {\emph general lemma} that states if we have a sum of a small number of products of ordered power-sum polynomials, we can sparsify the product while preserving its non-zero property. Specifically, in each product term of the sum, we can treat only a small number of the ordered power-sum polynomials as non-commutative while treating the rest as commutative without affecting the non-zero nature of the polynomial. In particular, this step does not depend on the number of terms in each product. 

We focus on the sparsification of the n.c. polynomial $\tilde{f}\in \F \angle \xi$, which was the output of Step (1). This transformation affects both the good part $\hat{f}_1$ and the spurious part $F_1$ of the polynomial $\tilde{f}\in \F \angle \xi$.  We begin by analyzing the transformation of $\hat{f}_1$, which is defined as:
$$\hat{f}_1= \sum_{i\in [s]}\left(\prod_{j\in[D_2]}\hat{Q}_{ij}\right),$$

where each $\hat{Q}_{ij}$ is an $s$-ordered power-sum polynomial in the n.c. variables $\xi=\{\xi_1,\ldots,\xi_s\}$. Note that each $\hat{Q}_{ij}$ is a homogeneous and degree $D_1$ n.c. polynomial.

The key observation is that we can preserve the non-zeroness of $\hat{f}_1$ by retaining at most $s-1$ of the $s$-ordered power-sum polynomials $\hat{Q}_{ij}$ in each product $\prod_{j\in[D_2]}\hat{Q}_{ij}$ as non-commutative while treating the remaining ones as commutative.   This is stated in the following lemma, which we refer to as the \emph{product sparsification lemma}. This lemma generalizes Lemma 6.2 from \cite{AJMR19}. However, unlike in \cite{AJMR19}, we are working with the product of non-commutative polynomials where the degree of individual factors can be greater than 1. If we simply treat them as commutative, as in \cite{AJMR19}, they may become zero.   The proof of this lemma crucially relies on Claim \ref{ord-pow-sum-c-nc}, and the complete proof can be found in Appendix \ref{app-proof}. Unlike \cite{AJMR19}, one of the key distinctions in our setting is that the $\hat{Q}_{ij}$ polynomial can be non-homogeneous in general.

We will revisit the combination of $\hat{f}_1$ and $F_1$ in \cref{{step1.1}} to complete our analysis.

\begin{lemma}[Product Sparsification Lemma]
\label{poly-proj}
 Let $$\hat{f}_1=\sum_{i\in [s]} \prod_{j\in[D_2]}\hat{Q}_{ij},$$ where each $\hat{Q}_{ij}$ is an $s$-ordered power-sum polynomial of degree $D_1$ over $\xi=\{\xi_1,\xi_2,\ldots,\xi_s\}$. Then, there exists a subset $I \subseteq [D_2]$ with size at most $s-1$ such that if we treat the polynomials $\hat{Q}_{ij}$
 for $j \in I$, as non-commutative and the others ($j \not\in I$) as commutative, then the  polynomial $\hat{f}_1$ remains non-zero.  Furthermore, each $\hat{Q}_{ij}$ polynomial may be non-homogeneous in general. Moreover, there is a small substitution automaton of size $O(s)$ that performs this transformation. 
\end{lemma}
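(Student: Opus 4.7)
The plan is to prove the Product Sparsification Lemma by induction on $s$, the fan-in of the outer summation, closely paralleling the proof of Lemma~6.2 in \cite{AJMR19}. The key property leveraged throughout is Claim~\ref{ord-pow-sum-c-nc}: each $s$-ordered power-sum polynomial $\hat{Q}_{ij}$ can be replaced by its commutative projection $\hat{Q}_{ij}^{(c)}$ without losing non-zeroness, so every factor individually survives commutization.

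For the base case $s=1$, the bound $|I|\leq 0$ forces $I=\emptyset$, and I must show $\prod_{j\in[D_2]}\hat{Q}_{1j}^{(c)}\neq 0$. Since $\hat{f}_1=\prod_j \hat{Q}_{1j}\neq 0$ and $\F\angle\xi$ is an integral domain, every factor $\hat{Q}_{1j}$ is non-zero; Claim~\ref{ord-pow-sum-c-nc} then yields $\hat{Q}_{1j}^{(c)}\neq 0$, and the product in the commutative polynomial ring over $\F$ (also an integral domain) is non-zero.

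For the inductive step, write $\hat{f}_1=\sum_{i=1}^{s} R_i$ with $R_i=\prod_{j\in[D_2]}\hat{Q}_{ij}$. If some $R_i$ vanishes, I drop that summand and apply the inductive hypothesis to the remaining $s-1$ summands, giving $|I|\leq s-2$. Otherwise every $R_i\neq 0$, and hence every $\hat{Q}_{ij}$ and $\hat{Q}_{ij}^{(c)}$ is non-zero. I then examine the fully commutative sparsification $\hat{f}_1^{(\emptyset)}=\sum_i R_i^{(c)}$: if this is non-zero, $I=\emptyset$ already works. Otherwise the commutative projections cancel while the non-commutative sum $\hat{f}_1$ does not, so the non-commutative ordering must be essential at some position $j^{\star}\in[D_2]$. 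I add such a $j^{\star}$ to $I$ and regroup the outer summands according to the non-commutative behavior of $\hat{Q}_{i,j^{\star}}$, strictly reducing the number of outer summands that still need to be distinguished. Applying the inductive hypothesis to each resulting group (each with fewer than $s$ summands) contributes the remaining positions, and a careful accounting shows $|I|\leq s-1$.

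The main obstacle is the precise selection of the distinguishing position $j^{\star}$ and the regrouping step: I must show that adding a single position to $I$ strictly decreases a well-chosen complexity measure, such as the dimension of the $\F$-linear relation space among $\{R_i^{(c)}\}_{i\in[s]}$, so the recursion terminates after at most $s-1$ insertions. This is exactly the step where the argument of Lemma~6.2 of \cite{AJMR19} has to be lifted from homogeneous linear forms to $s$-ordered power-sum polynomials of possibly exponential degree, and the lift is made possible by Claim~\ref{ord-pow-sum-c-nc} together with its coefficient-commutative generalization (Remark~\ref{k-ord-coeff-comm-polys}): since each factor may be commutized without being killed, the problem reduces to the linear-algebraic structure of the outer sum analyzed in \cite{AJMR19}.
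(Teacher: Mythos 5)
Your proposal sets up the right inductive frame (induction on $s$, base case via Claim~\ref{ord-pow-sum-c-nc} and integral-domain cancellation-free products), and you correctly identify that the heart of the matter is selecting a distinguishing position $j^{\star}$ and then \emph{reducing to the inductive hypothesis on a single sub-sum}. But as you yourself note, the proposal leaves precisely that step as an ``obstacle,'' and the measure you gesture at (dimension of the $\F$-linear relation space among the commutized \emph{full products} $\{R_i^{(c)}\}$) is not what the paper's argument — or the underlying \cite{AJMR19} argument it lifts — actually controls. Moreover your plan to ``apply the inductive hypothesis to each resulting group'' after regrouping is dangerous: different groups would a priori require \emph{different} index sets, and you would have no way to combine them into a single $I\subseteq[D_2]$ of size $s-1$ that works simultaneously for the whole sum. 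The paper's proof carefully avoids this by making a single recursive call.

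The missing idea is concrete. The paper picks $j_0$ as the \emph{smallest} position where the factors $\{\hat{Q}_{1,j_0},\dots,\hat{Q}_{s,j_0}\}$ are not all pairwise proportional (i.e.\ the span has dimension $t>1$). Everything to the left of $j_0$ is then a common product $P$ of $s$-ordered power-sum polynomials, which by Claim~\ref{ord-pow-sum-c-nc} can be commutized without being killed — this is exactly where your ``coefficient-commutative generalization'' is used, but only on the common prefix, not on all of $\hat{f}_1$. At position $j_0$, one chooses a linearly independent subset $\{\hat{Q}_{1,j_0},\dots,\hat{Q}_{t,j_0}\}$ and expresses the remaining $\hat{Q}_{i,j_0}$ ($i>t$) in that basis, which rewrites $\sum_i\beta_i P_i = P\cdot\sum_{k=1}^t \hat{Q}_{k,j_0}P''_k$ where each $P''_k$ is a combination of \emph{at most $s-1$} of the tails $\prod_{j>j_0}\hat{Q}_{ij}$. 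Because each $\hat{Q}_{k,j_0}$ is an $s$-ordered power-sum polynomial (hence has distinct monomials), one can identify it with a linear form over a fresh variable set; linear independence of the $\hat{Q}_{k,j_0}$ then becomes linear independence of linear forms, and an invertible change of variables at that single position (Proposition~3.1 of \cite{AJMR19}) sends $\hat{Q}_{k,j_0}\mapsto x_k$, proving that some $P''_k\neq 0$. The inductive hypothesis is then invoked exactly once on that single $P''_k$, producing $J'$ with $|J'|\leq s-2$, and one sets $J=\{j_0\}\cup J'$. Your proposal mentions neither the per-position linear-independence selection of $j_0$, nor the re-encoding of power-sum factors as linear forms, nor the invertible-transform argument that makes the count drop by one per level; without those you cannot close the induction with a bound of $s-1$, so the gap is real.
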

 In other words, for each n.c. variable $\xi_i$ (where $i \in [s]$) in  $\hat{f}_1$, there exists an  $O(s)$-dimensional matrix—acting as a transition matrix of a substitution automaton of size $O(s)$.  By evaluating $\hat{f}_1$ on these matrices, the polynomial is transformed into a product-sparsified polynomial while maintaining its non-zero property.

\begin{remark}
\label{prod-spars-gen}
  We remark that the proof of Lemma \ref{poly-proj} (see Appendix \ref{app-proof}) relies solely on the fact that each $\hat{Q}_{ij}$ polynomial in $\hat{f}_1$ is an ordered power-sum polynomial. In particular, the proof does not depend on the fact that each $\hat{Q}_{ij}$ is obtained from a $\Sigma\Pi^*\Sigma$  circuit. 
Instead, the proof relies on the following facts:

\begin{enumerate}
    \item the number of summands is small,
    \item the boundary of each ordered power-sum polynomial can be efficiently identified using a small automaton
\end{enumerate}

This makes it irrelevant where the $\hat{Q}_{ij}$ polynomials originate from.  As a result, we can apply this result whenever the given polynomial is represented as a sum of a small number of products of ordered power-sum polynomials (i.e., the number of summands is small). We will use this observation when working with higher-depth $+$-regular circuits. 
\end{remark}

Since we do not know the index set $I$, the substitution automaton guesses the index set $I$. 
The substitution automaton that accomplishes this product sparsification can be found in Appendix \ref{automaton-product-sparsification}. 
Since the index set  $I\subseteq [D_2]$ is unknown, the automaton non-deterministically selects which $\hat{Q}_{ij}$ polynomials will be treated as non-commutative. Given the structured nature of the polynomial $\hat{f}$, we can identify the \emph{boundary} of each $\hat{Q}_{ij}$, ensuring that no additional spurious monomials are generated. 

In the high-degree case ($D_1\geq s-1$),  either $\xi_s$ or $\xi_{s-1}$ followed by $\xi_1$  indicates the end of each $\hat{Q}_{ij}$, which can be easily recognized by the automaton. In the low-degree case ($D_1< s-1$), the smaller degree allows us to identify the ends of each $\hat{Q}_{ij}$ with a small automaton of size at most  $s-2$.

The substitution automaton selects at most $(s-1)$ of the $\hat{Q}_{ij}$ polynomials to be treated as n.c. while treating the remaining ones as commutative.  
The $\xi$ variables  in the chosen commutative polynomials $\hat{Q}_{ij}$ are substituted with fresh commutative variables $\zeta=\{\zeta_1,\ldots,\zeta_s\}$. 
In the selected commutative polynomials $\hat{Q}_{ij}$, 
each n.c. variable $\xi_k, k \in [s]$ is replaced by the corresponding commuting variable $\zeta_k$. Additionally, to distinguish between different guesses made by the substitution automaton, we use fresh commutative block variables 
$\chi=\{\chi_1,\ldots,\chi_s\}$.

Let $J=\{j_1,j_2,\ldots,j_{s-1}\}\subseteq [D_2]$ with $j_1<j_2<\ldots<j_{s-1}$. We define \mbox{$\chi_J=\chi_1^{j_1-1}.\chi_2^{j_2-j_1-1}\ldots \chi_s^{D_2-j_{s-1}}$}.
If the automaton guesses the $\hat{Q}_{ij}$ polynomials corresponding to the positions in the index set $J$ as n.c., the output $g_J$ of the substitution automaton for this specific guess  $J$ will be 
\begin{equation}
\label{gj-poly}
    g_J=\sum_{i\in [s]}\left(\prod_{j\in \overline{J}} \hat{Q}_{ij}\right) \left(\prod_{j\in J} \hat{Q}_{ij}\right) \times \chi_J.
\end{equation}

Note that $\left(\prod_{j\in \overline{J}} \hat{Q}_{ij}\right)$ is a commutative polynomial over $\zeta=\{\zeta_1,\ldots,\zeta_s\}$. 
We have the following lemma.

\begin{lemma} \label{chi signature lemma}
Let $\hat{f}_1 \in \F\langle\xi\rangle$ be the structured part of the polynomial obtained after Step 1. Let $\hat{f}^{'}_1$ be the output of the substitution automaton given in Figure \ref{automaton-product-sparsification} (can be found in Appendix \ref{app-proof}) on the structured polynomial $\hat{f}_1$ and it can be expressed as $$\hat{f}^{'}_1=\sum_{J \subseteq [D_2],|J|=s-1}g_J.$$
Moreover, $\hat{f}_1 \neq 0$ if and only if $\hat{f}^{'}_1\neq 0$.
\end{lemma}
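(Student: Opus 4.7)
The plan is to address the two assertions of the lemma separately. For the decomposition formula, the idea is to analyze the action of the substitution automaton (from Appendix \ref{automaton-product-sparsification}) directly on the monomials of the structured polynomial $\hat{f}_1$. By Claim \ref{f-hat-form} and Remark \ref{remark-product-of-e-pattern}, every monomial of $\hat{f}_1$ factors cleanly into $D_2$ consecutive $\xi$-patterns, one per $\hat{Q}_{ij}$ block, and these boundaries are detectable by a small automaton (using either the occurrence of $\xi_s$ or a $\xi_{s-1}\!\to\!\xi_1$ transition in the high-degree regime, or an explicit position counter in the small-degree regime). At each boundary the automaton nondeterministically either (a) keeps the next $\hat{Q}_{ij}$ in its non-commutative form over $\xi$ and emits a corresponding ``kept'' $\chi$-marker, or (b) substitutes $\zeta_k$ for $\xi_k$ throughout the block, emitting a ``dropped'' $\chi$-marker. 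Enforcing that exactly $s-1$ blocks are ``kept'' (as justified by the remark following Lemma \ref{gen-proj}) means every accepting computation picks out a unique subset $J\subseteq [D_2]$ with $|J|=s-1$, and records it via the block monomial $\chi_J$. Summing over all accepting paths gives
\[
\hat{f}^{'}_1 \;=\; \sum_{J\subseteq [D_2],\,|J|=s-1} g_J \times \chi_J,
\]
with $g_J$ exactly the polynomial described in the statement.

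The second assertion (which I read as $\hat{f}_1\not\equiv 0 \iff \hat{f}^{'}_1\not\equiv 0$, since the automaton is a pure substitution) has an immediate direction: if $\hat{f}_1\equiv 0$, then evaluating any matrix substitution built from the automaton yields the zero polynomial, so $\hat{f}^{'}_1\equiv 0$. For the converse, assume $\hat{f}_1\not\equiv 0$. Since every $\hat{Q}_{ij}$ is an $s$-ordered power-sum polynomial, the Product Sparsification Lemma (Lemma \ref{poly-proj}) supplies a subset $I\subseteq [D_2]$ of size at most $s-1$ such that treating $\hat{Q}_{ij}$ for $j\in I$ as non-commutative and the rest as commutative over $\zeta$ yields a non-zero polynomial; padding $I$ to size exactly $s-1$ preserves non-zeroness, giving $g_I\not\equiv 0$.

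To conclude $\hat{f}^{'}_1\not\equiv 0$ from $g_I\not\equiv 0$, I would argue that the contributions $g_J\times\chi_J$ for distinct $J$ cannot cancel one another. The key observation is that the map $J\mapsto \chi_J$ is injective: the exponent vector $(j_1-1,\,j_2-j_1-1,\ldots,D_2-j_{s-1})$ of $\chi_J$ uniquely determines $J=\{j_1<\cdots<j_{s-1}\}$. Since the $\chi$-variables are fresh commutative block variables that appear nowhere in any $g_J$, the different summands live in pairwise disjoint $\chi$-multidegree components. Hence the $\chi_I$-component of $\hat{f}^{'}_1$ is exactly $g_I\cdot\chi_I\not\equiv 0$, so $\hat{f}^{'}_1\not\equiv 0$.

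The step I expect to require the most care is the first part — specifically, verifying that the structured form of $\hat{f}_1$ after Step 1 lets the automaton identify each $\hat{Q}_{ij}$-boundary with zero ambiguity, so that every accepting computation corresponds to a well-defined index set $J$ and no additional spurious cross-terms arise. This relies critically on the fact that all monomials of $\hat{f}_1$ obey the ``$\sum\ell_k=D_1$ per block'' property established in Claim \ref{obs-good-mon}, and that the $\xi$-pattern structure makes boundary detection finite-state. Once that structural observation is in place, the remainder of the argument reduces to invoking Lemma \ref{poly-proj} and using the injectivity of the $\chi_J$ signatures, both of which are routine.
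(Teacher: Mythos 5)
Your proof is correct and matches the paper's own (brief, inline) argument: the paper establishes the lemma in the two sentences immediately following it, observing that distinct $J$ yield distinct $\chi_J$ signatures so the summands cannot cancel, and then invoking Lemma \ref{poly-proj} to produce a $J$ with $g_J\neq 0$. You spell out the same reasoning in more detail, including the (trivial) converse direction and the injectivity of $J\mapsto\chi_J$, but the approach is identical.
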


It's evident that for distinct guesses $J$ and $J'$ where  $J\neq J'$,
the monomials of $g_J$ and $g_J'$ will not mix, since the sub-monomials $\chi_J$ and $\chi_{J'}$  are distinct (see Equation \ref{gj-poly}). By Lemma \ref{poly-proj}, there exists index set $J \subseteq [D_2]$ with size at most $s-1$, such that $g_J\neq 0$ implying $\hat{f}_1'\neq 0$.

Next, we can simplify  $\hat{f}^{'}_1$ by using the Polynomial Identity Lemma for commutative polynomials to eliminate the commuting variables $\zeta \cup \chi$ by substituting scalars. As a result, the remaining variables in the polynomial will be solely the n.c. variables $\xi$.

Let us denote the new polynomial obtained after replacing the commuting variables by scalars in $\hat{f_1}'$ by $\hat{f}_2$.

This product sparsification step affects both the good part $\hat{f}_1$ and the spurious part $F_1$ of the polynomial $\tilde{f}$ obtained after Step 1.  We will denote the new polynomial derived from the spurious part $F_1$ by $F_2$.  In Step 2,  we apply product sparsification to both $\hat{f}_1$ and $F_1$, which yields the n.c. polynomials $\hat{f}_2$ and $F_2$ respectively (with all commuting variables replaced by scalars).

\subsection{Step 3: Commutative Transformation of $\hat{f}_2$}

In this final step,  we prove a general commutative transformation lemma, which states that if we have a non-commutative polynomial represented as a sum of products of a small number of ordered power-sum polynomials (i.e., the number of terms in each product is small), we can convert it into a commutative polynomial while preserving its non-zeroness property. In particular, this step does not depend on the number of summands. The key idea is to introduce a small number of new commutative variables to perform this transformation.

We now describe how to transform $\hat{f}_2$ into a commutative polynomial while preserving its non-zeroness. Note that $\hat{f}_2$ is a polynomial over $\F\langle \xi\rangle$. If we treat $\hat{f}_2$ as commutative 
by considering the n.c. variables $\xi$ as commutative, the exponents of the variable $\xi_i$ (for $i \in [s]$) from different n.c. $\hat{Q}_{ij}$ polynomials
will be summed (or mixed). This mixing makes it impossible to guarantee that the resulting polynomial remains non-zero.

However, we can carefully convert $\hat{f}_2$ into a commutative polynomial while preserving its non-zeroness. This is stated in the following lemma and the proof can be found in Appendix \ref{proof:step3}. In particular, there is a substitution automaton of size $O(s^2)$ that carries out this commutative transformation and the substitution automaton is provided in Appendix \ref{automaton-step3}.

\begin{lemma}[ Commutative Transformation Lemma]
\label{com-conversion}
 Let $g=\sum_{i\in [s]}\beta_i\big(\prod_{j\in[s]} \hat{Q}_{ij})$, where $\beta_i\in \F$  and each $\hat{Q}_{ij}$ is an $s$-ordered power-sum polynomial over $\xi=\{\xi_1,\xi_2,\ldots,\xi_s\}$ of degree $D$. This can be expressed as: $g=\sum_m \alpha_m m$, where  $m=\big(\prod_{j \in [s]} \xi_1^{i_{j1}}\xi_2^{i_{j2}}\ldots\xi_s^{i_{js}})$.  
 The n.c. polynomial $g$ can be transformed into a commutative polynomial 
 $g^{(c)}$ while preserving its non-zeroness. In particular, there exists a small substitution automaton of size $O(s^2)$  that performs this commutative transformation. 
\end{lemma}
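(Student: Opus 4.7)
The plan is to tag each of the $s$ factors in a product $\prod_{j\in[s]} \hat{Q}_{ij}$ with its own set of fresh commutative variables, so that $\xi$-patterns arising from different factors cannot mix when we project to the commutative setting. Concretely, for each $j,k\in[s]$ I introduce a fresh commutative variable $\xi_k^{(j)}$, and I define a substitution $\phi$ that replaces every occurrence of $\xi_k$ belonging to the $j$-th $\xi$-pattern of a monomial of $g$ by $\xi_k^{(j)}$. The target commutative polynomial $g^{(c)}$ will then be $\phi(g)$.

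To realize $\phi$ via a small substitution automaton (compatible with the composition framework of Lemma \ref{composing-substitution-mat}), I would build a deterministic automaton $\mathcal{A}$ with states $(j,k)$ for $j,k\in[s]$, initial state $(1,1)$, and accepting states $(s,s-1)$ and $(s,s)$. The transitions are: $(j,k)\xrightarrow{\xi_k}(j,k)$ (stay within the current $\xi_k$-run); $(j,k)\xrightarrow{\xi_{k+1}}(j,k+1)$ for $k<s$ (advance within pattern $j$); and $(j,k)\xrightarrow{\xi_1}(j+1,1)$ for $k>1$ and $j<s$ (cross from pattern $j$ to pattern $j+1$). On each transition reading $\xi_\ell$ and landing in a state with pattern-index $j'$, the automaton substitutes $\xi_\ell$ by $\xi_\ell^{(j')}$. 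The crucial fact is that every monomial $m$ of $g$ admits a \emph{unique} decomposition into $s$ consecutive $\xi$-patterns: because each $\hat{Q}_{ij}$ is an $s$-ordered power-sum polynomial with $i_k>0$ for $k<s$ (see the remark following Definition \ref{powersum-def}), every pattern begins with at least one $\xi_1$, and within a single pattern the $\xi$-indices are non-decreasing, so the appearance of $\xi_1$ immediately after a $\xi_\ell$ with $\ell>1$ unambiguously signals a boundary.

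Hence $\mathcal{A}$ has a unique accepting computation on each monomial $m=\prod_{j=1}^s \xi_1^{a_{j1}}\xi_2^{a_{j2}}\cdots \xi_s^{a_{js}}$ of $g$, producing the commutative monomial $\phi(m)=\prod_{j=1}^{s}(\xi_1^{(j)})^{a_{j1}}(\xi_2^{(j)})^{a_{j2}}\cdots (\xi_s^{(j)})^{a_{js}}$. Since the variables $\xi_k^{(j)}$ are pairwise distinct, the exponent array $(a_{jk})$ is recoverable from $\phi(m)$, so $\phi$ is injective on the set of valid monomials appearing in $g$. Consequently, in $g^{(c)}:=\mathcal{A}(g)$ the coefficient of every commutative monomial is exactly the coefficient of the unique n.c. monomial of $g$ that maps to it, whence $g\not\equiv 0$ if and only if $g^{(c)}\not\equiv 0$. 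The main subtlety I anticipate is the unambiguity of the pattern decomposition, which crucially depends on the strict positivity of the first $s-1$ exponents in the special form of Definition \ref{powersum-def}; without it, one could not reliably distinguish a new pattern from a continuation of an old one. The resulting automaton has $O(s^2)$ states, so the substitution matrices it induces have dimension $O(s^2)$, which fits cleanly into the composition framework of Lemma \ref{composing-substitution-mat}.
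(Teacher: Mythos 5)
Your proof is correct and takes essentially the same route as the paper: both introduce $O(s^2)$ fresh commutative variables indexed by block position and $\xi$-index, both build a deterministic $O(s^2)$-state automaton that locates block boundaries by exploiting the strict positivity of the first $s-1$ exponents in each $\xi$-pattern, and both conclude via injectivity of the induced map on monomials of $g$. Your explicit $(j,k)$ state labeling and two-accepting-state convention (handling $a_{ss}>0$ versus $a_{ss}=0$) are a cleaner rendering of the same automaton that the paper depicts in Figure~\ref{fig2} with linear state indices $q_0,\ldots,q_{s^2-1}$.
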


In other words, for each n.c. variable $\xi_i$ (where $i \in [s]$), there exists an  $O(s^2)$-dimensional matrix—acting as a transition matrix of a substitution automaton.  By evaluating $g$ on these matrices, the polynomial is transformed into a commutative polynomial $g^{(c)}$, while maintaining its non-zero property.

\begin{remark}
 \label{com-trans-gen}

 We remark that Lemma \ref{com-conversion} is more general. The proof (see Appendix \ref{proof:step3}) depends only on the fact that the given non-commutative polynomial can be represented as a sum of products of a small number of ordered power-sum polynomials (i.e., the number of terms in each product is small). Crucially, the proof of this commutative transformation does not depend on the fact that the polynomials are derived from $+$-regular circuits or whether they are homogeneous. Instead, the proof relies on the following two facts:
\begin{enumerate}
    \item The number of terms in each product is small, and
    \item Each term in the product is an ordered power-sum polynomial so that boundaries can be identified efficiently using an automaton. 
\end{enumerate}

This makes the result applicable whenever the given non-commutative polynomial is represented as a sum of products of a small number of ordered power-sum polynomials. We will use this observation when working with higher-depth $+$-regular circuits.

 \end{remark}

 By applying Lemma \ref{com-conversion}, we can transform the polynomial $\hat{f}_2$, the structured part obtained after Step 2, into a commutative polynomial while preserving its non-zeroness. Let $\hat{f}^{(c)}_3$ denote the resulting commutative polynomial derived from $\hat{f}_2$. Consequently, we establish that $\hat{f}^{(c)}_3\not\equiv 0$ 
as a result of this lemma.

Next, given $\tilde{f}=\hat{f}_1+F_1$, where $\tilde{f}$ was obtained after Step 1, we can likewise transform $\tilde{f}$ into a commutative polynomial.
 Let $F^{(c)}_3$ represent the commutative polynomial obtained from $F$ after applying steps (2) and (3). If $\hat{f}^{(c)}_3+F^{(c)}_3\not\equiv 0$, we have successfully converted a n.c. polynomial $f$, computed by a depth-5 $+$-regular circuit, into a commutative polynomial that preserves non-zeroness. 
We can now check the non-zeroness of this commutative polynomial using the Polynomial Identity Lemma for commutative polynomials. 

Assume $\hat{f}^{(c)}_3+F^{(c)}_3=0$.  We will now detail how to modify the coefficients of certain monomials in $\tilde{f}$, which was obtained in Step 1, before executing Steps (2) and (3). We establish that this coefficient modification maintains non-zeroness and remains non-zero even after the application of  Steps (2) and (3).

\subsection{Coefficient Modification by Modulo Counting Automaton} \label{step1.1}

Assuming $\hat{f}^{(c)}_3+F^{(c)}_3=0$, we know that $\hat{f}^{(c)}_3\not\equiv 0$ which implies that $F^{(c)}_3\not\equiv0$ and $\hat{f}^{(c)}_3=-F^{(c)}_3$. 
To resolve this, we will carefully modify some of the monomial coefficients in the n.c. polynomial $\tilde{f}=\hat{f}_1+F_1$ (see Lemma \ref{lem-modp-counting}) before proceeding with product sparsification (Lemma \ref{poly-proj}) and commutative transformation (Lemma \ref{com-conversion}). We will show that these modifications ensure the resulting polynomial remains non-zero after Steps (2) and (3).
Recall that $\tilde{f}=\hat{f}_1+F_1$ is derived after replacing the commutative variables $Y \sqcup Z$ in the polynomial $f'\in\F[Y \sqcup Z]\angle \xi$  with scalar substitutions (as part of Step 1).

Let $\tilde{m}$ be a monomial in the commutative polynomial $\hat{f}^{(c)}_3\in \mathbb{F}[W]$ with a non-zero coefficient $\alpha_{\tilde{m}}\in\F$. This monomial also appears in $F^{(c)}_3\in F[Z]$ with coefficient $-\alpha_{\tilde{m}}\in\F$.

Now, consider a non-zero n.c. monomial $m$ in $\hat{f}_1$. We use $m \in \hat{f}_1$ to denote this. The monomial $m$ can be expressed as $m=m_1.m_2\ldots m_{D_2}$ where $m_i=\xi_1^{i_1}.\xi_2^{i_2}\ldots \xi_s^{i_s}$ for $i\in[D_2]$.
By Claim \ref{obs-good-mon}, for all $i\in[D_2]$, we have $\sum_ji_j=D_1$.
 If we apply product sparsification and commutative transformation (Steps (2) and (3)) to this monomial $m$, the resulting polynomial $G_m$ over commutative variables $W$ may include the commutative monomial $\tilde{m}$ with a non-zero coefficient, denoted as $\exists m \leadsto \tilde{m}$.\\

 We define the set  
$A^{\tilde{m}}_{{\hat{f}_1}}=\{m \in \hat{f}_1 \mid \exists m \leadsto \tilde{m}\}$ and the set $B_{F_1}^{\tilde{m}}=\{m' \in F_1 \mid \exists m' \leadsto \tilde{m}\}.$ \\

It's important to note that $A^{\tilde{m}}_{{\hat{f}_1}} \cap B_{F_1}^{\tilde{m}}=\phi$.
Let  $m'\in B_{F_1}^{\tilde{m}}$,  expressed as $m'=m'_1.m'_2\ldots m'_{N}$ with each 
$m'_i=\xi_1^{i_1}.\xi_2^{i_2}\ldots \xi_s^{i_s}$, where $N$ may differ from $D_2$.
By Claim \ref{obs-bad-mon}, there exists an $i\in[N]$ such that the exponents of the sub-monomial $m'_i$ satisfies $\sum_ji_j\neq D_1$.

To ensure non-zeroness during the commutative transformation, we will modify the coefficients of the n.c. monomials in $A^{\tilde{m}}_{{\hat{f}_1}}$ and $B_{F_1}^{\tilde{m}}$ differently. Since only the monomials in $A^{\tilde{m}}_{{\hat{f}_1}}\sqcup B_{F_1}^{\tilde{m}}$
are transformed into the commutative monomial $\tilde{m}$, after this coefficient modification the monomial $\tilde{m}$ will have a non-zero coefficient after product sparsification and commutative transformation (Steps (2) and (3)).

 It is important to note that, during this process, each n.c. variable $\xi_i,i\in[s]$ is replaced by the same n.c. variable $\xi_i$ ensuring no additional monomials are created; the only change involves the adjustment of the monomial coefficients.

This is stated in the following lemma and the proof can be found in Appendix \ref{proof:modp-counting}.

\begin{lemma}[Coefficient Modification Lemma]
\label{lem-modp-counting}
    Let  $\tilde{f}=\hat{f}_1+F_1 \in \F\angle \xi$ be the non-zero non-commutative polynomial obtained in Step 1. Let $\hat{f}^{(c)}_3$ and $F^{(c)}_3$  be the commutative polynomials obtained 
 from $\hat{f}_1$ and $F_1$ after Steps (2) and (3), respectively. If $\hat{f}^{(c)}_3+F^{(c)}_3=0$, then the coefficients of $\tilde{f}$ can be modified before Steps (2) and (3)  in such a way that the resulting non-commutative polynomial $\tilde{f}^{'}\in \F\angle \xi$ (after coefficients modification step) can be transformed into a commutative polynomial through Steps (2) and (3) while ensuring that the resulting commutative polynomial retains its non-zeroness. \\
In particular, there exists a small substitution automaton of size $O(s^2)$  that performs this commutative transformation. 
\end{lemma}

In other words, if $\hat{f}^{(c)}_3+F^{(c)}_3=0$, then for each non-commutative variable $\xi_i \in [s]$, there exists an $O(s^2)$-dimensional matrix substitution—acting as a transition matrix of a substitution automaton— such that when the n.c. polynomial $\tilde{f}$ (obtained after Step 1) is evaluated on these matrices, the coefficients of
$\tilde{f}$ are modified, resulting in a non-commutative polynomial $\tilde{f}^{'}\in \F\angle \xi$ that maintains its non-zero property. 
 Furthermore, this newly obtained non-commutative polynomial $\tilde{f}^{'}$ can be transformed into a commutative polynomial through Steps (2) and (3), while ensuring that the resulting commutative polynomial retains its non-zero property.

\subsection{Black-box Randomized PIT for $\Sigma\Pi^*\Sigma\Pi^*\Sigma$ Circuits}

Each of these three steps, along with the coefficient modification step, results in its own set of matrices for evaluation. In particular, the matrices obtained in each step evaluate a n.c. polynomial derived from the previous step. 

Given that our model operates as a black box, we cannot evaluate the polynomial in this manner.  Instead, we require a single matrix substitution for each n.c. variable. To address this, we apply Lemma \ref{composing-substitution-mat} to combine the substitution matrices from all four steps into a single matrix for each n.c. variable.

This approach allows us to establish an efficient randomized polynomial identity testing (PIT) algorithm for depth-5 $+$-regular circuits, as demonstrated in the following theorem.

\begin{theorem}
\label{thm-depth-5-proof}
Let $f$ be a non-commutative polynomial of degree $D$ over $X=\{x_1,\ldots,x_n\}$,
computed by a $\Sigma\Pi^*\Sigma\Pi^*\Sigma$ circuit of size $s$. Then $f\not\equiv 0$ if and only if it does not evaluate to zero on the matrix algebra $\mathbb{M}_{s^6}(\mathbb{F})$.
\end{theorem}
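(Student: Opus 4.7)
The plan is to chain the four substitution automata constructed in Sections 4.1--4.4 into a single matrix substitution via Lemma \ref{composing-substitution-mat}, and then invoke the Polynomial Identity Lemma to conclude. The backward direction is trivial (if $f$ vanishes as an element of $\F\angle X$ then it vanishes under any substitution), so I focus on the forward direction: assuming $f \not\equiv 0$, exhibit a non-vanishing substitution in $\mathbb{M}_{s^6}(\F)$.

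First, I would apply the Step 1 substitution automaton from Figure \ref{fig1}, whose per-variable substitution matrices are of dimension $s$. By Claim \ref{f-hat-spurious} its output is $f' = \hat{f} + F \in \F[Y \sqcup Z]\angle \xi$; Lemma \ref{gen-proj} shows $\hat{f} \neq 0$, and the degree-pattern argument recorded in Claims \ref{obs-good-mon} and \ref{obs-bad-mon} together with Claim \ref{fhat-spurious-non-zero} shows that the structured and spurious parts cannot cancel, so $f' \neq 0$. I then replace the commutative variables $Y \sqcup Z$ by scalars from a sufficiently large subset of $\F$ via Lemma \ref{dlsz}, obtaining $\tilde{f} = \hat{f}_1 + F_1 \not\equiv 0$ in $\F\angle \xi$.

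Next, I would apply Step 2 (the product-sparsification automaton of Section \ref{step2-product-sparsification}, with per-variable matrices of dimension $O(s)$) followed by Step 3 (the commutative-transformation automaton of Lemma \ref{com-conversion}, again of dimension $O(s)$). Applied to the structured part $\hat{f}_1$, Lemmas \ref{poly-proj} and \ref{com-conversion} together guarantee that $\hat{f}_3^{(c)} \not\equiv 0$. The same substitutions applied to the spurious part yield $F_3^{(c)}$, and the two commutative polynomials may cancel. Here I branch: if $\hat{f}_3^{(c)} + F_3^{(c)} \neq 0$ we are done; otherwise, I insert the modulo-counting coefficient-modification automaton (Lemma \ref{lem-modp-counting}) immediately after Step 1, whose substitution matrices are also of dimension $O(s)$. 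This automaton exploits the distinguishing invariant of Claims \ref{obs-good-mon} and \ref{obs-bad-mon} (that every $\xi$-block in a structured monomial has total n.c.\ degree exactly $D_1$, while at least one block of every spurious monomial deviates) to rescale the two classes differently, guaranteeing after Steps 2 and 3 that $\hat{f}_3^{(c)} + F_3^{(c)} \not\equiv 0$.

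Finally, since our model is black-box, I cannot feed the output of one automaton into the next. Instead, I invoke Lemma \ref{composing-substitution-mat} with $\mathrm{K} \le 4$ on the sequence (Step 1, coefficient modification, Step 2, Step 3) to combine the per-variable substitution matrices into a single matrix $C_i$ of dimension $\prod_i d_i \le s^6$ for each $x_i \in X$. The $[1, s^6]$-entry of $f(C_1, \ldots, C_n)$ equals the non-zero commutative polynomial $\hat{f}_3^{(c)} + F_3^{(c)}$ (after the possible coefficient modification), so one more application of Lemma \ref{dlsz} yields scalar evaluations of the remaining commutative variables that witness non-vanishing in $\mathbb{M}_{s^6}(\F)$. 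The main obstacle is handling the potential cancellation between structured and spurious parts after the commutative transformation; this is precisely the step that the coefficient-modification automaton is designed to handle, and the fact that our per-step dimensions compose multiplicatively (via the mixed-product property used in Lemma \ref{composing-substitution-mat}) rather than exponentially is what keeps the final dimension polynomial in $s$.
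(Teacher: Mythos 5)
Your proposal follows the paper's proof essentially step for step: compose the Step 1, coefficient-modification, Step 2, and Step 3 substitution automata via the Kronecker-product composition lemma (Lemma~\ref{composing-substitution-mat}), argue non-vanishing of the final commutative polynomial using Lemmas~\ref{gen-proj}, \ref{poly-proj}, \ref{com-conversion} and \ref{lem-modp-counting}, and close with the Polynomial Identity Lemma. The overall logic is sound and matches the paper.

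Two technical details in your write-up are off and should be corrected. First, your stated per-step dimensions do not support your own claimed product bound: you assign dimension $O(s)$ to all of Steps 2, 3 and the coefficient-modification automaton, which would give a composed dimension of $O(s^4)$, not $s^6$; in fact the Step 3 automaton (Figure~\ref{fig2}) has dimension $s^2$, and the coefficient-modification automaton (Figure~\ref{RemainderNFA}) also has dimension $O(s^2)$, which is precisely why the paper lands on $s\cdot O(s)\cdot s^2\cdot O(s^2)=O(s^6)$. Second, the composed output is not a single matrix entry $[1,s^6]$ as you assert. Because the product-sparsification automaton and the modulo-counting automaton each have two accepting states, the paper carefully defines the output polynomial $f^{(c)}$ as a \emph{sum of four entries} of $\mathbf{O}=f(\mathbf{M_{x_1}},\ldots,\mathbf{M_{x_n}})$; non-vanishing of $f^{(c)}$ then guarantees that at least one entry of the output matrix is non-zero. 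Lemma~\ref{composing-substitution-mat} as stated handles only a single designated output entry, so strictly speaking you must apply it once per choice of accepting states in the intermediate automata and add the resulting entries. Neither issue changes the approach, but both need to be stated correctly for the proof to be airtight.
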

\begin{proof}
There are two directions in the proof. The backward direction is straightforward, as evaluating a zero polynomial  $f$ will always result in a zero matrix. Now, we will proceed to demonstrate the forward direction. 

The dimensions of the substitution matrices obtained for Steps (1), (2), and (3) are $s$, $(4s-2)$, and $s^2$ respectively (see Figures \ref{fig1}, \ref{automaton-product-sparsification} and \ref{automaton-step3}). Let $\mathbf{A}$, $\mathbf{B}$ and $\mathbf{C}$  denote these substitution matrices. The output of Step 1 is the polynomial $\mathbf{A}[1,s]$. Evaluating this output on $\mathbf{B}$, we obtain the output of Step 2, which is given by the polynomial $\mathbf{B}[1,4s-3]+\mathbf{B}[1,4s-2]$. Finally, evaluating the output of Step 2 on $\mathbf{C}$, gives us the output of Step 3, $\mathbf{C}[1,s^2]$. \\

We define the dimensions as follows: let $d_1=s$, $d_2=4s-3$, $d^{'}_2=4s-2$, and $d_3=s^2$. \\

According to Lemma \ref{composing-substitution-mat}, the substitution matrices 
$\mathbf{A=(A_1,\ldots,A_n)}$, $\mathbf{B=(B_1,\ldots,B_s)}$ and $\mathbf{C=(C_1,\ldots,C_s)}$ can be combined into a single matrix substitution $\mathbf{M}=(\mathbf{M_{x_1},M_{x_2},\ldots,M_{x_n}})$ of dimension $O(s^4)$. This results in an overall matrix substitution $\mathbf{M}=(\mathbf{M_{x_1},M_{x_2},\ldots,M_{x_n}})$ of dimension $O(s^4)$.
 We evaluate the polynomial as $\mathbf{O}=f(\mathbf{M_{x_1},M_{x_2},\ldots,M_{x_n}})$.

The output of the substitution automaton is defined as the sum of two entries of the matrix $O$.
$$
   f^{(c)} = \mathbf{O}\left[\mathbf{1},\left(\mathbf{d_1\cdot d_2\cdot d_3}\right) \right]  + \mathbf{O}\left[\mathbf{1},\left(\mathbf{d_1\cdot d'_2\cdot d_3}\right)\right]
$$
This differs from previous works (\cite{AJMR19,BW05}) where only a single matrix entry was used as the output.
It is important to note that, by the matrix composition lemma (see Lemma \ref{composing-substitution-mat}), the polynomial $f^{(c)}$ is equal to the polynomial obtained as the output of Step 3, which is $\mathbf{C}[1,s^2]$. This summation arises because the automaton for Step (2) has two accepting states, leading to two entries in the final composed matrix.

If $f^{(c)}\in \F[W]$ is identically zero, we can apply the coefficient modification step (See \cref{step1.1}) to obtain a non-zero commutative polynomial. The maximum dimension of the substitution matrix in this coefficient modification step is $O(s^2)$ (see Figures \ref{RemainderNFA} and \ref{xi-pattern consumption}), which will result in a final matrix of dimension $O(s^6)$
due to the matrix composition lemma (Lemma \ref{composing-substitution-mat}).

Since one of the automatons used in the coefficient modifications step (see Figure \ref{RemainderNFA}) has two accepting states, the output will be the \emph{sum of four entries} of the resulting matrix. Consequently, we have shown that $f^{(c)}\in \F[W]$ is non-zero if and only if $f\neq 0$.  The non-zeroness of $f^{(c)}$ implies that at least one of the four entries of the matrix is non-zero. The summation of entries was defined for analysis purposes.

 Finally, the non-zeroness of the polynomial $f^{(c)}$ can be tested using the DeMillo-Lipton-Schwartz-Zippel lemma. Thus, we conclude that the polynomial $f$  is non-zero if and only if it is not identically zero on the matrix algebra $\mathbb{M}_{s^6}(\mathbb{F})$.

 This completes the proof of the theorem.
\end{proof}

\begin{remark}
We observe that the commutative polynomial $f^{(c)}\in \F[W]$ is an  $s^2$-ordered power-sum polynomial over $W$, in the sense that we can arrange the variables in each monomial of $f^{(c)}$ in increasing order according to the first index of the $W$ variables, allowing for some exponents to be zero as specified in Definition \ref{powersum-def}.
\end{remark}

This is summarized in the following theorem.

\begin{theorem}
\label{f-to-powersum-d5}
Let $f$ be a non-zero non-commutative polynomial of degree $D$ over $X=\{x_1,\ldots,x_n\}$ computed by a $\Sigma\Pi^*\Sigma\Pi^*\Sigma$ circuit of size $s$. 
Then, $f$ can be transformed into an $s^{2}$-ordered power-sum polynomial while preserving its non-zeroness. In particular, there exists a small substitution automaton of size $O(s^6)$  that performs this transformation. 
\end{theorem}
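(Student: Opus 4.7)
The plan is to observe that the commutative polynomial $f^{(c)} \in \F[W]$ produced by the algorithm underlying Theorem~\ref{thm-depth-5-proof} already has the structure of an $s^2$-ordered power-sum polynomial, so the desired transformation is exactly the composed-substitution map constructed there, with no additional work needed.

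First, I would recall the final shape of $f^{(c)}$ coming out of Steps~(1)--(3). After Step~(1) the polynomial has been rewritten (modulo a spurious part handled by the coefficient-modification step) as $\hat{f}_1 = \sum_{i \in [s]} \prod_{j \in [D_2]} \hat{Q}_{ij}$, where each $\hat{Q}_{ij}$ is an $s$-ordered power-sum polynomial in the $\xi$-variables; in particular every surviving non-commutative monomial has the block-ordered shape $\prod_{j} \xi_1^{i_{j1}} \xi_2^{i_{j2}} \cdots \xi_s^{i_{js}}$. Step~(2) (Lemma~\ref{poly-proj}) reduces each product to at most $s-1$ non-commutative factors, and Step~(3) (Lemma~\ref{com-conversion}, together with its generalization in Remark~\ref{com-trans-gen}) replaces the remaining non-commutative $\xi$-occurrences by commutative ones while preserving non-zeroness.

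Next, I would inspect the commutative transformation of Step~(3) in detail. For each of the at most $s$ retained product positions $j$ and each of the $s$ original variables $\xi_k$, a fresh commutative variable $w_{jk}$ is introduced, and the $\xi_k$ appearing inside the $j$-th block is replaced by $w_{jk}$. This yields $|W| = s \cdot s = s^2$ commutative variables, and every monomial of $f^{(c)}$ takes the form
\[
\prod_{j=1}^{s} w_{j1}^{i_{j1}} w_{j2}^{i_{j2}} \cdots w_{js}^{i_{js}}.
\]
Ordering the $s^2$ variables lexicographically by $(j,k)$, i.e.\ $w_{11}, w_{12}, \ldots, w_{1s}, w_{21}, \ldots, w_{ss}$, every such monomial is written as a strictly increasing product of these $s^2$ variables (some exponents possibly zero). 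By Definition~\ref{powersum-def} and the convention stated in the remark just after it, this is exactly the definition of an $s^2$-ordered power-sum polynomial over $W$.

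Finally, non-zeroness is already built in: Theorem~\ref{thm-depth-5-proof} guarantees that $f^{(c)} \not\equiv 0$ if and only if $f \not\equiv 0$. The main point to verify carefully is that the block-ordered structure is preserved throughout the pipeline, which requires that Step~(3) use \emph{distinct} commutative variables $w_{jk}$ for different factor positions $j$, so that exponents of the original $\xi_k$ coming from different blocks never collapse into each other; this is exactly how the substitution in Lemma~\ref{com-conversion} is set up. The only other subtlety is that the spurious contributions from Step~(1), after coefficient modification, still fall inside the same ordered-block form (they are products of $\xi$-patterns in the sense of Definition~\ref{e-pattern}), and hence under the same substitution they also map to monomials of the required $s^2$-ordered power-sum shape.
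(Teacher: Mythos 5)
Your proposal is correct and matches the paper's argument: the paper itself establishes Theorem~\ref{f-to-powersum-d5} via the preceding remark, observing that the output $f^{(c)}$ of Theorem~\ref{thm-depth-5-proof} is already an $s^2$-ordered power-sum polynomial because the commutative variables $W$ introduced in Lemma~\ref{com-conversion} carry a block index that fixes their position in the ordered product. Your elaboration (ordering lexicographically by the pair of product-position and $\xi$-index, checking that the spurious part also consists of $\xi$-pattern products and thus maps into the same ordered shape) is just an unpacking of that same observation.
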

 In other words, for each n.c. input variable $x_i$ (where $i \in [n]$) in  $f$, there exists an  $O(s^6)$-dimensional matrix—acting as a transition matrix of a substitution automaton of size $O(s^6)$.  By evaluating $f$ on these matrices, the polynomial $f$ is transformed into an $s^{2}$-ordered power-sum polynomial while preserving its non-zeroness (in one particular entry of the resulting matrix).

\subsubsection{An Automaton for Theorem \ref{f-to-powersum-d5}}
\label{automaton-f-to-powersum-d5}
We can envision a substitution automaton $\mathcal{A}$ for Theorem \ref{f-to-powersum-d5} as follows.
By applying the matrix composition Lemma \ref{composing-substitution-mat}, we can combine the substitution matrices obtained from Steps (1) through (3), along with the modifications to coefficients, into a single substitution matrix  $\mathbf{M}=(\mathbf{M_{x_1},M_{x_2},\ldots,M_{x_n}})$ of dimension $O(s^6)$. We then evaluate the polynomial as $\mathbf{O}=f(\mathbf{M_{x_1},M_{x_2},\ldots,M_{x_n}})$.

The output of the substitution automaton is defined as the sum of several entries of the matrix $\mathbf{O}$ (refer to the polynomial $f^{(c)}$ defined in the proof of  Theorem \ref{thm-depth-5-proof}). 

It is crucial to note that each entry of the matrices in $\mathbf{M}=(\mathbf{M_{x_1},M_{x_2},\ldots,M_{x_n}})$ is a monomial over $Y\sqcup Z\sqcup \zeta \sqcup \chi\sqcup W$, where $Y\sqcup Z$ are commutative variables from Step (1),  and $\zeta\sqcup \chi$ are commutative variables from Step (2), while $W$ contains commutative variables from Step (3). 

As noted in Steps (1) and (2), we can replace the commutative variables in 
$Y\sqcup Z\sqcup \zeta \sqcup \chi$ with scalars without losing the non-zeroness of the output polynomial (by the Polynomial Identity Lemma). 

 After these replacements, each entry of the matrices in $\mathbf{M}=(\mathbf{M_{x_1},M_{x_2},\ldots,M_{x_n}})$ transforms into scalar multiples of variables over $W$. We denote the resulting matrices as $\mathbf{M'}=(\mathbf{M'_{x_1},M'_{x_2},\ldots,M'_{x_n}})$. 

We can construct a substitution automaton $\mathcal{A}$ such that the substitution matrix for the variable $x_i$ is given by the matrix $\mathbf{M'_{x_i}}$, where the entries are scalar multiples of variables in $W$.  These entries correspond to transitions that substitute a n.c. variable with a scalar multiple of a variable in $W$. This automaton $\mathcal{A}$ effectively transforms $f$ into an $s^{2}$-ordered power-sum polynomial $f^{(c)}$ while preserving its non-zeroness.

We can view the resulting $s^{2}$-ordered power-sum polynomial $f^{(c)}$ as a n.c. polynomial over $W$. This idea is crucial for developing black-box randomized polynomial identity testing (PIT) for circuits of larger depths using induction.

 It is important to note that the monomials of $f^{(c)}$ do not correspond to a single entry of the output matrix $\mathbf{O}=f(\mathbf{M'_{x_1},M'_{x_2},\ldots,M'_{x_n}})$. Instead, they represent the sum of several entries, as indicated in the polynomial $f^{(c)}$ defined in the proof of Theorem \ref{thm-depth-5-proof}. 
 Effectively, the column numbers of these entries form the set of accepting states for the new automaton $\mathcal{A}$, with row 1 serving as the starting state of this automaton.

\section{Black-Box Randomized PIT for Small Depth +-Regular Circuits}
In this section, we present an efficient randomized black-box polynomial identity testing (PIT) algorithm for polynomials computed by small-depth $+$-regular circuits. The main result of this section is the following theorem.
\begin{theorem}
\label{main-thm}
 Let $f$ be a non-commutative polynomial of degree $D$ over $X=\{x_1,\ldots,x_n\}$, computed by a $+$-regular circuit of size $s$ and depth $d$.  Then, $f\not\equiv 0$  if and only if $f$ is not identically zero on $\mathbb{M}_{N}(\mathbb{F})$, where $N=s^{O(d^2)}$ and $|\F|$ is sufficiently large.
\end{theorem}
\subsection{Transforming $f$ into an ordered power-sum polynomial}
To demonstrate Theorem \ref{main-thm}, we first establish that the polynomial $f$  can be transformed into a more structured n.c. polynomial that facilitates the testing for non-zeroness. Specifically, we show that $f$ can be converted into a $c$-ordered power-sum polynomial while preserving its non-zeroness, where  $c$  is a function of the depth $d$ and size $s$. To prove this conversion, we need a generalization of Lemma \ref{gen-proj};  For completeness, we state this generalized lemma (see Lemma \ref{gen-proj-highdepth}). Its proof follows a similar structure to that of Lemma \ref{gen-proj}. We note this as a remark here.

\begin{remark}
\label{step1-f-hat-non-zero-gen}
The proof of Lemma \ref{gen-proj} is based on the observation that if each polynomial $Q_{ij}$ (where $i \in [s]$ and $j \in [D_2]$, computable by a depth-3 +-regular circuit) can be converted into a n.c. polynomial $\hat{Q}_{ij}$ while preserving non-zeroness, then this lemma asserts that this transformation can be extended to the entire polynomial computed by the depth-5 circuit. Specifically,  each $Q_{ij}$ can be converted into the n.c. polynomial $\hat{Q}_{ij}$, leading to the creation of the n.c. polynomial $\hat{f}$, while maintaining the non-zeroness of the entire polynomial.
This can be extended to higher depths as follows: 

Suppose
\begin{enumerate}
    \item $f \in \F \langle x_1,\ldots,x_n\rangle$ is computed by the sum of the products of $Q_{ij}$ polynomials, that is $$f=\sum\limits_{i \in [s]}\prod\limits_{j \in [D_2]} Q_{ij}$$
    where the degrees of $Q_{ij}$ are the same for all $i \in[s]$ and $j\in[D_2]$ and are computed by depth ($d-2$) $+$-regular circuits of size at most $s$.
    \item Each $Q_{ij}$ individually can be converted into a n.c. polynomial $\hat{Q}_{ij}$ (which is an ordered power-sum polynomial) while preserving its non-zeroness. That is, $Q_{ij} =0 $ if and only if $\hat{Q}_{ij}=0$.
\end{enumerate}

 Then this transformation can be extended to the entire polynomial $f$ computed by the depth-$d$ circuit as follows: each $Q_{ij}$ in $f$ can be individually converted to $\hat{Q}_{ij}$ to create the n.c. polynomial $\hat{f}$ from $f$, maintaining the non-zeroness of the entire polynomial. 

\end{remark}

The proof of this Lemma \ref{gen-proj-highdepth} can be found in Appendix \ref{appendix:proof of Q_ij sparsification-general-depth}

\begin{lemma} \label{gen-proj-highdepth}
Let $$f=\sum_{i \in [s]}\prod_{j \in [D_2]} Q_{ij}$$ be a non-commutative polynomial over $X=\{x_1,\cdots,x_n\}$, computed by a  depth $d$ $+$-regular circuit of size $s$, where each $Q_{ij}$ is  computed by a depth $(d-2)$  $+$-regular circuit of size at most $s$.
Let $D_1\geq s-1$ denote the degree of the polynomial $Q_{ij}$, for  $i\in[s],j\in[D_2]$. \\
Suppose that any non-commutative polynomial $P$, computed by a depth $(d-2)$  $+$-regular circuit, can be converted into a non-commutative polynomial $\hat{P}$ (which is an ordered power-sum polynomial) while preserving non-zeroness. Then, each $Q_{ij}$ polynomial in $f$ can be converted to $\hat{Q}_{ij}$ to create the n.c. polynomial $\hat{f} =\sum_{i\in [s]} \prod_{j\in[D_2]}\hat{Q}_{ij}$. Then, if  $f\neq 0$ then $\hat{f}\neq 0$ maintaining the non-zeroness of the entire polynomial $\hat{f}$. That is, $f= 0$ if and only if $\hat{f}=0$.
\end{lemma}

\begin{remark}
\label{step1-non-homogeneous-remark}
   Unlike in Lemma \ref{gen-proj}, the \( \hat{Q}_{ij} \) polynomials in Lemma \ref{gen-proj-highdepth} are not necessarily homogeneous. One of the hypotheses of the lemma states that any polynomial \( P \), computed by a depth-\((d-2)\) \( + \)-regular circuit, can be transformed into an ordered power-sum polynomial \( \hat{P} \) while preserving non-zeroness. However, \( \hat{P} \) itself may be non-homogeneous. Importantly, the proof of Lemma \ref{gen-proj-highdepth} does not depend on whether \( \hat{P} \) is homogeneous.

\end{remark}

We will use this generalization to transform polynomials computed by larger depth $+$-regular circuits.  It is important to note that, like Lemma \ref{gen-proj}, Lemma \ref{gen-proj-highdepth} is primarily aimed at establishing the non-zeroness of $\hat{f}$. However, in our approach, we cannot use a small automaton to directly transform $f$ into $\hat{f}$ without introducing a spurious part, $F$, as we did in the depth-5 case. In other words, when we apply a substitution automaton to transform $f$ the result is $\hat{f}+F$, where $F$ represents the spurious part.

 We have the following theorem, which asserts that any non-zero polynomial \( f \), computed by a depth-\( d \) circuit, can be transformed into an ordered power-sum polynomial. The size of the automaton that performs this transformation is analyzed in \cref{automaton-size-power-sum-transform-depth-d}.

\begin{theorem}
\label{f-to-ps-gend}
Let $f$ be a non-zero non-commutative polynomial of degree $D$ over $X=\{x_1,\ldots,x_n\}$ computed by a $+$-regular circuit of size $s$ and depth $d$. Let $d^+$ be the number of addition (i.e., $\sum$) layers in the circuit. 

Then, $f$ can be transformed into a $s^{(d^+-1)}$-ordered power-sum polynomial while preserving its non-zeroness.
\end{theorem}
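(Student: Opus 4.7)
The plan is to prove Theorem \ref{f-to-ps-gend} by induction on the number $d^+$ of addition layers. The base case $d^+ = 2$ is depth-3 $\Sigma\Pi^*\Sigma$, for which the substitution automaton of \cite{AJMR19} (combined with the small-degree handling of \cref{small-degree-case}) already transforms $f$ into an $s$-ordered power-sum polynomial while preserving non-zeroness; the case $d^+ = 3$ is Theorem \ref{f-to-powersum-d5}. For the inductive step, assume the claim for $d^+ - 1$ and write
\begin{equation*}
 f \;=\; \sum_{i \in [s]} \prod_{j \in [D]} Q_{ij},
\end{equation*}
where each $Q_{ij}$ is computed by a $+$-regular sub-circuit with $d^+ - 1$ addition layers. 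By induction, each $Q_{ij}$ admits a substitution automaton (of dimension $s^{O((d^+ - 1)^2)}$) transforming it into an $s^{d^+ - 2}$-ordered power-sum polynomial over a common set of $s^{d^+ - 2}$ variables, while preserving non-zeroness.

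I then replay the four-part pipeline of the depth-5 proof in this abstract setting. In Step 1, I build a substitution automaton analogous to Figure \ref{fig1} that nondeterministically guesses boundaries between consecutive $Q_{ij}$'s and, at each position, applies the inductively supplied transformation to the current $Q_{ij}$. By the generalization stated in Remark \ref{step1-f-hat-non-zero-gen}, the output splits as $\hat{f} + F$ with $\hat{f} = \sum_i \prod_j \hat{Q}_{ij}$ where each $\hat{Q}_{ij}$ is $s^{d^+ - 2}$-ordered power-sum; the analogues of Claims \ref{obs-good-mon} and \ref{obs-bad-mon} --- the sum of exponents inside each ``block'' matches the degree $D_1$ of a $Q_{ij}$ in the structured part and fails to do so in the spurious part --- guarantee $\hat{f} + F \not\equiv 0$. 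Step 2 invokes the generalized product-sparsification lemma (Remark \ref{prod-spars-gen}) to retain at most $s - 1$ of the $\hat{Q}_{ij}$ factors in each product as non-commutative, treating the rest as commutative via Claim \ref{ord-pow-sum-c-nc} extended as in Remark \ref{k-ord-coeff-comm-polys}. Step 3 applies the generalized commutative-transformation lemma (Remark \ref{com-trans-gen}) to a sum of at most $s$ products of at most $s - 1$ many $s^{d^+ - 2}$-ordered power-sum polynomials, producing a commutative polynomial in $(s-1)\cdot s^{d^+ - 2} = O(s^{d^+ - 1})$ fresh variables while preserving non-zeroness; reinterpreted as a non-commutative polynomial over those variables, it is $s^{d^+ - 1}$-ordered power-sum, as required. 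If the structured and spurious commutative contributions happen to cancel, the coefficient-modification argument of Lemma \ref{lem-modp-counting} applies without change, since the degree-pattern signature from Step 1 still separates the two classes and can be checked by a small modulo-counting automaton that reweights $\hat{f}$ and $F$ distinctly before Steps 2 and 3.

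Finally, Lemma \ref{composing-substitution-mat} collapses the entire sequence of substitutions --- the inductively obtained automaton for each $Q_{ij}$, the Step 1 boundary-guessing automaton, the Step 2 sparsification automaton, the Step 3 commutative-transformation automaton, and the coefficient-modification automaton --- into a single matrix substitution per input variable. Each level of the recursion contributes a factor of $s^{O(d^+)}$ to the composed dimension, so after $d^+$ recursive layers the total dimension is $s^{O((d^+)^2)} = s^{O(d^2)}$, matching Theorem \ref{main-thm}. The main obstacle I anticipate is making Step 1 work inductively in a black-box-friendly way: the concrete automaton of Figure \ref{fig1} exploited the specific $\Sigma\Pi^*\Sigma$ form of the inner factors, whereas here one only has an abstract non-zeroness-preserving transformation for each $Q_{ij}$ and must still engineer the outer automaton so that (i) boundary crossings between consecutive $Q_{ij}$'s can be nondeterministically detected even though their exponential degree $D_1$ is unknown, and (ii) the structured and spurious outputs of the combined automaton remain separable by a simple, automaton-checkable exponent-sum signature. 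Once this structured/spurious dichotomy is re-established at every level of the recursion, the sparsification, commutative transformation, and coefficient modification carry over essentially verbatim from the depth-5 argument.
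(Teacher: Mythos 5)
Your proposal follows essentially the same route as the paper: induction on the number of addition layers, decomposing $f=\sum_i\prod_j Q_{ij}$, applying the inductive hypothesis to each $Q_{ij}$, and then replaying the generalized versions of Step~1 (Remark~\ref{step1-f-hat-non-zero-gen}), product sparsification (Remark~\ref{prod-spars-gen}), and commutative transformation (Remark~\ref{com-trans-gen}), with matrix composition via Lemma~\ref{composing-substitution-mat}. The only presentational difference is that you fold the structured-vs-spurious split and the coefficient-modification step directly into the inductive argument, whereas the paper states the theorem's proof at the level of the three non-zeroness-preserving conversions and defers the spurious-part bookkeeping and dimension accounting to the subsequent ``Automaton for Theorem~\ref{f-to-ps-gend}'' subsection --- your version is the more careful one, and the obstacle you flag (re-establishing automaton-detectable boundaries at each recursive level) is exactly what the ordered power-sum structure of the $\hat{Q}_{ij}$ output is designed to resolve.
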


\begin{proof}
 The proof is by induction on the number of $\sum$ layers in the circuit. In the rest of the proof,  $+$-regular circuits are referred to as circuits. The number of $\sum$ layers in the circuit is denoted as 
 $d^+=(d+1)/2$, and we refer to $d^+$ as $+$-depth of the circuit.
 \begin{itemize}
     \item {\bf Base Cases: } The theorem holds for the base cases: $d^+=1$ (linear forms), and $d^+=2$ (as shown in \cite{AJMR19}) and $d^+=3$ as established in Theorem \ref{f-to-powersum-d5}. 
     \item {\bf Inductive Step: }
        \begin{itemize}
            \item {\bf Induction Hypothesis: } For the induction hypothesis, we assume that the theorem holds for any polynomial $g$ computed by a circuit of size $s$ and $+$-depth $(d^+-1)$. 
            Specifically, this means that $g$ can be converted to an \emph{$s^{(d^+-1)}$-ordered power-sum polynomial} while preserving non-zeroness.
        \end{itemize}

 Now, suppose $f$  is computed by a $+$-depth $d^+$ circuit of size $s$. 
 We can express this polynomial as:
$$f=\sum_{i \in [s]}\prod_{j \in [D_2]} Q_{ij}.$$

Here, each $Q_{ij}$ (for $i\in[s],j\in[D_2]$) can be computed by a $+$-depth $(d^+-1)$  circuit of size at most $s$. That is, each $Q_{ij}$  can be computed by a depth-($d-2$) +-regular circuit.

We denote the degree of $Q_{ij}$ polynomials as $D_1$. 
Therefore, $f$ is a sum of products of polynomials computed by $+$-depth $(d^+-1)$  circuits of size at most $s$. By induction hypothesis, 
for all $i \in [s], j \in [D_2]$, each $Q_{ij}$ can be converted into an 
\emph{$s^{(d^+-2)}$-ordered power-sum polynomial} $\hat{Q}_{ij}$ while preserving non-zeroness. As noted in Remark \ref{step1-non-homogeneous-remark}, $\hat{Q}_{ij}$ polynomials can be non-homogenous in general.

Next, we need to establish that these individual conversions preserve the non-zeroness of the transformed polynomial. By Lemma \ref{gen-proj-highdepth}, this can be established. Specifically, the n.c. polynomial 
 $\hat{f}_1$  defined as
$$\hat{f}_1=\sum_{i \in [s]}\prod_{j \in [D_2]} \hat{Q}_{ij},$$

where each $\hat{Q}_{ij}$ polynomial is a \emph{$s^{(d^+-2)}$-ordered power-sum polynomial}, maintains non-zeroness.


Following this, the polynomial $\hat{f}_1$ can be product sparsified in a manner analogous to the depth-5 case.  As $\hat{f}_1$ polynomial is represented as a sum of $s$ number of products of ordered power-sum polynomials, thus Lemma \ref{poly-proj} can be used to product sparsify the polynomial $\hat{f}_1$.
 Due to the structured nature of $\hat{Q}_{ij}$ polynomials (ordered power-sum polynomials), boundaries can be identified by a small automaton.  
Thus, we can construct an automaton $\mathcal{A}_2$ similar to the one in Figure \ref{automaton-product-sparsification}. We consider the output of $\mathcal{A}_2$ on $\hat{f}_1$ and we can replace the commutative variables from this resulting polynomial with scalars, yielding a new n.c. polynomial $\hat{f}_2$ while maintaining non-zeroness.

Finally, the polynomial $\hat{f}_2$ can be converted into a commutative polynomial while maintaining non-zeroness, similar to the depth-5 case.  As $\hat{f}_2$ is represented as a sum of products of a small number of ordered power-sum polynomials, thus Lemma \ref{com-conversion} can be used to convert $\hat{f}_2$ into a commutative polynomial $\hat{f}_3$  that preserves non-zeroness.

 Boundaries of $\hat{Q}_{ij}$ polynomials can be identified by a small automaton due to their structured nature. Thus, we can construct an automaton 
$\mathcal{A}_3$ similar to one in Figure \ref{automaton-step3}. 
We consider the output  of $\mathcal{A}_3$ on $\hat{f}_2$. The resulting commutative polynomial is denoted $\hat{f}_3$.
The commutative polynomial $\hat{f}_3$ is an \emph{$s^{(d^+-1)}$-ordered power-sum polynomial}. 

Thus, we have shown that a polynomial $f$ computed by a circuit of size $s$ and $+$-depth $d^+$ can be transformed into a $s^{(d^+-1)}$-ordered power-sum polynomial while preserving its non-zeroness.
 \end{itemize}
\end{proof}

\subsubsection{Analyzing the Automaton Size for Theorem \ref{f-to-ps-gend}}
\label{automaton-size-power-sum-transform-depth-d}
Now we analyze the size of the automaton that transforms the given polynomial $f$ into an ordered power-sum polynomial in Theorem \ref{f-to-ps-gend}.

Before we proceed further, we recall that the result of Step 1 transformation in depth-5 case is a homogeneous n.c. polynomial $\tilde{f}=\hat{f}_1+F_1$. As noted the main advantages of Step 1 are 
\begin{enumerate}
    \item $\hat{f}_1$ is an ordered power-sum polynomial
    \item boundaries of $\hat{Q}_{ij}$ polynomials (that is boundaries of monomials belonging to $\hat{Q}_{ij}$) can be  efficiently identified using an automaton
\end{enumerate}

However, after the product sparsification step (Step 2), the resulting polynomial \( \hat{f}_2 + F_2 \) could become non-homogeneous. This is because the spurious part \( F_1 \) from Step 1 is a sum of the product of \( \xi \)-patterns (see Definition \ref{e-pattern}), and we cannot guarantee anything about the degree of each \( \xi \)-pattern in the monomials of \( F_1 \). But in each monomial in \( \hat{f}_1 \), each \( \xi \)-pattern has the same degree. In each monomial of the polynomial \( \tilde{f} = \hat{f}_1 + F_1 \), the product sparsification step (Step 2) just selects some number of \( \xi \)-patterns as non-commutative, and the rest of the \( \xi \)-patterns as commutative. Therefore, the resulting polynomial \( \hat{f}_2 + F_2 \) can be non-homogeneous in general (after removing commutative variables by scalar substitutions).

In Step 3, we simply convert the polynomial \( \hat{f}_2 + F_2 \) into an ordered power-sum polynomial using fresh variables. These variables can be considered as non-commutative for subsequent applications using induction. Thus, the resulting ordered power-sum polynomial \( \hat{f}_3 + F_3 \) can be non-homogeneous in general. Moreover, the coefficient modification step only updates the coefficient of the monomial (without changing any variables in it), thus it will not affect the degree.

The n.c. polynomial \( \hat{f}_2 + F_2 \) obtained after the Step 2 transformation in the depth-5 case can be expressed as follows. Let \( K \) represent the degree of \( \hat{f}_2 \). After the product sparsification step, the n.c. degree becomes \( K = s \times D_1 \), where \( s \) is the number of terms in the sum and \( D_1 \) is the degree of the individual polynomials \( Q_{ij} \).

We can write \( \hat{f}_2 + F_2 \) as:

\begin{equation}
\label{redefine-good}
    \hat{f}_2 + F_2 = \left(\hat{f}_2 + F^{K}_2\right ) + F^{\neq K}_2
\end{equation}

where:

\begin{itemize}
    \item \( F^{K}_2 \) represents the sum of all monomials in \( F_2 \) that have degree exactly \( K \).
    \item \( F^{\neq K}_2\) includes all the monomials in \( F_2 \) that have a degree different from \( K \).
\end{itemize}

Now, observe that \( \hat{f}_2 \) and \( F^{\neq K}_2 \) are distinct, meaning their monomials do not interfere with each other's coefficients. This is because the degree of the monomial does not match. This distinction remains true even after Step 3, as the transformation from Step 2 is bijective.

On the other hand, the monomials in \(  F^{K}_2 \) could potentially affect the coefficients of monomials in \( \hat{f}_2 \). Specifically, if \( \hat{f}_2 +  F^{K}_2 = 0 \), this will hold true even after Step 3. To preserve non-zeroness, we can use the coefficient modification step (after Step 1), which ensures that \( \hat{f}_2 +  F^{K}_2\) maintains non-zeroness throughout the process.

For higher depths, we need to redefine the notion of "good" and "spurious" parts. This is because we use only the degree to distinguish between the good and spurious parts.
For example, consider the polynomial computed by a depth-7 circuit, which can be expressed as:

\[
g = \sum_{i \in [s]} \prod_{j \in [D_3]} P_{ij}
\]

where \( P_{ij} \) is a polynomial computed by a depth-5 circuit. Let \( f \) be \( P_{ij} \) for some \( i \in [s] \) and \( j \in [D_3] \).

To apply the results from the depth-5 case to depth-7 circuits, we define the "good" part of the polynomial \( P_{ij} \) as \( \hat{P}_{ij} = \hat{f}_2 + F^{K}_2 \). 
This is because we use only the degree to distinguish between the good and spurious parts, the monomials in \( F^{K}_2 \) are included in the good part for applying these results to higher depths.
Specifically, the "spurious" part of the Step 1 transformation of the polynomial computed by the depth-7 circuit includes all monomials whose degrees are not equal to \( K \). Similar to the depth-5 case, in both the good and spurious parts each monomial after Step 1 is a product of ordered power-sum monomials.

Let  $$f=\sum_{i \in [s]}\prod_{j \in [D_2]} Q_{ij}$$  be a polynomial computed by a depth-$d$ circuit of size $s$. We denote the number of addition (i.e., $\sum$) layers in the circuit by  $d^+$. In particular,  $Q_{ij}$ can be computed by a depth ($d-2$) circuit.

 Suppose we have substitution matrices  $\mathbf{M}=(\mathbf{M_{x_1},M_{x_2},\ldots,M_{x_n}})$ that transforms each $Q_{ij}$ polynomials into an $s^{(d^+-2)}$-ordered power-sum polynomial while preserving its non-zeroness. We define that this \( s^{(d^+-2)} \)-ordered power-sum polynomial includes only the monomials where the degrees match. Any remaining monomials, which were obtained using a depth \( (d-2) \) circuit, will be included in the spurious part $F_1$ of the depth-\( d \) polynomial $f$.

 Let $\mathcal{A}$ be the corresponding substitution automaton with these substitution matrices  $\mathbf{M}$.

We can then construct an automaton to perform Step (1) of our method, transforming $f$ into $$f_1=\hat{f}_1+F_1,$$ as follows. In \cref{automaton-f-to-powersum-d5}, we described how to construct a substitution automaton that converts a polynomial $f$ computed by a depth-5 circuit into an $s^{2}$-ordered power-sum polynomial $f^{(c)}$ while preserving its non-zeroness.  We can treat $f^{(c)}$ as a n.c. polynomial by treating variables introduced in Step 3 as non-commutative.

Given that $f$ is a sum of products of $Q_{ij}$ polynomials and that the boundaries of these $Q_{ij}$ polynomials are unknown, we can modify the substitution automaton $\mathcal{A}$ to guess the boundaries of each 
$Q_{ij}$. This involves adding transitions to the starting state of automaton $\mathcal{A}$ (similar to the automaton depicted in Figure \ref{fig1}). 
For example, we need to add transitions from each accepting state of the automaton $\mathcal{A}$  to its starting state. We denote the new substitution automaton by $\mathcal{A'}$

Due to the uncertainty in identifying the boundaries, the polynomial \( f \), computed by a depth-\( d \) circuit, transforms into the n.c. polynomial
$$ f_1 = \hat{f}_1 + F_1, $$
where \( F_1 \) includes both the spurious sub-monomials resulting from incorrect boundary guesses made by the automaton \( \mathcal{A}' \) for \( Q_{ij} \), as well as the ordered power-sum monomials (see Definition \ref{e-pattern}) produced by \( \mathcal{A}' \) where the degrees do not match (see, for example, Equation \ref{redefine-good}). Since the degrees of these sub-monomials do not match those of \( \hat{Q}_{ij} \), any monomials containing these sub-monomials will be part of \( F_1 \). Thus, we can use claims similar to Claims \ref{obs-bad-mon}, \ref{obs-good-mon} to distinguish between the good and spurious parts.

Here,

$$\hat{f}_1=\sum_{i \in [s]}\prod_{j \in [D_2]} \hat{Q}_{ij},$$ 

where each $\hat{Q}_{ij}$ being an $s^{(d^+-2)}$-ordered power-sum polynomial. The polynomial $f_1$ preserves non-zeroness.

The advantage of $\hat{f}_1$  is that the boundaries of $\hat{Q}_{ij}$ can be effectively identified by the automaton. Subsequently, similar to the depth-5 case, Steps (2), (3), and coefficient modifications can be applied to the polynomial $f_1$. Using Lemma \ref{composing-substitution-mat}, the resulting substitution matrices at each step can be combined to form a single matrix for each input variable of the polynomial $f$. 

When we evaluate the polynomial $f$ using these substitution matrices, the output is an $s^{(d^+-1)}$-ordered power-sum polynomial while preserving its non-zeroness. As in the depth-5 case, the output is generally spread across several entries of the output matrix due to the automata used having multiple accepting states.

\subsection*{Size of the substitution automaton $\mathcal{A}'$}
Let $\hat{f}_1$ be the structured part of the polynomial obtained after Step (1), applied to the polynomial $f$, which is computed by a
depth-$d$ circuit of size $s$.  Recall that $\hat{f}_1$ is a sum of products of $s^{(d^+-2)}$-ordered power-sum polynomials.

\begin{itemize}
    \item {\bf Automaton size for Step 2:} The size of the substitution automaton that product-sparsifies the polynomial $\hat{f}_1$ is $(4s-2)$. 
    It is very similar to the substitution automaton shown in Figure \ref{automaton-product-sparsification}.  We only need to add more transitions to this figure because $\hat{f}_1$  is a sum of products of 
    $s^{(d^+-2)}$-ordered power-sum polynomials, where $s^{(d^+-2)}$ could be greater than $s$. Thus, there are more variables than in the depth-5 case. Let the resulting polynomial be $\hat{f}_2$ with no commutative variables.

    \item {\bf Automaton size for Step 3:} The size of the substitution automaton that transforms $\hat{f}_2$ into a commutative polynomial, specifically into an $s^{(d^+-1)}$-ordered power-sum polynomial, is $s^{(d^+-1)}$.  This automaton is quite similar to that used in the depth-5 case (see Figure \ref{fig2}).
  
    \item {\bf Automaton size for coefficients modification step:} 
    The size of the substitution automaton that filters out one spurious monomial from the polynomial $F_1$ (obtained by Step 1) is at most $14s$.
    This is due to the prime number $p$ required for this step being bounded by $\leq 4.4 s$ (see Lemma \ref{prime-lemma}). Additionally,  we need 3 states to process the $\xi$-patterns (ordered power-sum monomial, see Definition \ref{e-pattern}).  This automaton construction is similar to the automaton illustrated in Figure \ref{xi-pattern consumption} in the depth-5 case. We refer to each ordered power-sum monomial as a \( \xi \)-pattern, where \( \xi \) denotes the set of variables used in the definition of an ordered power-sum polynomial (see Definition \ref{powersum-def}).

    The size of the substitution automaton that modifies the coefficients of the polynomial $\hat{f}_1 + F_1$ (also obtained by Step 1) is bounded by $4.4s^{(d^+-1)}+3$. This is analogous to Figure \ref{RemainderNFA} in the depth-5 case. Each $\xi$-pattern (see Definition \ref{e-pattern}) of $\hat{f}_1$ contains $s^{(d^+-2)}$ variables, and the prime number $p$ required for this step is bounded by $4.4\log D$, where $D$ is the degree of $f$ and bounded by $2^s$. Therefore, with one starting state and two additional accepting states, the automaton size becomes 
    $4.4s^{(d^+-1)}+3$. 
     
    Consequently, the maximum size of the automaton for this step is $4.4 s^{(d^+-1)}+3$.

    \item {\bf Automaton size for Step 1:}  Let $$M_{n}= \sum\limits_{\substack{k \text{ odd} \\ k \leq n}} k-3$$ for $n\geq 5$. Note that $M_{n}$ represents the sum of all odd numbers up to $n$, excluding 3. 
    Assuming $d\geq 5$, the automaton for this step has a size of $O\left(18^{(d^+-3)}\cdot s^{M_{d-2}}\right)$. This matches the automaton size of Step 1 in the depth-5 case (where $d^+=3$), which is $s$.
 
\end{itemize}

Therefore, the final automaton size is the product of the automaton sizes in each step. This results in the following expression for the final automaton size:
\begin{eqnarray*}
   \text{Size} &=&O\left(18^{(d^+-3)}\cdot s^{M_{d-2}}\right)\times \left(4.4s^{(d^+-1)}+3\right)\times (4s-2) \times \left(s^{(d^+-1)}\right)\\
    &=&O\left(18^{(d^+-3)}\cdot s^{M_{d-2}}\right) \times \left(17.6s^d-8.8s^{(2d^+-2)}+12s^{d^+}-6s^{(d^+-1)} \right)   \text{ as $d=(2d^+-1$)} \\
    &=&O\left(18^{(d^+-2)}\cdot s^{M_{d}}\right) \\
    &=&O\left(18^{(d^+-2)}\cdot s^{\left(\frac{d+1}{2} \right)^2-3}\right)\\
    &=& s^{O\left(d^2\right)} 
\end{eqnarray*}

Since we can always assume that the top and bottom layers of the given $+$-regular circuit are addition layers (as noted in Remark \ref{top-bottom-plus}), the depth $d$ is always odd.

\begin{remark}
 Similar to the depth-5 case, we can address both small and high degree scenarios:  $D_1 \geq s-1$ and $D_1 < s-1$ (see \cref{small-degree-case}).
\end{remark}

It is important to note that while the automaton for Theorem \ref{f-to-ps-gend} utilizes an automaton that changes the coefficients of monomials, it does not turn a zero non-commutative polynomial  $f$ into a non-zero one. 
Suppose $f$  is computed by a depth $d$ circuit of size $s$. 
 We can express this polynomial as:
$$f=\sum_{i \in [s]}\prod_{j \in [D_2]} Q_{ij}.$$

Now, consider the case where two monomials generated by $f$ cancel each other out. Specifically, suppose a monomial  $m$ is generated by two different products, $\prod_{j \in [D_2]} Q_{i_1j}$ and $\prod_{j \in [D_2]} Q_{i_2j}$, where $i_1\neq i_2$ and they cancel each other. Let $M$ be the obtained substitution matrices as above. If we evaluate these monomials on $M$, the two different ways of generating the monomial transform in the same manner. Therefore, if they cancel each other before the transformation, they will also cancel each other after the transformation.

\subsection{Randomized Identity Test for Small Depth +-Regular Circuits}
 We are now ready to state and prove the main theorem. 
 
\begin{theorem}
 Let $f$ be a non-commutative polynomial of degree $D$ over $X=\{x_1,\ldots,x_n\}$, computed by a $+$-regular circuit of size $s$ and depth $d$. We denote the number of addition (i.e., $\sum$) layers in the circuit by  $d^+$. Then, $f\not\equiv 0$  if and only if $f$ is not identically zero on $\mathbb{M}_{N}(\mathbb{F})$, where $N=s^{O(d^2)}$ and $|\F|$ is sufficiently large.
\end{theorem}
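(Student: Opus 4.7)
The plan is to assemble the main theorem directly from the machinery developed earlier in the paper, rather than proving anything genuinely new. The backward direction is trivial: if $f \equiv 0$, then every matrix substitution evaluates to the zero matrix. So I will focus on the forward direction, namely showing that non-zero $f$ cannot vanish identically on $\mathbb{M}_{N}(\mathbb{F})$ for $N = s^{O(d^2)}$.

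The first step is to invoke Theorem \ref{f-to-ps-gend}, which tells us that $f$ can be transformed into an $s^{(d^+-1)}$-ordered power-sum polynomial $f^{(c)}$ while preserving non-zeroness. This transformation is realized through the sequence of substitution automata built inductively: Step 1 produces $\hat{f}_1 + F_1$, Step 2 is the product sparsification (by the generalization of Lemma \ref{poly-proj} noted in Remark \ref{prod-spars-gen}), Step 3 is the commutative transformation (by the generalization of Lemma \ref{com-conversion} noted in Remark \ref{com-trans-gen}), together with the coefficient modification of Lemma \ref{lem-modp-counting} used to handle the case when the structured and spurious commutative outputs cancel. The composition of these four stages into a single substitution for each $x_i$ is exactly what Lemma \ref{composing-substitution-mat} provides.

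The second step is to bound the dimension of the composed matrix. By the automaton-size analysis carried out just after Theorem \ref{f-to-ps-gend}, the automaton $\mathcal{A}'$ obtained at depth $d$ by combining the inductively constructed automaton from depth $d-2$ with the Step 2, Step 3, and coefficient-modification automata has dimension bounded by
\[
O\!\left(18^{(d^+-3)} \cdot s^{M_{d-2}}\right) \cdot (4.4 s^{(d^+-1)} + 3) \cdot (4s-2) \cdot s^{(d^+-1)} \;=\; s^{O(d^2)},
\]
which matches the claimed bound $N = s^{O(d^2)}$ after applying Remark \ref{top-bottom-plus} to assume $d$ odd. Composing via Lemma \ref{composing-substitution-mat} preserves this asymptotic dimension bound since the product of the individual dimensions is still $s^{O(d^2)}$.

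The final step is the actual identity test. Once the composed substitution is obtained, the resulting polynomial $f^{(c)}$, viewed as a commutative polynomial by Claim \ref{ord-pow-sum-c-nc} (which applies because $f^{(c)}$ is an ordered power-sum polynomial), is non-zero exactly when $f$ is. Since $f^{(c)}$ has degree at most $\mathrm{poly}(D, s^{O(d^2)})$ in polynomially many commutative variables, the DeMillo--Lipton--Schwartz--Zippel lemma (Lemma \ref{dlsz}) gives a randomized test, provided $|\mathbb{F}|$ is larger than this degree bound; if $\mathbb{F}$ is too small we pass to a sufficiently large extension. The sum-of-several-entries output format from the proof of Theorem \ref{thm-depth-5-proof} transfers verbatim: one of those entries, read off of the matrix $f(\mathbf{M_{x_1}}, \ldots, \mathbf{M_{x_n}})$, is non-zero, so $f$ is not a polynomial identity for $\mathbb{M}_N(\mathbb{F})$. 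There is no new technical obstacle here; the only delicate point is verifying that the inductive size bound, the four-stage matrix composition, and the handling of small-degree layers all compose cleanly at each level of the induction, which we have already addressed by appealing to Remarks \ref{step1-f-hat-non-zero-gen}, \ref{prod-spars-gen}, and \ref{com-trans-gen}.
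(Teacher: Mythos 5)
Your proposal is correct and follows essentially the same route as the paper: invoke Theorem \ref{f-to-ps-gend} to obtain an $s^{(d^+-1)}$-ordered power-sum polynomial, use the automaton-size analysis and Lemma \ref{composing-substitution-mat} to bound the matrix dimension by $s^{O(d^2)}$, and then apply Claim \ref{ord-pow-sum-c-nc} together with the DeMillo--Lipton--Schwartz--Zippel lemma. The paper's version is terser, but the ingredients and logical ordering are identical; your added remarks about field extensions and the sum-of-entries output are consistent with what the paper does implicitly.
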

\begin{proof}
We use Theorem \ref{f-to-ps-gend} to convert the n.c. polynomial $f$ into an $s^{(d^+-1)}$-ordered power-sum polynomial $f_{ops}$, while preserving its non-zeroness. As discussed above, there is a substitution automaton of size bounded by $s^{O(d^2)}$, which results in substitution matrices of dimension $s^{O(d^2)}$. By Claim \ref{ord-pow-sum-c-nc}, $f_{ops}$ can be treated as a commutative polynomial while preserving its non-zeroness. Using the DeMillo-Lipton-Schwartz-Zippel lemma, we can have a randomized PIT for depth $d$ $+$-regular circuit of size $s$ using matrices of dimension at most $s^{O(d^2)}$. This completes the proof of the theorem.
\end{proof}

\section{Discussion}

In this work, we presented a randomized polynomial-time algorithm for black-box polynomial identity testing (PIT) for non-commutative polynomials computed by $+$-regular circuits. Our method efficiently handles circuits of any constant depth. 
While our algorithm resolves the PIT problem for $+$-regular circuits of constant depth, the randomized identity testing problem for general non-commutative circuits, where the degree and sparsity can be exponentially large, remains an open question. We hope that some of the ideas developed in this work will prove useful in addressing the more general case.

\section*{Acknowledgement}
We would like to extend our sincere gratitude to Prof. Arvind (IMSc and CMI) for his valuable discussions. SR also expresses his sincere gratitude to Prof. Arvind and Prof. Meena Mahajan (IMSc) for facilitating a visit to IMSc, where part of this research was conducted.
We also acknowledge the assistance of ChatGPT in rephrasing sections of this paper to improve clarity and articulation. However, we affirm that no technical ideas or proofs presented in this paper were generated by ChatGPT.

\appendix
\section{\\Proof of Lemma \ref{gen-proj} (part of Step 1)} \label{appendix:proof of Q_ij sparsification}
\begin{proof}
The proof is by induction on $D_2$. Assume $f$ is non-zero. 
\begin{itemize}
    \item {\bf Base Case: $D_2=1$. }  
    For $D_2=1$, each $Q_{i1}$ for $i\in[s]$ can be computed by a $\Sigma\Pi^*\Sigma$ circuit. Since the sum  of $\Sigma\Pi^*\Sigma$ circuits is also a $\Sigma\Pi^*\Sigma$ circuit, it follows that a $\Sigma\Pi^*\Sigma$ circuit can compute the polynomial $f$. Furthermore, using the results from \cite{AJMR19}, the polynomial $f$ 
    can be transformed into another polynomial  $\tilde{f}$ preserving non-zeroness (see Equation \ref{g-hat}). Treating the block variables used in \cite{AJMR19} as non-commuting maintains the non-zeroness of $\tilde{f}$. Thus, the resulting polynomial remains non-zero. This completes the proof of the base case.\\
 \item {\bf Inductive Step: }
 \begin{itemize}
     \item {\bf Inductive Hypothesis:} We assume that the lemma holds for polynomials computed by a sum of products of at most $(D_2-1)$ $Q_{ij}$ polynomials. We will show it also holds for $D_2$.\\
 \end{itemize}
Consider the polynomial $f$ expressed as: $$f=\sum_{i \in [s]}Q_{i1}P_i,$$
 where $P_i=\prod_{j=2}^{D_2} Q_{ij}$.
We can expand $P_i$ into a sum of monomials, denoting the coefficient of a monomial $m$ in   $P_i$ by $[m]P_i$.  Thus, we can rewrite $f$ as:
 $$f=\sum_{i \in [s]}Q_{i1}\left(\sum_{m \in X^{D-D_1}}\left([m]P_i\right)\times m\right),$$
 where $D=D_1\times D_2$ is the degree of the $f$ and $\left([m]P_i\right) \in \F$. Since $f\not \equiv 0$, there exists a monomial $m \in X^{D-D_1}$ such that the right derivative $f^{(m)}$ does not vanish. We define $f^{(m)}$ as:
$$f^{(m)}=\sum_{i \in [s]}Q_{i1}\times \left([m]P_i\right),$$ 
 where $\left([m]P_i\right) \in \F$. \newline

Since $f^{(m)}\not\equiv 0$, it can be computed by a $\Sigma\Pi^*\Sigma$ circuit of size at most $s$. This is because the sum of $\Sigma\Pi^*\Sigma$ circuits is also a $\Sigma\Pi^*\Sigma$ circuit, it follows that a $\Sigma\Pi^*\Sigma$ circuit can compute the polynomial $f$.
 This reduces to the base case. 
Therefore, $f^{(m)}$ can be transformed into $\hat{f}^{(m)}=\sum_{i \in [s]}\hat{Q}_{i1}\times \left ([m]P_i\right)$ while preserving non-zeroness.\\

{\bf Handling Variables in $Q_{ij}$: } \\
We assume the linear forms in the $Q_{i,j}$ polynomials are indexed from 1 to $D_1$. For a subset $I \subseteq [D_1]$ of size $s-1$, we define $Q_{i,j,I}$ as the polynomial obtained from $Q_{i,j}$ by treating linear forms indexed by $I$ as non-commuting and the rest of the linear forms as commuting. Let $I=\{i_1,i_2,\ldots,i_{s-1}\}$ with $i_1< i_2<\ldots<i_{s-1}$.\\

Next, we introduce sets of variables: 
 $Z=\{z_1,\ldots,z_n\}$ and let $Y=\{y_{ij} \mid i \in [n] \text{ and } j \in [s-1] \}$, where variables in $Y$ and $Z$ are commutative. The set  $\xi=\{\xi_1,\xi_2,\cdots,\xi_s\}$ consists of non-commuting variables. \\

Replace the variable $x_i$ that appears in $[D_1]\setminus I$ with a \emph{new} commuting variable $z_i$. \\

The number of non-commuting variables in $Q_{i,j,I} \in \F[Z]\angle X$ is bounded by $|I|<s$, which is referred to as the n.c. degree of $Q_{i,j,I}$. Since this degree is small, $Q_{i,j,I}$  can be converted into a commutative polynomial while preserving its non-zeroness by replacing $x_i$ variables of linear form that appears at position $i_k \in I$ by $y_{k,i}$. Let $Q_{i,j,I}^{(c)}$ denote the resulting commutative polynomial, which lies in  $\F[Y\sqcup Z]$. 
To ensure all guesses of the set $I$ are distinct, additional variables $\xi=\{\xi_1,\xi_2,\cdots,\xi_{s}\}$ are introduced in \cite{AJMR19}.  We keep $\xi$ variables as \emph{non-commutative}.\\

The transformed n.c. polynomial $\hat{Q}_{i1}\in \F[Y\sqcup Z]\angle \xi$ can be expressed as:  
\[
 \hat{Q}_{i1}=\sum\limits_{I \subseteq [D_1], |I|=s-1}Q_{i,1,I}\times \xi_I
\]
where 
$\xi_I=\xi^{\ell_1}_1.\xi^{\ell_2-\ell_1}_2\cdots\xi^{D_1-\ell_{s-1}}_{s}$ with $I=\{\ell_1,\ell_2,\cdots,\ell_{s-1}\}$. The degree of the monomial $\xi_I$ is $D_1$. By Lemma 6.2 in \cite{AJMR19}, there exists a set of indices $I \subseteq [D]$, $|I| < s$, such that $Q_{i,1,I}\neq 0$ implying $\hat{Q}_{i1}\neq 0$. \\ 

Since $\hat{f}^{(m)}\in\F[Y \sqcup Z]\angle \xi$ is non-zero, the polynomial
$$f^{\dag}=\sum_{i \in [s]}\hat{Q}_{i1}P_i,$$ is also non-zero. Note that the derivative is only for analysis; we will not compute the right derivative of  $f$. \\

Now, expand the polynomial $\hat{Q}_{i1}$ as a sum of monomials in  $f^{\dag}$:
 $$f^{\dag}=\sum_{i \in [s]}\sum_{m \in \xi^{D_1}}\left([m]\hat{Q}_{i1}\right )m\times P_i.$$

The variables $\xi$ are the only non-commuting variables in $\hat{Q}_{i1}$,
and the n.c. degree of $\hat{Q}_{i1}$ is exactly $D_1$.
Since $ f^{\dag}\not \equiv 0$, there exists a monomial $m \in \xi^{D_1}$ such that the left derivative of $f^{\dag}$ with respect to $m$ does not vanish. Let $f^{\dag}_{(m)}$ be this polynomial. Again, note that this derivative is for analysis only, and we will not compute it. In this case, the coefficient $\left ([m]\hat{Q}_{i1}\right )$ is a polynomial in $\F[Y \sqcup Z]$.

$$f^{\dag}_{(m)}=\sum_{i \in [s]}\left ([m]\hat{Q}_{i1}\right )\times P_i,$$ $\left ([m]\hat{Q}_{i1}\right ) \in \F[Y \sqcup Z]$.\\

Clearly, $f^{\dag}_{(m)}\not\equiv 0$. To analyze this, we can simplify the polynomial   $f^{\dag}_{(m)}$ using the DeMillo-Lipton-Schwartz-Zippel Lemma for commutative polynomials to eliminate the commuting variables in 
$Y \sqcup Z$. We can replace these commuting variables with scalar substitutions from $\F$ or an extension field, preserving non-zeroness. Let $f^{''}$ be the resulting polynomial, which will have only variables non-commuting variables left:

$$f^{''}=\sum_{i \in [s]}\beta_i\times P_i=\sum_{i \in [s]}\beta_i\times\prod_{j=2}^{D_2} Q_{ij},  \text{ where $\beta_i \in \F$. } $$

Observe that $f^{''}\not\equiv 0$ and the number of product terms in the polynomial $P_i$  is exactly $(D_2-1)$. \\

By induction hypothesis, each $Q_{ij}$ in $f^{''}$ can be transformed into  $\hat{Q}_{ij}$ polynomial such that the resulting polynomial is non-zero.  
 Since $f^{\dag}_{(m)}\not\equiv 0$, the following polynomial is also non-zero:
$$\hat{f}=\sum_{i \in [s]}\prod_{j \in [D_2]} \hat{Q}_{ij}.$$ 
This completes the proof of the inductive step and therefore the lemma.
\end{itemize}
\end{proof}

\subsection{\\Proof of Lemma \ref{gen-proj-highdepth}} \label{appendix:proof of Q_ij sparsification-general-depth}

\begin{lemma} 
Let $$f=\sum_{i \in [s]}\prod_{j \in [D_2]} Q_{ij}$$ be a non-commutative polynomial over $X=\{x_1,\cdots,x_n\}$, computed by a  depth $d$ $+$-regular circuit of size $s$, where each $Q_{ij}$ is  computed by a depth $(d-2)$  $+$-regular circuit of size at most $s$.
Let $D_1\geq s-1$ denote the degree of the polynomial $Q_{ij}$, for  $i\in[s],j\in[D_2]$. \\
Suppose that any non-commutative polynomial $P$, computed by a depth $(d-2)$  $+$-regular circuit, can be converted into a non-commutative polynomial $\hat{P}$ (which is an ordered power-
sum polynomial) while preserving non-zeroness. Then, each $Q_{ij}$ polynomial in $f$ can be converted to $\hat{Q}_{ij}$ to create the n.c. polynomial $\hat{f} =\sum_{i\in [s]} \prod_{j\in[D_2]}\hat{Q}_{ij}$. Then, if  $f\neq 0$ then $\hat{f}\neq 0$ maintaining the non-zeroness of the entire polynomial $\hat{f}$. That is, $f= 0$ if and only if $\hat{f}=0$.
\end{lemma}
\begin{proof}
The proof is by induction on $D_2$. Assume $f$ is non-zero. 
\begin{itemize}
    \item \textbf{Base Case: \( D_2 = 1 \).}  
That is, the number of terms in each product is 1, and we have 
\[
f = \sum_{i \in [s]} Q_{i1}.
\]

For \( D_2 = 1 \), each \( Q_{i1} \) for \( i \in [s] \) can be computed by a depth-\((d-2)\) \( + \)-regular circuit. Consequently, the polynomial \( f \) can be computed by a depth-\((d-2)\) \( + \)-regular circuit of size at most \( s \). 
This is possible because the top gate of each circuit computing \( Q_{i1} \) is a sum gate (denoted as a \(\Sigma\)-gate), and the sum gates across all circuits can be combined into a single \(\Sigma\)-gate without increasing the depth of the circuit. Therefore, the overall circuit computing \( \sum_{i \in [s]} Q_{i1} \) retains a depth of \( d-2 \).

Using one of the hypotheses of the lemma: if any polynomial $P$, computed by a depth $(d-2)$  $+$-regular circuit, can be converted into a $\hat{P}$ (which is an ordered power-sum polynomial) while preserving non-zeroness, we conclude that 
the polynomial $f$ 
    can be transformed into another polynomial  $\hat{f}$ preserving non-zeroness. This completes the proof of the base case.\\
 \item {\bf Inductive Step: }
 \begin{itemize}
     \item {\bf Inductive Hypothesis:} We assume that the lemma holds for polynomials computed by a sum of products of at most $(D_2-1)$ $Q_{ij}$ polynomials. We will show it also holds for $D_2$.\\
 \end{itemize}
Consider the polynomial $f$ expressed as: $$f=\sum_{i \in [s]}Q_{i1}P_i,$$
 where $P_i=\prod_{j=2}^{D_2} Q_{ij}$.
We can expand $P_i$ into a sum of monomials, denoting the coefficient of a monomial $m$ in   $P_i$ by $[m]P_i$.  Thus, we can rewrite $f$ as:
 $$f=\sum_{i \in [s]}Q_{i1}\left(\sum_{m \in X^{D-D_1}}\left([m]P_i\right)\times m\right),$$
 where $D=D_1\times D_2$ is the degree of $f$ and $\left([m]P_i\right) \in \F$. Since $f\not \equiv 0$, there exists a monomial $m \in X^{D-D_1}$ such that the right derivative $f^{(m)}$ does not vanish. We define $f^{(m)}$ as:
$$f^{(m)}=\sum_{i \in [s]}Q_{i1}\times \left([m]P_i\right),$$ 
 where $\left([m]P_i\right) \in \F$. \newline

The polynomial $f^{(m)}\not\equiv 0$ can be computed by a depth $(d-2)$ +-regular circuit of size at most $s$.  This is because as each $Q_{i1}$ can be computed by a  depth $(d-2)$ +-circuit, then the sum $\sum_{i \in [s]}Q_{i1}\times \left([m]P_i\right)$ can also be computed by a depth-($d-2$) \(+\)-regular circuit. This is because the top gate of each circuit computing \( Q_{i1} \) is a sum gate (denoted as a \(\Sigma\)-gate), and the sum gates across all circuits can be combined into a single \(\Sigma\)-gate without increasing the depth of the circuit. Therefore, the depth of the overall circuit computing \( \sum_{i \in [s]}Q_{i1}\times \left([m]P_i\right) \) remains \(d-2\).

 This reduces to the base case.
Therefore, $f^{(m)}$ can be transformed into $$\hat{f}^{(m)}=\sum_{i \in [s]}\hat{Q}_{i1}\times \left ([m]P_i\right)$$ while preserving non-zeroness.
Since $\hat{f}^{(m)}$ is non-zero, the polynomial
$$f^{\dag}=\sum_{i \in [s]}\hat{Q}_{i1}P_i,$$ is also non-zero. Note that the derivative is only for analysis; we will not compute the right derivative of  $f$. \\
Note that the polynomial $\hat{Q}_{i1}$ is an ordered power-sum non-commutative polynomial over some new set of non-commutative variables. 
Now, expand the polynomial $\hat{Q}_{i1}$ as a sum of monomials in  $f^{\dag}$:
 $$f^{\dag}=\sum_{i \in [s]}\sum_{m \in \xi^{D_1}}\left([m]\hat{Q}_{i1}\right )m\times P_i.$$

Since $ f^{\dag}\not \equiv 0$, there exists a monomial $m$ such that the left derivative of $f^{\dag}$ with respect to $m$ does not vanish. Let $f^{\dag}_{(m)}$ be this polynomial. Again, note that this derivative is for analysis only, and we will not compute it. Let $\beta_i=[m]\hat{Q}_{i1}$. Note that $\beta_i\in \F$.

$$f^{\dag}_{(m)}=\sum_{i \in [s]}\beta_i\times P_i=\sum_{i \in [s]}\beta_i\times\prod_{j=2}^{D_2} Q_{ij}.$$

Clearly, $f^{\dag}_{(m)}\not\equiv 0$  and the number of product terms in the polynomial $P_i$  is exactly $(D_2-1)$. \\

By induction hypothesis, each $Q_{ij}$ in $f^{\dag}_{(m)}$ can be transformed into  $\hat{Q}_{ij}$ polynomial such that the resulting polynomial is non-zero.  
 Since $f^{\dag}_{(m)}\not\equiv 0$, the following polynomial is also non-zero:
$$\hat{f}=\sum_{i \in [s]}\prod_{j \in [D_2]} \hat{Q}_{ij}.$$ 
This completes the proof of the inductive step and therefore the lemma.
\end{itemize}
\end{proof}

\section{\\Proof of Lemma \ref{poly-proj} (Step 2: Product Sparsification Lemma)}
\label{app-proof}

We recall the following proposition from \cite{AJMR19} (see Proposition 3.1).  
\begin{proposition}[\cite{AJMR19}]
    Let $A : \mathbb{F}^n \to \mathbb{F}^n$ be an invertible linear transformation, and let $f(x_1, x_2, \dots, x_n) \in \mathbb{F} \langle X \rangle$ be any homogeneous polynomial of degree $d$. Let $A_j(f)$ be the polynomial obtained by applying the transform $A$ to the $j$-th position of monomials in $f$ for $j \in [d]$. Then, $f \neq 0$ if and only if $A_j(f) \neq 0$.
\end{proposition}

Here, we will present the proof of the \emph{product sparsification lemma}. This proof follows the proof of Lemma 6.2 in \cite{AJMR19}, but crucially uses Claim \ref{ord-pow-sum-c-nc}. Unlike \cite{AJMR19}, the key distinction in our setting is that the $\hat{Q}_{ij}$ polynomial can be non-homogeneous.

\begin{proof}
We use induction on $s$.

\begin{itemize}
    \item {\bf Base Case:} For $s=1$, we have $\hat{f}=\prod_{j \in [D_2]}\hat{Q}_{1j}$. By applying Claim \ref{ord-pow-sum-c-nc}, we can treat each n.c. polynomial $\hat{Q}_{1j}$ as commutative. Since the product of non-zero commutative polynomials is non-zero, we conclude that $J=\emptyset$ suffices.
    \item {\bf Induction hypothesis:} Assume the lemma holds for sums of products of \emph{$s$-ordered power-sum polynomials}  with at most $s-1$ terms. Recall
    $$\hat{f}_1=\sum_{i\in [s]} \prod_{j\in[D_2]}\hat{Q}_{ij},$$ where each $\hat{Q}_{ij}$ is an $s$-ordered power-sum polynomial of degree $D_1$ over $\xi=\{\xi_1,\xi_2,\ldots,\xi_s\}$.\\

    There are two directions to the proof. The backward direction is straightforward: by contrapositive, if we have a zero n.c. polynomial, then treating some of its variables as commuting does not yield a non-zero polynomial.

    Now, we prove the forward direction. 
    
It is important to note that the \( \hat{Q}_{ij} \) polynomials can, in general, be non-homogeneous (but in depth-5 case, each \( \hat{Q}_{ij} \) is a homogeneous polynomial). However, each product within the summand involves the same number of \( \hat{Q}_{ij} \) polynomials. Specifically, each summand is a product of exactly \( D_2 \) \( \hat{Q}_{ij} \) polynomials. Therefore, we will treat the \( \hat{Q}_{ij} \) polynomials as a whole, rather than expanding them as a sum of monomials or multiplying them with other \( \hat{Q}_{ij} \) polynomials in the product. This approach is valid because the boundaries of the \( \hat{Q}_{ij} \) polynomials can be efficiently identified, allowing us to select the entire \( \hat{Q}_{ij} \) polynomial as either non-commutative or commutative.

    Let $P_i=\prod_{j \in [D_2]} \hat{Q}_{ij}$ for all $i \in [s]$. Suppose $\sum_{i \in [s]}\beta_iP_i \neq 0$. We want to show there exists a subset $J \subseteq [D_2]$ of size at most $s-1$ such that  $\sum_{i \in [s]}\beta_iP_{i,J}\neq 0$, where $P_{i,J}=\left(\prod_{j\in \overline{J}} \hat{Q}_{ij}\right) \left(\prod_{j\in J} \hat{Q}_{ij}\right)$.\\

    Let $j_0 \in [D_2]$ be the smallest index such that $dim\{\hat{Q}_{1,j_0},\hat{Q}_{2,j_0},\cdots,\hat{Q}_{s,j_0}\}>1$. 
    If no such index $j_0$ exists, then all $P_i$ are scalar multiples of each other, leading to $\sum_{i \in [s]}\beta_iP_i =\alpha P_1$ for some non-zero $\alpha$, which reduces to the base case.\\

    Assume $j_0$ exists, we can renumber the polynomials such that 
    $\{\hat{Q}_{1,j_0},\hat{Q}_{2,j_0},\cdots,\hat{Q}_{t,j_0}\}$  are the only linearly independent polynomials at $j_0$,  where  $1<t \leq s$. Each polynomial  $P_i$     can be expressed as: 
  
\[
P_i = c_i P \cdot \hat{Q}_{i,j_0} \hat{Q}_{i,j_0+1} \cdots \hat{Q}_{i,D_2}, \quad i \in [t] \text{ and } c_i \in \mathbb{F}
\]
 \[
 P_i = c_i P \cdot \left( \sum_{\ell \in [t]} \alpha^{(i)}_{\ell} \hat{Q}_{\ell,j_0} \right) \hat{Q}_{i,j_0+1} \cdots \hat{Q}_{i,D_2}, \quad i \in [t+1,s] \text{ and } \alpha^{(i)}_{\ell} \in \mathbb{F}
 \]   

where $P$ is a product of $\Sigma\Pi^*\Sigma$ circuits.  Let $P'_i=c_i\prod^{D_2}_{j=j_0+1}\hat{Q}_{ij}, \text{ for } i \in [s].$\\

The polynomial $\hat{f}$ can be expressed as:
\[
\sum^{s}_{i=1}\beta_i P_i=P\times \left (\sum^{t}_{i=1}\beta_i \hat{Q}_{i,j_0}P'_i \right) + P\times \left (\sum^{s}_{i=t+1}\beta_i \hat{Q}_{i,j_0}P'_i \right).
\]
Note that:
\[
P\times\left( \sum^{s}_{i=t+1} \beta_i\hat{Q}_{i,j_0}P'_i\right )=P\times\left(\sum^{s}_{i=t+1} \beta_i\left (\sum_{\ell \in [t]}\alpha^{(i)}_{\ell}\hat{Q}_{\ell,j_0}\right)P'_i \right ).
\]

\begin{eqnarray*}
    \sum^{s}_{i=1}\beta_i P_i&=&P\times \left (\sum^{t}_{i=1}\beta_i \hat{Q}_{i,j_0}P'_i \right) + P\times\left(\sum^{s}_{i=t+1} \beta_i\left (\sum_{\ell \in [t]}\alpha^{(i)}_{\ell}\hat{Q}_{\ell,j_0}\right)P'_i \right )\\
    &=&   P \times \sum^{t}_{k=1}\hat{Q}_{k,j_0}P^{''}_k
\end{eqnarray*}
where $P^{''}_k=\beta_kP'_k+\beta_{t+1}\alpha^{(t+1)}_{k}P'_{t+1}+\beta_{t+2}\alpha^{(t+2)}_{k}P'_{t+2}+\cdots+\beta_{s}\alpha^{(s)}_{k}P'_{s}$, where $k\in[t]$.

Note that since $t>1$, each $P^{''}_k$ is a sum of at most $s-1$ polynomials and each of these polynomials is a product of linear forms.

Since we treat all polynomials $\hat{Q}_{k,j_0}, k\in[t]$ as n.c with each $\hat{Q}_{k,j_0}$ being \emph{$s$-ordered power-sum polynomials}, we can view them as linear forms as follows.  This is only for analysis. Let $$Y=\{y_{\overline{\ell}} \mid \ell=(\ell_1,\ldots,\ell_s) \text { and } \sum_i \ell_i\leq D\}$$ be the set of \emph{new} non-commuting variables, where $D$ is the degree of the given polynomial $\hat{f}_1$.  Each monomial $m_a$ with coefficient $\beta_{m_a}$ in $\hat{Q}_{k,j_0}$ can be expressed as $m_a=\xi_1^{a_1}\cdot \xi_2^{a_2}\cdots \xi_s^{a_s}$ and can represented by $\left(\beta_{m_a}\cdot y_{\overline{a}}\right )$.
  Thus,  we can envision  $\hat{Q}_{k,j_0}$ polynomial as a linear form over $Y$ variables, denoted by  $L_{k,j_0}$. Note that there is a bijection between monomials of $\hat{Q}_{k,j_0}$ and $L_{k,j_0}$ (which consists of only variables from  $Y$).\\

Since all polynomials $\hat{Q}_{k,j_0}$ for $ k\in[t]$ are linearly independent, it follows that the linear forms $L_{k,j_0}, k\in[t]$ are also linearly independent. \\

Let $A$ be an invertible linear transform such that $A:L_{k,j_0}\mapsto x_k$ for $k\in[t]$. The dimension of $A$ corresponds to the cardinality of $Y$, i.e., $|Y|$. Applying the map $A$ to the $j_0$-th factor of polynomial  $ \left(P \times \sum^{t}_{k=1}L_{k,j_0}P^{''}_{k}\right)$, we obtain:
$$R_{j_0}=\left(P \times \sum^{t}_{k=1}x_kP^{''}_{k}\right)$$

Since $\sum^{s}_{i=1}\beta_i P_i=\left(P \times \sum^{t}_{k=1}Q_{k,j_0}P^{''}_{k}\right)$ is non-zero, 
by applying Proposition 3.1 in \cite{AJMR19}, we conclude that $R_{j_0}\neq 0$. Consequently, there exists $k\in[t]$ such that $P^{''}_{k}\neq 0$.\\

Given $t>1$, $P^{''}_k$ is a sum of at most $s-1$ polynomials. By the induction hypothesis, there exists a subset $J'\subseteq [j_0+1, D_2]$ with size at most $s-2$ such that the resulting polynomial remains non-zero: $$P^{''}_{k,J'}\not=0$$ where

$$P^{''}_{k,J'}=\beta_kP'_{k,J'}+\beta_{t+1}\alpha^{(t+1)}_{k}P'_{t+1,J'}+\beta_{t+2}\alpha^{(t+2)}_{k}P'_{t+2,J'}+\cdots+\beta_{s}\alpha^{(s)}_{k}P'_{s,J'}.$$
 
 Let $J=\{j_0\}\cup J'$. We now show that the polynomial
 \[
 \sum^{s}_{i=1}\beta_i P_{i,J}=P^{(c)} \times \sum^{t}_{k=1}Q_{k,j_0}P^{''}_{k,J'}
 \]
remains non-zero, where $P^{(c)}$ is the commutative polynomial obtained by replacing $x_i$ by commuting variable $z_i$ in $P$. Since $P$ is a product of \emph{$s$-ordered power-sum polynomials} and by Claim \ref{ord-pow-sum-c-nc}, we can treat each of these \emph{$s$-ordered power-sum polynomials} as commutative, while preserving the non-zeroness of  $P$. 

Thus, it suffices to demonstrate that $\sum^{t}_{k=1}Q_{k,j_0}P^{''}_{k,J'}$ is non-zero. By Proposition 3.1 in \cite{AJMR19}, applying the linear transform $A$ to the first position of the polynomial $\sum^{t}_{k=1}Q_{k,j_0}P^{''}_{k,J'}$ yields $\sum^{t}_{k=1}x_kP^{''}_{k,J'}$. This sum is zero if and only if each $P^{''}_{k,J'}$ is zero. However, we established that there exists $k\in[t]$ such that $P^{''}_{k,J'}\neq 0$.\\

 This concludes the inductive step and completes the proof.
\end{itemize}
\end{proof}
 \subsection{Substitution Automaton for Product Sparsification (see \cref{step2-product-sparsification})}
 \label{automaton-product-sparsification}
 \begin{figure}[H]
    \centering
    \includegraphics[width=\linewidth]{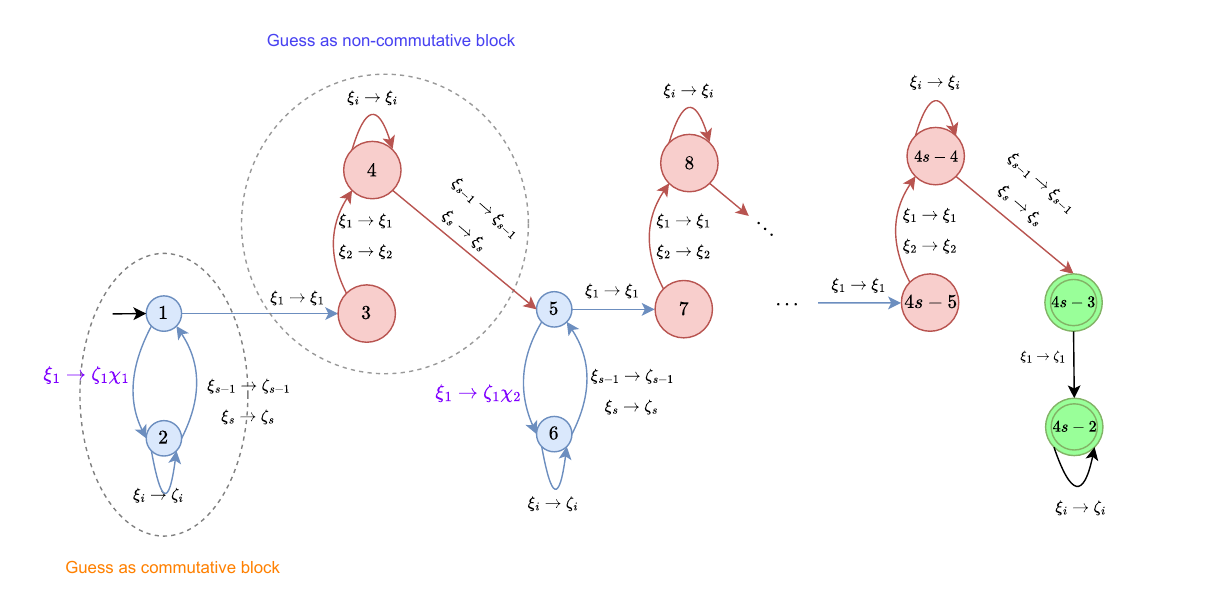}
    \caption{Substitution automaton for product sparsification}
    \label{aut-prod-spar}
\end{figure}

\subsection*{Description of the automaton in Figure \ref{aut-prod-spar}} 
Recall that the polynomial $\tilde{f}$ can be expressed as: $$\tilde{f}=\hat{f}_1+F_1.$$ where $$\hat{f}_1=\sum_{i\in [s]} \prod_{j\in[D_2]}\hat{Q}_{ij}.$$ 

According to product sparsification lemma \ref{poly-proj}, there exists an index set $I \subseteq [D_2]$ of size $s-1$ such that for each $i \in [s]$, considering only the $\hat{Q}_{ij}$ where $j \in [I]$, as non-commutative and the remaining $\hat{Q}_{ij}$  where $j \notin I$ as commutative preserves the non-zeroness of $\hat{f}$. Since $I$ is unknown, the automaton guesses the set $I$.

As previously mentioned (see Remark \ref{remark-product-of-e-pattern}), each monomial $m$ of the polynomial $\tilde{f}$ is a product of $\xi$-patterns (see Definition \ref{e-pattern}). This holds for both $\hat{f}_1$ and $F_1$. The automaton can identify the boundary of each 
$\xi$-pattern  because each $\xi$-pattern ends either with the variable
$\xi_{s-1}$ or $\xi_s$. 

For the structured polynomial $\hat{f}_1$, which is a sum of products of $s$-ordered power-sum polynomials, the boundary of each $\hat{Q}_{ij}$ can be identified. This allows us to explain the automaton's process of guessing the set $I$ in terms of the $\hat{Q}_{ij}$ polynomials.

The automaton applies its transformation to both   $\hat{f}_1$ and $F_1$.
For any given monomial  $m$ in $\tilde{f}$, the automaton classifies some of the sub-monomials ( i.e., $\xi$-patterns) as non-commutative and the rest as commutative. If the monomial
 $m$ belongs to  $\hat{f}_1$ it gives rise to set $\hat{Q}_{ij}$ polynomials that correspond to an index set $I$, that are considered non-commutative. 
 However, if  $m$ belongs to $F_1$, no such index set $I$ can be assigned, because the polynomial $F_1$ is unstructured. Nevertheless, for each monomial in the unstructured polynomial $F_1$, exactly $s-1$ $\xi$-patterns are considered non-commutative and the rest as commutative.
 We now explain the automaton's guessing process using the structured part $\hat{f}_1$.

In Figure \ref{automaton-product-sparsification}, the states labeled 
$4i+1$ (for $i\geq 0$) serve as guessing states. If a $\hat{Q}_{ij}$ polynomial is guessed as non-commutative by one of these guessing states,  it will be processed by one of the blocks marked as "Guess as non-commutative block". 
Similarly, if a $\hat{Q}_{ij}$ polynomial is guessed as commutative,  it will processed by one of the blocks marked as "Guess as commutative block".

It is crucial to note that if a $\hat{Q}_{ij}$ polynomial is guessed as non-commutative,  all its monomials are treated as non-commutative by replacing the non-commutative variable $\xi_i$  in this polynomial by the same non-commutative variable $\xi_i$. In particular, this $\hat{Q}_{ij}$ polynomial is fully treated as non-commutative before moving on to process the next factor,  $\hat{Q}_{i,j+1}$.

Similarly, if a $\hat{Q}_{ij}$ polynomial is guessed as commutative,  all its monomials are treated as commutative by replacing the non-commutative variable $\xi_i$ by the commutative variable $\zeta_i$. In particular, this $\hat{Q}_{ij}$ polynomial is fully treated as commutative before processing the next factor  $\hat{Q}_{i,j+1}$. 

When a  $\hat{Q}_{ij}$ polynomial is guessed as commutative,  the non-commutative variable $\xi_i$ appearing at the first position of each monomial in $\hat{Q}_{ij}$  is replaced by a commutative monomial $\zeta_i\chi_k$ where $k$  represents the count of previously guessed non-commutative $\hat{Q}_{i,k'}$ polynomials (for $k'<k$).The commutative $\chi_k$ variables are used to distinguish different guesses made by the automaton. 
If $\xi_i$ appears in positions other than the first, it will be replaced by the commutative variable $\zeta_i$. 

Given the structured nature of all $\hat{Q}_{ij}$ polynomials, the automaton can easily identify the boundaries of each $\hat{Q}_{ij}$, facilitating the guessing of the index set  $I$.

\section{Proof of Lemma \ref{com-conversion} (Step 3: Commutative Transformation)}
\label{proof:step3}
\begin{proof}
The degree of the n.c. polynomial $g$ is $s \times D$, where $D$  can be exponential in $s$. However, $g$ is in a more structured form as it is a sum of products of $s$ $s$-ordered power-sum polynomials.

To define the position of substrings within the monomial $m$, we consider $m$ as a string. Recall that $m$ can be thought of as a string since $m$ is n.c..

For the commutative transformation, we introduce $s^3$ fresh commutative variables  $W=\{w_{ij} \mid i \in[s^2] \text{ and } j \in[s]\}$. 
Each monomial $m=m_1.m_2\ldots m_s$ can be transformed as follows: for all $i \in [s]$, if  $m_i=\xi_1^{i_{1}}\xi_2^{i_{2}}\ldots\xi_s^{i_{s}}$, we convert it to  $m'_i=w_{(i-1)s+1,1}^{i_{1}}w_{(i-1)s+2,2}^{i_{2}}\ldots w_{(i-1)s+s,s}^{i_{s}}$.
Suppose the degree of $\xi_s$ in $m_i$ is 0 (i.e., $i_{s}=0$), then we 
skip the variable $w_{(i-1)s+s,s}$ and continue from $w_{is+1,1}$ for the next monomial $m_{i+1}$. This adjustment simplifies our automaton.

This conversion is implemented by the substitution automaton given in Figure \ref{fig2} and can be achieved using substitution matrices of dimension $s^2$.
Importantly, this process does not introduce new cancellations, as there is a bijection between the monomials in $g$ and $g^{(c)}$. 

Consider the mapping of monomials of $g$ into the monomials of $g^{(c)}$ by the substitution automaton in Figure \ref{fig2}. Note that every monomial of $g$  gets mapped into exactly one monomial in  $g^{(c)}$.  We  establish that this mapping is a bijection as follows:\\
\begin{itemize}
    \item {\bf One-to-One: }\\
Let $m$ and $m'$  be distinct monomials of $g$. Let  $m^{(c)}$ and $m'^{c}$ be the corresponding transformed commutative monomial (by the automaton). 
We can express these monomials as follows: $m=m_1.m_2\ldots m_s$ and $m'=m'_1.m'_2\ldots m'_s$.

We first convert the monomials $m$ and $m'$ as follows: for each $i\in[s]$, if $m_i$ does not include the variable $\xi_s$ (i.e., $i_s=0$), we append $\xi^0_s$ to $m_i$, 
 resulting in $m_i$ becoming $m_i\cdot\xi^0_s$. While $\xi^0_s=\epsilon$ (empty string), we retain it as a placeholder for clarity in our proof. We apply the same conversion to $m'$. We can refer to the modified monomials as $m$ and $m'$  again.

After this conversion, each monomial $m$ can be expressed as $m=m_1.m_2\ldots m_s$ 
where $m_i=\xi_1^{i_1}\xi_2^{i_2} \cdots \xi_s^{i_s}$ with $i_s\geq0$ and $i_j>0$ for $j\in[s-1]$. After these conversions, we still have $m\neq m'$.

Next, we consider any monomial $m$ of $g$ as a string over $\xi$ and break this string into segments by identifying maximal substrings containing the same variable
$\xi_i$ for $i\in[s]$. We encode the position of each segment within $m$ 
using the first index of the variables in $W$. It is important to note that in any monomial $m$ of $g$, there are at most $s^2$ such segments. In particular, following the conversion that adds $\xi^0_s$ when it is absent, there are exactly $s^2$  segments present.

 If $m\neq m'$, then there exists a segment $k \in[s^2]$ where the exponents of the variable in that segment differ between $m$ and $m'$. Assuming the variable in the $k$-th segment is $\xi_j$,  we conclude that the exponents of $w_{k,j}$ in $m^{(c)}$ and $m'^{c}$ are not the same. For any variable $w \in W$, we replace  $w^0$ by $1$ in the transformed commutative monomial.
 This shows that $m\neq m'$ implies $m^{(c)}\neq m'^{c}$.
\item {\bf Onto : }\\
Let $m^{(c)}$ be a monomial in $g^{(c)}\in \F[W]$. We first order the variables in $m^{(c)}$ according to the first index of the $W$ variables appearing in $m^{(c)}$.
Next, we break this into segments by identifying maximal substrings that contain  $W$ variables,  ensuring that the second subscript of these variables is non-decreasing, either from  $1$ to $s-1$ or from $1$ to $s$. Each segment can be represented in one of two forms:

(1) $w^{\ell_1}_{i,1}\cdot w^{\ell_2}_{i+1,2}\cdots w^{\ell_{s-1}}_{i+s-2,s-1}$ or \\
(2) $w^{\ell_1}_{i,1}\cdot w^{\ell_2}_{i+1,2}\cdots w^{\ell_s}_{i+s-1,s}$. 

We then convert each of these segments  into segments over the 
$\xi$ variables. For instance, if we take the segment $w^{\ell_1}_{i,1}\cdot w^{\ell_2}_{i+1,2}\cdots w^{\ell_{s-1}}_{i+s-2,s-1}$, we can convert it to $\xi^{\ell_1}_{1}\cdot \xi^{\ell_2}_{2}\cdots \xi^{\ell_{s-1}}_{s-1}$. 

This converts $m^{(c)}$ into a monomial $m$ over $\xi$ variables. It is clear that if we consider the output of the automaton in Figure \ref{fig2} on the monomial $m$, then it is the monomial  $m^{(c)}$. 
\end{itemize}

 Thus, the n.c. polynomial $g$ can be transformed into a commutative polynomial $g^{(c)}$ while preserving its non-zeroness. 
\end{proof}

\subsection{Substitution Automaton for Commutative Transformation}
\label{automaton-step3}
\begin{figure}[h]
\centering
\resizebox{\textwidth}{!}{ 
\begin{tikzpicture}
\node(pseudo) at (-1,0){}; 
\node(0) at (0,0)[shape=circle,draw, minimum size=1cm] {$q_0$}; 
\node(1) at (2.5,0)[shape=circle,draw, minimum size=1cm] {$q_1$}; 
\node(3) at (5,0)[shape=circle,draw, minimum size=1cm] {$q_{s-2}$}; 
\node(4) at (7.5,0)[shape=circle,draw, minimum size=1cm] {$q_{s-1}$};
\node(5) at (10.5,0)[shape=circle,draw, minimum size=1cm] {$q_s$};
\node(6) at (12.5,0)[shape=circle,draw, minimum size=1cm] {$q_{s^2-2}$};
\node(7) at (15.5,0)[shape=circle,draw,double,fill=green!40, minimum size=1cm] {$q_{s^2-1}$};

\path [->] 
    (0) edge node [above] {$\xi_2 \rightarrow w_{2,2}$} (1)
    (1) edge[dotted] node [above] {$\cdots$}  (3)
    (3) edge node [above] {$\xi_s \rightarrow w_{s,s}$} (4)
    (4) edge node [above] {$\xi_1 \rightarrow w_{s+1,1}$} (5)
    (5) edge[dotted] node [above] {$\cdots$} (6)
    (6) edge node [above] {$\xi_s \rightarrow w_{s^2,s}$} (7)

    (0) edge [loop above] node [above] {$\xi_1 \rightarrow w_{1,1}$} ()
    (1) edge [loop above] node [above] {$\xi_2 \rightarrow w_{2,2}$} ()
    (3) edge [loop above] node [above] {$\xi_{s-1} \rightarrow w_{s-1,s-1}$} ()
    (4) edge [loop above] node [above] {$\xi_s \rightarrow w_{s,s}$} ()
    (5) edge [loop above] node [above] {$\xi_1 \rightarrow w_{s+1,1}$} ()
    (6) edge [loop above] node [above] {$\xi_{s-1} \rightarrow w_{s^2-1,s-1}$} ()
    (7) edge [loop above] node [above] {$\xi_s \rightarrow w_{s^2,s}$} () 

    (pseudo) edge (0);

\draw (3) edge[->,bend right] node [below] {$\xi_1 \rightarrow w_{s+1,1}$} (5);
\draw (6) edge[->,bend right] node [below] {$\xi_{s-1} \rightarrow w_{s^2-1,s-1}$} (7);
\end{tikzpicture}
}

\caption{Automaton that converts n.c. variables to commutative variables}\label{fig2}
\end{figure}
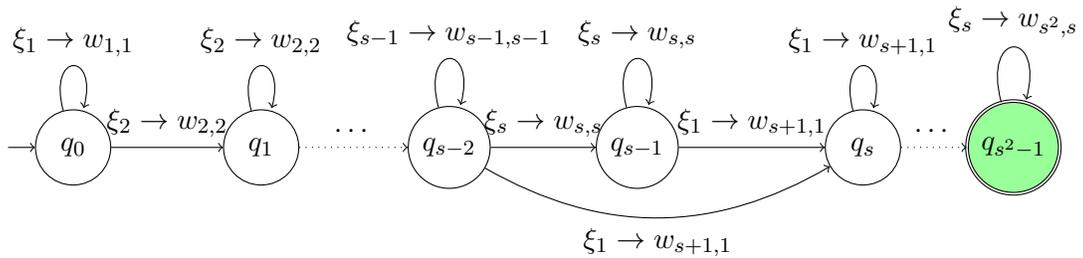

As an example, we provide the substitution matrix $M_{\xi_1}$ obtained from Figure \ref{automaton-step3} for the variable $\xi_1$.

\[\mathbf{M_{\xi_{1}}}= \begin{blockarray}{r*{8}{ >{}c}}
& q_0 & q_1 & \ldots & q_{s-2} & q_{s-1} & q_{s} & \ldots & q_{s^2-1}\\
\begin{block}{ >{\scriptstyle}r!{\,}(cccccccc)}
q_0 & w_{1,1} & 0 & \ldots & 0 & 0 & 0 & \ldots & 0 \\
q_1 & 0 & 0 & \ldots & 0 & 0 & 0 & \ldots & 0\\
\vdots & \vdots & \vdots & \ddots & \vdots & \vdots & \vdots &\ddots & \vdots \\
q_{s-2} & 0 & 0 & \ldots & 0 & 0 & w_{s+1,1} & \ldots & 0\\
q_{s-1} & 0 & 0 & \ldots & 0 & 0 & w_{s+1,1} & \ldots & 0\\
q_{s} & 0 & 0 & \ldots & 0 & 0 & w_{s+1,1} & \ldots & 0\\ 
\vdots & \vdots & \vdots & \ddots & \vdots & \vdots & \vdots & \ddots & \vdots\\
q_{s^2-1} & 0 & 0 & \ldots & 0 & 0 & 0 & \ldots & 0 \\
\end{block}
\end{blockarray}
\]

\section{ Proof of Lemma \ref{lem-modp-counting} (Coefficient Modification by Automaton)}
\label{proof:modp-counting}

 We recall the following proposition from  \cite{DESW11} (see Proposition 1).
\begin{proposition}
\label{prop-modp-automaton}
    Let $n \in \mathbb{N}$. Suppose we have two strings, $x$ and $y$ such that $|x|\neq|y|$ and $|x|,|y|\leq n$. There exists a DFA with a number of states bounded by $O(\log n)$ that can distinguish between the two strings. Specifically, when processing the strings $x$ and $y$  from the initial state, the DFA will reach different states for each string. This DFA essentially computes the string length modulo $p$.
\end{proposition}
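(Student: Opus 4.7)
The plan is a standard pigeonhole argument over primes. Set $d = \bigl||x|-|y|\bigr|$; by hypothesis $1 \le d \le n$, so $d$ has at most $\lfloor \log_2 n\rfloor$ distinct prime divisors. On the other hand, the first $\lfloor \log_2 n\rfloor + 1$ primes are all bounded by $O(\log n \cdot \log\log n)$ by the prime number theorem (or a crude iterated application of Bertrand's postulate), which is $O(\log n)$ up to the usual log-factor conventions. So at least one prime $p$ in this short list must fail to divide $d$, i.e.\ $|x|\not\equiv|y|\pmod p$.

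Given such a $p$, I would build the DFA $\mathcal{M}_p$ with state set $\{q_0,q_1,\ldots,q_{p-1}\}$, initial state $q_0$, and transitions $\delta(q_i,\sigma) = q_{(i+1)\bmod p}$ for every input symbol $\sigma$. A trivial induction on string length shows $\widehat{\delta}(q_0,z) = q_{|z|\bmod p}$, so $\mathcal{M}_p$ is exactly the ``length mod $p$'' counter. Since $|x|\not\equiv|y|\pmod p$, the machine ends in two different states on $x$ and $y$, yielding the desired distinguisher with $p = O(\log n)$ states.

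There is no real obstacle here: the only subtlety is the size estimate on the chosen prime. If a literal $O(\log n)$ (rather than $O(\log n \log\log n)$) bound were insisted upon, one would instead invoke a slightly sharper counting argument — e.g.\ noting that the product of the first $k$ primes grows like $e^{(1+o(1))p_k}$ (Chebyshev / Mertens), so $k=\Theta(\log n / \log\log n)$ primes suffice to exceed $n$, and a prime not dividing $d$ can be chosen from a still smaller list. For the use made of the proposition in the coefficient-modification step, the looser $O(\log n)$-state bound is more than enough, and this is what I would record.
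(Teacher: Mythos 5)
Your argument is correct and follows the same route the paper intends: find a small prime $p$ with $|x|\not\equiv|y|\pmod p$, then build the $p$-state ``length mod $p$'' counter DFA. The paper simply invokes Lemma~\ref{prime-lemma} (the \cite{SB96} bound $p\leq 4.4\log n$) for the existence of such a prime, whereas you rederive it from scratch---your first pigeonhole argument only yields $O(\log n\log\log n)$, but as you correctly observe, the Chebyshev/Mertens primorial estimate tightens this to the claimed $O(\log n)$, matching the cited lemma.
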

This above proposition relies on a lemma from \cite{SB96} (also see Lemma 1 from  \cite{DESW11}).
 \begin{lemma}
 \label{prime-lemma}
Let $n \in \mathbb{N}$. If $k\not=m$, and both $k$ and $m$ are less than or equal to $n$, then there is a prime  $p \leq 4.4\log n$ such that $k \not\equiv m$ (mod $p$).
\end{lemma}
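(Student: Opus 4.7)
The plan is a short contradiction argument driven by a Chebyshev-type lower bound on the product of small primes. Set $d = k - m$; by hypothesis $d$ is a nonzero integer with $|d| \leq n$. Suppose, toward a contradiction, that for \emph{every} prime $p \leq 4.4 \log n$ we have $k \equiv m \pmod{p}$, i.e., $p \mid d$. Since distinct primes are pairwise coprime, their product must then divide $d$, so
$$|d| \;\geq\; \prod_{p \leq 4.4 \log n} p.$$
Hence the whole lemma reduces to showing $\prod_{p \leq 4.4 \log n} p > n$, at which point we will have $|d| > n$, a contradiction.

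The core step is this primorial bound. Writing $\theta(x) = \sum_{p \leq x} \ln p$ for the first Chebyshev function, I need $\theta(4.4 \log n) > \ln n$ (here $\log$ is the base-$2$ logarithm used throughout the paper). I would invoke a classical Chebyshev-type inequality of the form $\theta(x) \geq c\,x$ valid for an absolute constant $c > 0$ and all $x$ above a small threshold. Such a bound is obtained elementarily from the observation that every prime $p \in (m, 2m]$ divides $\binom{2m}{m}$, combined with the estimate $\binom{2m}{m} \geq 2^{2m}/(2m+1)$; this yields $\theta(2m) - \theta(m) = \Omega(m)$ and, by telescoping, $\theta(x) = \Omega(x)$. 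The constant $4.4$ in the statement is calibrated precisely so that $4.4\,c > \ln 2$: it suffices to have $c > \ln 2/4.4 \approx 0.158$, which is much weaker than Chebyshev's original explicit constant (and a tiny fraction of the value $1$ given by the Prime Number Theorem).

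Plugging back, for $x = 4.4 \log n = 4.4 \log_2 n$ we get
$$\prod_{p \leq x} p \;=\; e^{\theta(x)} \;\geq\; e^{4.4 c \log_2 n} \;=\; n^{4.4c/\ln 2} \;>\; n,$$
contradicting $|d| \leq n$. Therefore some prime $p \leq 4.4 \log n$ does \emph{not} divide $d$, which is exactly the statement $k \not\equiv m \pmod p$.

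The only obstacle I foresee is a careful treatment of small $n$, where an asymptotic Chebyshev bound might not yet be tight enough. The cleanest remedies are (i) to use an explicit effective version such as Rosser--Schoenfeld, which gives $\theta(x) \geq c\,x$ from a small and explicit threshold, and (ii) to verify the inequality by direct inspection for the finitely many $n$ below that threshold. Since the statement is classical and is cited here from \cite{SB96,DESW11}, an alternative — and arguably preferable in the body of the paper — is simply to quote the result as a black-box fact.
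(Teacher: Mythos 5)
The paper does not actually prove this lemma: it is cited as a black-box fact from \cite{SB96} (see also Lemma 1 of \cite{DESW11}), so there is no in-paper proof to compare against. Your reduction to a primorial bound is the right one and is exactly the standard argument: if every prime $p \le 4.4\log_2 n$ divided $d=k-m$, then $|d|\ge\prod_{p\le 4.4\log_2 n}p$, and the remaining step is to verify $\prod_{p\le 4.4\log_2 n}p>n$, i.e.\ $\theta(4.4\log_2 n)>\ln n$, for which $\theta(x)\ge cx$ with any $c>\ln 2/4.4\approx 0.158$ suffices. Your calibration of the constant, your identification of $\log$ as base-$2$, and your remark about direct verification for small $n$ are all correct.

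However, the sketch you give for the Chebyshev lower bound $\theta(x)=\Omega(x)$ is wrong in direction. The fact that every prime $p\in(m,2m]$ divides $\binom{2m}{m}$ gives $\prod_{m<p\le 2m}p\le\binom{2m}{m}$, i.e.\ an \emph{upper} bound $\theta(2m)-\theta(m)\le\ln\binom{2m}{m}\le 2m\ln 2$. Combining that divisibility fact with the lower bound $\binom{2m}{m}\ge 4^m/(2m+1)$ does not yield $\theta(2m)-\theta(m)=\Omega(m)$; indeed, that difference is sometimes as small as $\Theta(m/\log m)$, so the claimed telescoping step fails. The standard elementary lower bound on $\theta$ runs differently: e.g.\ one observes that every prime power $p^{e_p}$ in the factorization of $\binom{2m}{m}$ satisfies $p^{e_p}\le 2m$; the primes $p>\sqrt{2m}$ therefore occur to the first power, so $\ln\binom{2m}{m}\le\theta(2m)+\pi(\sqrt{2m})\ln(2m)\le\theta(2m)+\sqrt{2m}\,\ln(2m)$, and pairing this with $\binom{2m}{m}\ge 4^m/(2m+1)$ gives $\theta(2m)\ge 2m\ln 2-O(\sqrt{m}\,\ln m)$. (Alternatively, Nair's elementary proof that $\mathrm{lcm}(1,\dots,n)\ge 2^{n-1}$ gives $\psi(n)\ge(n-1)\ln 2$, and one deduces a $\theta$ bound by subtracting the $O(\sqrt{x}\log x)$ contribution of higher prime powers.) With either of these substituted for your incorrect sketch, and the finitely many small $n$ checked directly as you already propose, the proof is complete; otherwise, quoting the lemma as the paper does is the cleanest remedy.
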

\subsection*{Proof of Lemma \ref{lem-modp-counting}:}
\begin{proof}
    Recall that $\tilde{f}=\hat{f}_1+F_1 \in\F\angle\xi$ is the polynomial obtained after Step (1).
It is important to note that any monomial in $\tilde{f}$, including all monomials in  $A^{\tilde{m}}_{{\hat{f}_1}}$ and $B_{F_1}^{\tilde{m}}$, is formed by concatenating $\xi$-patterns (see Defintion \ref{e-pattern}).

Consider a monomial $m_j=m_{j,1}m_{j,2} \cdots m_{j,N_j} \in B_{F_1}^{\tilde{m}}$
, where $m_{j,1},m_{j,2}, \ldots, m_{j,N_j}$ are $\xi$-patterns. While the degree of 
$m_j$ matches the overall degree $D=D_1\times D_2$ of the polynomial, we cannot assert whether $N_j$ is equal to $D_2$ or not (See Case 2 in \ref{two_cases}). However, we can affirm that there is at least one $\xi$-pattern in $m_j$ for which the sum of the exponents does not equal $D_1$.

For $B_{F_1}^{\tilde{m}}$, there are two possibilities:\\

In each of the two cases,  we construct an automaton $\mathcal{M}$ to evaluate the polynomial $\tilde{f}$ using substitution matrices obtained from the automaton.
Let $M_{\xi_i}$  denote the substitution matrix corresponding to the variable 
$\xi_i$ obtained from the automaton. We then evaluate the polynomial  $\tilde{f}$ 
on these matrices, leading to the resulting matrix: $ \tilde{f}(M_{\xi_1},M_{\xi_2},\ldots,M_{\xi_s})$. It is important to note that this can be expressed as: $$ \tilde{f}(M_{\xi_1},M_{\xi_2},\ldots,M_{\xi_s})=\hat{f}_1(M_{\xi_1},M_{\xi_2},\ldots,M_{\xi_s}) + F_1(M_{\xi_1},M_{\xi_2},\ldots,M_{\xi_s}).$$

\begin{itemize}
 \item {\bf Case 1:} There exists a monomial $m_l \in B_{F_1}^{\tilde{m}}$ such that  $N_l \neq D_2$. Let $\lambda \equiv D_2 \mod p$.\\
 
Let’s fix such a monomial $m_l \in B_{F_1}^{\tilde{m}}$ represented as $m_\ell=m_{\ell,1}m_{\ell,2} \cdots m_{\ell,N_\ell}$. As previously noted, an automaton can identify the boundaries between any two sub-monomials  $m_{\ell,i}$ and $m_{\ell,i+1}$ (for $i<N_\ell$) in the monomial $m_\ell$.
Therefore, for any natural number $p$ we can compute $N_\ell \mod p$ using an automaton.

Recall that for each monomial in $A^{\tilde{m}}_{{\hat{f}_1}}$, the number of sub-monomials (i.e., the number of \emph{$\xi$-patterns}) is exactly $D_2$. We can compute the number of \emph{$\xi$-patterns} modulo $p$ as boundaries between any two sub-monomials  $m_{\ell,i}$ and $m_{\ell,i+1}$ can be identified by the automaton.

Since $N_j,D_2\leq D$, by Proposition \ref{prop-modp-automaton}, there exists an automaton $\mathcal{M}$ with a number of states bounded by $O(\log D)$.  If we consider the output state of $\mathcal{M}$ when processing  $m_l$, it will differ from the output state of all monomials in $A^{\tilde{m}}_{{\hat{f}_1}}$.
The automaton $\mathcal{M}$ does not change the monomial;  it simply maps the n.c. variable $\xi_i$ to itself while computing the number of \emph{$\xi$-patterns} modulo $p$. Notably, all monomials in $A^{\tilde{m}}_{{\hat{f}_1}}$ reach the same state in the automaton, which we denote as $\lambda$ (with $\lambda=O(\log D)$).

By Proposition \ref{prop-modp-automaton}, at least one monomial $m_l \in B_{F_1}^{\tilde{m}}$  will not reach the state $\lambda$. 
If we consider the output of the automaton $\mathcal{M}$ in Figure \ref{xi-pattern consumption} on $\tilde{f}$ as $(q_0,q_\lambda)$-th entry of the resulting matrix, 
then the n.c. polynomial at this entry is given by 
$$f_\lambda=\hat{f}_1 + \tilde{F}_1$$ 
where the coefficient of the monomial $m_l$ in the polynomial $\tilde{F}_1$ is 0. 
Importantly, since $m_l\in B_{F_1}^{\tilde{m}}$ the coefficient of $m_l$ in $F$ is non-zero.

It is crucial to note that $f_\lambda\neq 0$ because $\hat{f}_1 \neq 0$ (by Lemma \ref{gen-proj}), and the non-zero monomials of $\tilde{F}_1$ are a subset of the non-zero monomials of $F_1$,  with no common non-zero monomials between $\hat{f}_1$ and $F_1$ (see Claims \ref{obs-good-mon} and \ref{obs-bad-mon}).

Additionally, while $\tilde{f}\neq f_\lambda$, all monomials of $A^{\tilde{m}}_{{\hat{f}_1}}$ are included in $f_\lambda$ with the same coefficients as in $\tilde{f}$. However, at least one monomial $m_l\in B_{F_1}^{\tilde{m}}$ is absent from $\tilde{F}_1$, thus from $f_\lambda$. Since only monomials in $A^{\tilde{m}}_{{\hat{f}_1}}\sqcup B_{F_1}^{\tilde{m}}$ are transformed into the commutative monomial $\tilde{m}$ during product sparsification and commutative transformation (Steps 2 and 3), performing these steps on $f_\lambda$ will yield a non-zero commutative polynomial, specifically ensuring that the coefficient of  $\tilde{m}$  is non-zero in the transformed commutative polynomial.\\

 \item {\bf Case 2:} For each monomial $m_\ell \in B_{F_1}^{\tilde{m}}$, $N_\ell=D_2$. Let $\lambda \equiv D_1 \mod p$.\\

Let's fix a monomial $m_\ell\in B_{F_1}^{\tilde{m}}$ expressed as $m_\ell=m_{\ell,1}m_{\ell,2} \cdots m_{\ell,D_2}$. For any monomial $m_\ell$ and $r\in[D_2]$, we define
 $$\mathcal{K}_{m_{\ell,r}}=\sum_{k=1}^{s} \ell_k.$$

We know that there exists an \emph{$\xi$-pattern}  $m_{\ell,r}$ in the monomial $m_\ell$, such that $\mathcal{K}_{m_{\ell,r}} \neq D_1$ by Claim \ref{obs-bad-mon}.
For all $m_t \in A^{\tilde{m}}_{{\hat{f}_1}}$, it follows that $\mathcal{K}_{m_{t,r}}=D_1$ for each $r \in [D_2]$ (as $N_t=D_2$) by Claim \ref{obs-good-mon}.

As previously noted, an automaton can identify the boundaries between any two sub-monomials  $m_{\ell,i}$ and $m_{\ell,i+1}$ (for $i<N_\ell$) in the monomial $m_\ell$.
Since we do not know which sub-monomial $m_{\ell,r}$ has the property $\mathcal{K}_{m_{\ell,r}} \neq D_1$, the automaton guesses $r \in [D_2]$ and computes $\left(\mathcal{K}_{m_{\ell,r}} \mod p\right)$ for a given natural number $p$. 

Given that  $\mathcal{K}_{m_{\ell,r}},D_1\leq D$, by Proposition \ref{prop-modp-automaton}, there exists an automaton $\mathcal{M}$ that computes $\left(\mathcal{K}_{m_{\ell,r}} \mod p\right)$. The automaton we construct has the number of states bounded by $O(s^2)$, that is $O(\log^2 D)$ as $D$ is exponential in $s$ for us. The output state of $\mathcal{M}$ when processing $m_l$ will differ from the output state of all monomials in $A^{\tilde{m}}_{{\hat{f}_1}}$.

As in Case 1, the automaton $\mathcal{M}$ does not modify the monomial; it simply computes the sum of the exponents of the $r$-th sub-monomial $m_{\ell,r}$ of the monomial $m_\ell$ modulo $p$. The substitution automaton $\mathcal{M}$ given in Figure \ref{RemainderNFA} does exactly this, with the initial state $q_0$ guessing the sub-monomial number $r\in[D_2]$ and the remainder of the automaton calculating $\left(\mathcal{K}_{m_{\ell,r}} \mod p\right)$.

For any monomial  $m_t \in A^{\tilde{m}}_{{\hat{f}_1}}$,  for every guess  $r\in[D_2]$,
 we have $\lambda \equiv \mathcal{K}_{m_{t,r}} \mod p$, where $\mathcal{K}_{m_{t,r}}=D_1$. Thus, when considering the output of the automaton $\mathcal{M}$ on the monomial $m_\ell$ as the sum of two entries $$M[q_0,q_{f_1}]+M[q_0,q_{f_2}],$$ 
 the coefficient $\alpha_{m_t}$ of the monomial $m_t$ is scaled by a factor of $D_2$.

 Conversely, for the above fixed monomial $m_\ell \in B_{F_1}^{\tilde{m}}$, there exists a guess $r\in[D_2]$ such that $\mathcal{K}_{m_{\ell,r}} \neq D_1$ (by Claim \ref{obs-bad-mon}). Therefore, this computation path will not reach either of the final states $q_{f_1}$ or $q_{f_2}$. Thus, the coefficient $\alpha_{m_\ell}$ of the monomial $m_\ell$ is scaled by at most a factor of $(D_2-1)$ in $M[q_0,q_{f_1}]+M[q_0,q_{f_2}]$.

Then the resulting output n.c. polynomial is given by $$f_\lambda=D_2\cdot\hat{f}_1 + \tilde{F}_1,$$ where the coefficient of the monomial $m_l$ in the polynomial $\tilde{F}$ is at most $\alpha_{m_\ell}\cdot (D_2-1)$. 

It is important to note that $f_\lambda\neq 0$. While $\tilde{f}\neq f_\lambda$, 
all monomials of $A^{\tilde{m}}_{{\hat{f}_1}}$ are present in $f_\lambda$ with 
their coefficients scaled by exactly $D_2$. However, at least one monomial $m_l\in B_{F_1}^{\tilde{m}}$  has its coefficient in $\tilde{f}$ scaled by at most $(D_2-1)$.

Since only monomials in $A^{\tilde{m}}_{{\hat{f}_1}}\sqcup B_{F_1}^{\tilde{m}}$ are transformed into the commutative monomial $\tilde{m}$ after product sparsification and commutative transformation (Steps 2 and 3), performing these steps on $f_\lambda$, will yield a non-zero commutative polynomial. In particular, the coefficient of $\tilde{m}$ is non-zero.

 As only monomials in $A^{\tilde{m}}_{{\hat{f}_1}}\sqcup B_{F_1}^{\tilde{m}}$
are transformed into the commutative monomial $\tilde{m}$ after product sparsification and commutative transformation (Steps 2 and 3),  if we carry out product sparsification and commutative transformation on $f_\lambda$, the resulting commutative polynomial is non-zero. In particular, the coefficient of $\tilde{m}$ is non-zero in the transformed commutative polynomial.
\end{itemize}
This completes the proof of the lemma.
\end{proof}

\subsection{Substitution Automaton for Case 1}
\begin{figure}[H]
  \includegraphics[width=\linewidth]{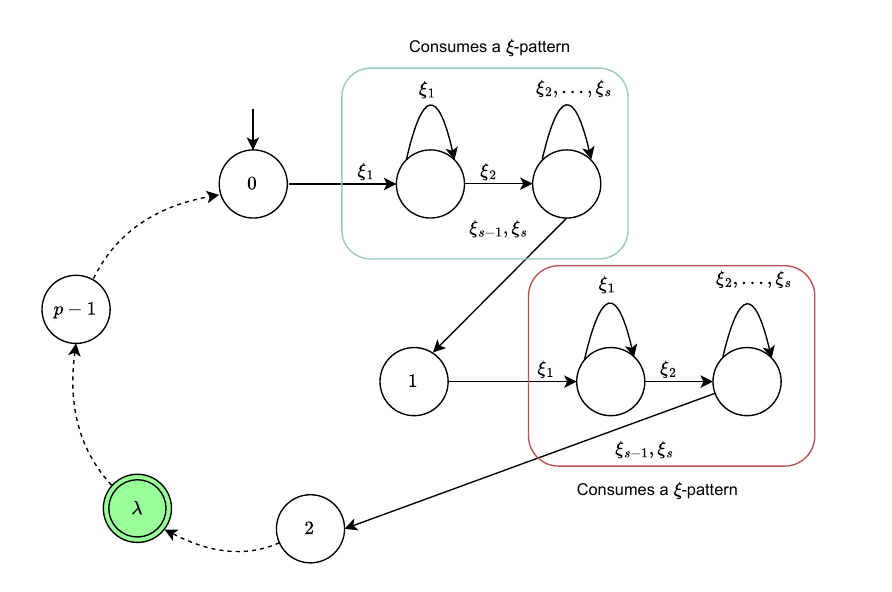}
  \caption{Substitution automaton that computes number of $\xi$-patterns modulo $p$.}
  \label{xi-pattern consumption}
\end{figure}
\subsection*{Description of the automaton in Figure \ref{xi-pattern consumption}}
Recall that the polynomial $\tilde{f}$ can be expressed as: $$\tilde{f}=\hat{f}_1+F_1.$$ 

Given that each monomial of $\tilde{f}$ is a product of $\xi$-patterns (see Definition \ref{e-pattern}),
 we can explain the automaton's workings in terms of these $\xi$-patterns. The automaton in Figure \ref{xi-pattern consumption} computes the number of $\xi$-patterns modulo $p$ in a given monomial $m$. Each $\xi$-pattern is fully processed by a section of the automaton marked as \emph{Consumes a $\xi$-pattern} before the next one is considered. Through this mechanism, the automaton keeps track of the number of $\xi$-patterns encountered, maintaining a count modulo $p$.
 
\subsection{Substitution Automaton for Case 2}

\begin{figure}[H]
  \includegraphics[width=\linewidth]{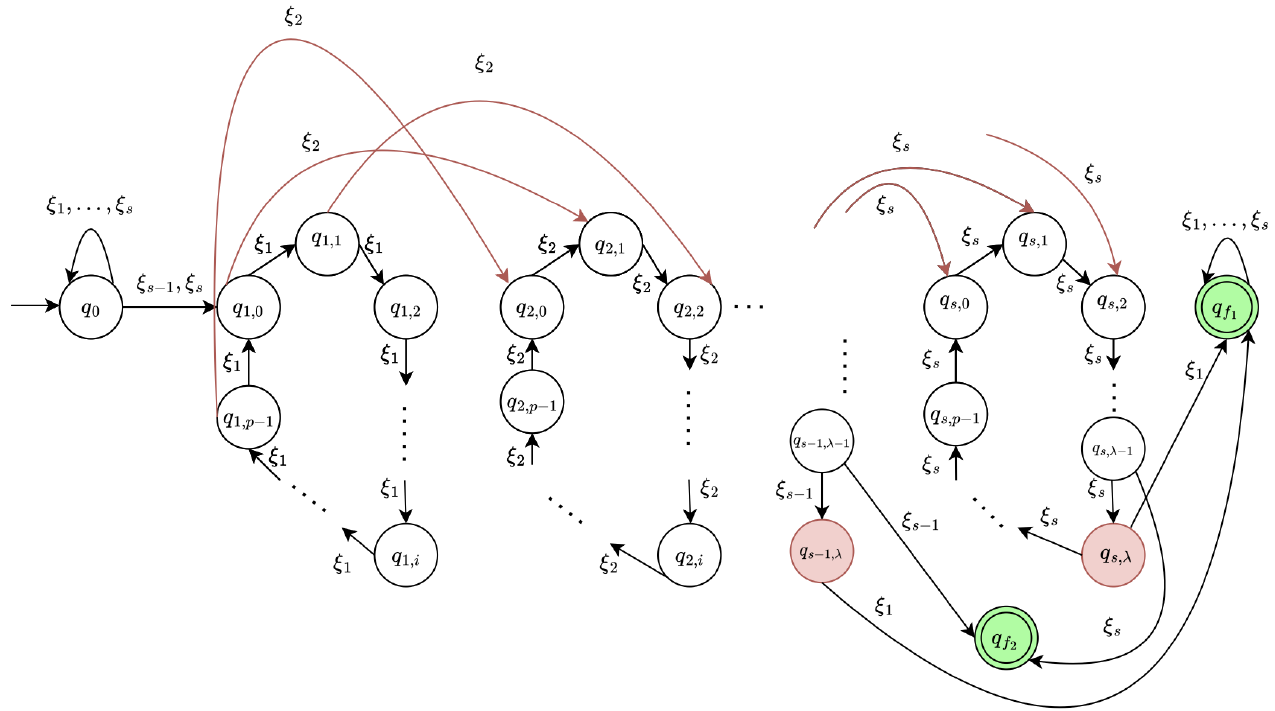}
  \caption{Modulo counting automaton that changes coefficients. Here, $\lambda\equiv D_1 \mod p$}
  \label{RemainderNFA}
\end{figure}
\subsection*{Description of the automaton in Figure \ref{RemainderNFA}}
Recall that the polynomial $\tilde{f}$ can be expressed as: $$\tilde{f}=\hat{f}_1+F_1.$$ 

Given that each monomial of $\tilde{f}$ is a product of $\xi$-patterns (see Definition \ref{e-pattern}),
 it is logical to explain the automaton's guesses in terms of these $\xi$-patterns. 

The substitution automaton in Figure \ref{RemainderNFA} guesses a $\xi$-pattern to compute the sum of the exponents of this guessed $\xi$-pattern modulo a prime number $p$. The initial state $q_0$ is the only guessing state in the automaton, where the automaton selects a $\xi$-pattern for calculating the sum of exponents mod $p$.  

Suppose the sum of the exponents for the guessed $\xi$-pattern is $D_1$. This guess will lead to either state  $q_{f_1}$ or state $q_{f_2}$ based on the following two conditions:

\begin{itemize}
    \item It reaches the final state $q_{f_2}$ if the guessed $\xi$-pattern is the last $\xi$-pattern appearing in the given monomial $m$ (the last variable could be either  $\xi_{s-1}$ or $\xi_{s}$, influencing the transition from either state $q_{s-1,\lambda-1}$ or $q_{s,\lambda}$) \emph{ or}
    \item It reaches the final state $q_{f_1}$ if the guessed $\xi$-pattern is not the last one, indicating that there exists at least one other $\xi$-pattern following it.
\end{itemize}
We use two different final states, $q_{f_1}$ and $q_{f_2}$,  to ensure that each $\xi$-pattern is fully processed before considering the next one.

Similarly, suppose the sum of the exponents for the guessed $\xi$-pattern is denoted by $\mathcal{K}$. In that case, the automaton will transition to the states based on the value $\mathcal{K}\mod p$ and the variable appearing in the last position (either $\xi_{s-1}$ or $\xi_{s}$).

The correct guess happens on the variable occurring at the last position of the $\xi$-pattern, which can be either $\xi_{s-1}$ or $\xi_{s}$.
In this case, the automaton will transition to the state $q_{1,0}$. Since the first variable of any $\xi$-pattern is $\xi_1$, and upon making a correct guess, the automaton continues to process the guessed $\xi$-pattern further.

However, if the guess is incorrect--- meaning the automaton transitions to the state $q_{1,0}$ while still processing the current $\xi$-pattern--- it indicates that additional variables ($\xi_{s-1}$ or $\xi_{s}$) still needs to be processed. Once the automaton reaches the state $q_{1,0}$, there will be no transition for the variables $\xi_{s-1}$ or $\xi_{s}$, resulting in the discarding of this particular guess.

\section{\\Proof of Lemma \ref{composing-substitution-mat} (Composing $\mathrm{K}$ Substitution Matrices)} \label{Appendix-B}
\begin{proof}
 The proof is by induction on $\mathrm{K}$. 
 \begin{itemize}
     \item \textbf{Base Case: $\mathrm{K}=1$} \\We have $f_1=f_0(A_{11},\ldots,A_{1n_1})=f(A_{11},\ldots,A_{1n})$. The lemma holds for $C_i=A_{1i}, i \in [n]$.

     \item \textbf{Base Case: $\mathrm{K}=2$} \\Let $f=\sum_{m \in X^D} \alpha_m.m.$  Each matrix $A_{ij}$ can be expressed as $$A_{ij}=\sum_{k=1}^{n_{i+1}} A^{(k)}_{ij} z_{i+1,k},$$ where $A^{(k)}_{ij} \in\F^{d_i\times d_i} $  for all $k \in [n_i+1]$.\\

     Consider a non-zero monomial  $m$ defined as $m=x_{\ell_1}x_{\ell_2}\ldots x_{\ell_D}$, where $\ell_i \in [n]$ for all $i \in [D]$. If we replace each variable $x_{\ell_i}$ by $A_{1\ell_i}$ in $m$, let $f_{1,m}$ denote the $(1,d_1)$ entry of the matrix product  $\prod_{j=1}^DA_{1\ell_j}$. We note that $f_{1,m}\in \F\angle {Z_2}$.\\

     Next, we substitute $z_{2i}$ with $A_{2i}$ in $f_{1,m}$, yielding $f_{2,m}$ as the $(1,d_2)$ entry of the matrix $f_{1,m}(A_{21},A_{22},\ldots,A_{2n_2})$.\\

     Now, consider  $a=(a_1,\ldots,a_D)\in [n_2]^D$ and define the n.c. monomial $m_{a,Z_2}=\prod_{i=1}^Dz_{2,a_i}$. We can express:

\begin{eqnarray*}
  \prod_{i=1}^D  A_{1\ell_i} & = & \prod_{i=1}^D\left(\sum_{k=1}^{n_2} A^{(k)}_{1\ell_i}z_{2k}\right)\\
  &=&\sum_{(a_1,\ldots,a_D)\in [n_2]^D} \left(\prod_{i=1}^DA^{(a_i)}_{1\ell_i}\right)\times m_{a,Z_2}
\end{eqnarray*}
This factorization holds because $m_{a,Z_2}$ is generated only by the terms 
$\prod_{i=1}^D\big(A^{(a_i)}_{1\ell_i}z_{2a_i}\big)$, where $A^{(a_i)}_{1\ell_i} \in \F^{d_1 \times d_1}$ for all $i \in [D]$.

Let $\alpha_{1,d_1}$ be the $(1,d_1)$-th entry of the matrix $\prod_{i=1}^DA^{(a_i)}_{1\ell_i} $. This $\alpha_{1,d_1}\in \F$ represents the coefficient of the monomial $m_{a,Z_2}$ in the $(1,d_1)$  entry of $\prod_{i=1}^DA_{1\ell_i}$.

Recall that $A_{1j}=\sum_{k=1}^{n_2} A^{(k)}_{1j} z_{2k}$ for all $j \in [n_1]$. For each $i \in [n]$, define: $$C_i=\sum_{k=1}^{n_2} A^{(k)}_{1i} \scalebox{1.2}{$\otimes$} A_{2k}.$$

Thus, $C_i$ is a matrix of dimension $(d_1\cdot d_2)$.\\

We aim to show that $f_2$  corresponds to the $(1,d_1\times d_2)$-th entry of the matrix obtained by substituting $x_{\ell_i}$ with $C_{\ell_i}$ for all $i \in [D]$. Substituting $x_{\ell_i}$ by $C_{\ell_i}$ for all $i \in [D]$ in $m$, we have
\[
C_{\ell_1} C_{\ell_2} \ldots C_{\ell_D} = \prod_{i=1}^D \left( \sum_{k=1}^{n_2} A^{(k)}_{1\ell_i} \scalebox{1.2}{$\otimes$} A_{2k} \right).
\]

Expanding this, we have:
\[
= \sum_{(a_1,\ldots,a_D) \in [n_2]^D} \left( \prod_{i=1}^D A^{(a_i)}_{1\ell_i} \scalebox{1.2}{$\otimes$} A_{2a_i} \right).
\]

By the mixed product property of \( \scalebox{1.2}{$\otimes$} \) and matrix multiplication (See Lemma \ref{mixed_product_property-lem}), we have:
\[
= \sum_{(a_1,\ldots,a_D) \in [n_2]^D} \left( \left( \prod_{i=1}^D A^{(a_i)}_{1\ell_i} \right) \scalebox{1.2}{$\otimes$} \left( \prod_{i=1}^D A_{2a_i} \right) \right).
\]

Let $M_{a,Z_2}=\prod_{i=1}^DA_{2a_i}$ and $\beta_{1,d_2}$ be the  $(1,d_2)$-th entry of the matrix $M_{a,Z_2}$. This $\beta_{1,d_2}\in\F$ corresponds to evaluating the monomial $m_{a,Z_1}$ using matrices $A_2=(A_{21},A_{22},\ldots,A_{2n_2})$ by replacing $z_{2j}$ with $A_{2j}$.\\

The $(1,d_1.d_2)$-th entry of the matrix $\big(\prod_{i=1}^DA^{(a_i)} _{1\ell_i}\big) \scalebox{1.2}{$\otimes$} \big( \prod_{i=1}^DA_{2a_i}\big)$ is equal to $(1,d_2)$-th entry of $M_{a,Z_2}$ scaled by  $\alpha_{1,d_1}$. Summing over all $a\in [n_2]^D$, the $(1,d_1\cdot d_2)$-th entry of this matrix $\prod_{i=1}^DC_{\ell_i}$ equals $f_{2,m}$.\\

By linearity, this extends to $f(C_1,C_2,\ldots,C_n)$, showing that the $(1,d_1\cdot d_2)$-th entry of the matrix $f(C_1,C_2,\ldots,C_n)$ is equal to the polynomial $f_2$. \\

This completes the base case, where $\mathrm{K}=2$.
\item  \textbf{Inductive Step:} \\
{\bf Induction Hypothesis: } Assume that for $\mathrm{K}-1$, we can express $f$ evaluated at matrices  $A_i=(A_{i1},\ldots,A_{in_i})$ as 

$$f_i=f_{i-1}(A_{i1},A_{i2},\ldots,A_{in_{i}})$$

for all $i \geq 1$. Let $f_{\mathrm{K}-1}$ be the resulting polynomial over the n.c. variables $Z_{\mathrm{K}}=\{z_{\mathrm{K}1},\ldots,z_{\mathrm{K}n_{\mathrm{K}}}\}$.

For the induction hypothesis, we assume that there is a matrix substitution 
$B=(B_{1},B_{2},\ldots,B_{n})$ such that each $B_i$ has dimensions $\prod_{i\in[\mathrm{K}-1]}d_i$ and the polynomial $f_{\mathrm{K}-1}$ is equal to the $(1,\prod_{i\in[\mathrm{K}-1]}d_i)$-th entry of the matrix  $f(B_{1},B_{2},\ldots,B_{n})$. Each matrix $B_i$ can be written as: $$B_i=\sum_{j=1}^{n_\mathrm{K}} B^{(j)}_{i} z_{\mathrm{K},j}. $$

By the inductive hypothesis, we find that $f_{\mathrm{K}-1}$ is equivalent to the $(1,\prod_{i\in[\mathrm{K}-1]}d_i)$-th entry of the matrix  $f(B_{1},B_{2},\ldots,B_{n})$.  Define: $$f_\mathrm{K}=f_{\mathrm{K}-1}(A_{\mathrm{K}1},\ldots,A_{\mathrm{K}n_\mathrm{K}})[1,d_{\mathrm{K}}].$$
This reduces to the base case.

Consequently, there exists a matrix substitution $C=(C_{1},C_{2},\ldots,C_{n})$, where each $C_i$ has dimensions $\prod_{i\in[\mathrm{K}]}d_i$ and is given by:
$$C_i=\sum_{j=1}^{n_\mathrm{K}} B^{(j)}_{i} \scalebox{1.2}{$\otimes$} A_{\mathrm{K},j}.$$

Thus, the polynomial $f_\mathrm{K}$ matches the  $(1,\prod_{i\in[\mathrm{K}]}d_i)$-th entry of the matrix $f(C_{1},C_{2},\ldots,C_{n})$. This concludes the proof of the lemma.
\end{itemize}
\end{proof}

\section{Missing Proofs from \cref{step1}}
\label{step1-proofs}
\subsection{Proof of Proposition \ref{m-rho-form}}
\begin{proof}
    The path $\rho$ starts at $q_0$ and ends at the state $q_{s-1}$, labeled by the monomial $m$. When this path $\rho$ returns to $q_0$ for the $k$-th time, it has converted some initial part of the monomial $m$ into $\alpha_k m'_1\cdot m'_2\cdots m'_{k}$, where $\alpha_k\in\F[Y \sqcup Z]$. The number of sub-monomials, denoted by $N$, depends on the number of times the path $\rho$ returns to the initial state $q_0$ before eventually reaching either $q_{s-1}$ or $q_0$. We can see from the automaton given in Figure \ref{fig1} that each $m'_\ell$  is of the form $m'_\ell=\xi^{\ell_1}_1.\xi^{\ell_2}_2\cdots\xi^{\ell_s}_s$, where $\ell_k>0,k\in[s-1]$ and $\ell_s\geq 0$. The commutative part is grouped into the coefficient $\alpha_k$. Note that $\ell_k>0,k\in[s-1]$. This is because to return to $q_0$, the path $\rho$ has to use transitions involving $\xi_1,\ldots,\xi_{s-1}$ variables. But the exponent $\ell_s\geq 0$, because to return to $q_0$ the path $\rho$ may use the transition from $q_{s-2}$ to $q_0$ instead of going to the state $q_{s-1}$, and $\xi_s$ variable only appears at the transition from $q_{s-1}$ to itself. This completes the proof.
\end{proof}
\subsection{Proof of Claim \ref{obs-good-mon}}
\begin{proof}
The path $\rho$ respects the boundary between $m_j$ and $m_{j+1}$ in $m$ for all $j<D_2$.
   The proof follows by noting that each transition of the substitution automaton $\mathcal{A}$ has exactly one $\xi$ variable in it and each transition of the automaton consists of reading a variable appearing at some position in $m$ and substituting it according to transition rules of $\mathcal{A}$.  
\end{proof}

\subsection{Proof of Claim \ref{obs-bad-mon}}
\begin{proof}
    We will consider the following two possibilities for $\rho$:
    \begin{itemize}
        \item {\bf Case 1:} Suppose the path $\rho$ visits the state $q_0$ in the middle of processing some sub-monomial $m_j$ of $m$ (that is when the automaton visits the state $q_0$ when it reads some variable appearing in a position other than the first position).  At this point, the length of the prefix of $m$ that has been processed is not a multiple of $D_1$. Additionally, at this stage, the sum of exponents of $\xi$ variables in at least one of the generated $m'_{\ell}$ up to this point is not equal to $D_1$.
        \item {\bf Case 2:} Suppose $\rho$ is in a state $q_r, r\neq 0$, while replacing the variable appearing at the 1st position of the sub-monomial $m_j$. That is $\rho$ starts replacing the sub-monomial $m_j$ from a state $q_r$ that is not $q_0$. If $\rho$ reaches $q_0$ again in the middle of $m_j$, then this scenario is already handled by Case 1. Otherwise, it fully replaces $m_j$ before returning to the state $q_0$. 
Note that since $\rho$ started processing the sub-monomial $m_j$ from a state $q_r\neq q_0$ and it fully replaces at least $m_j$ before returning $q_0$, we can make the following observation.
At this point, the sum of exponents of $\xi$ variables in the sub-monomial $m'_{\ell}$, which is generated by $\rho$ after fully processing $m_j$  and returning to $q_0$, is strictly greater than $D_1$. This is because when $\rho$ is in state  $q_r$, the path $\rho$ has recently transitioned from $q_0$ while processing some earlier sub-monomial $m_{j'}$ where $j'<j$.
Between the time $\rho$ returning to $q_0$ after starting from $q_0$ for $m_{j'}$ and the completion of $m_{j}$,
 at least all sub-monomials $m_{j'},m_{j'+1},\ldots,m_j$ of the monomial $m$ are processed to produce a single $m'_\ell$ in $m_\rho$. Thus, the sum of exponents of $\xi$ variables in the sub-monomial $m'_{\ell}$ is strictly greater than $D_1$ and in particular we can say that it is $c\times D_1$ where $c>1$.  
    \end{itemize}
This completes the proof.
\end{proof}

\subsection{Proof of Claim \ref{f-hat-form}}
\label{proof-f-hat-form}
 \begin{proof}
     Recall that $f=\sum_{i \in [s]}\prod_{j \in [D_2]} Q_{i,j}$. For ease of notation, we assume that the position of linear forms within each $Q_{i,j}$ polynomial starts from 1. For a monomial $m=m_1\cdot m_2\cdots m_{D_2}$ with coefficient $\alpha_m$, the polynomial $\hat{f}_{\alpha_m\cdot m}$ is the sum of all monomials obtained from computation paths $\rho$ labeled by $m$ from Case 1:

      \begin{eqnarray*}
     \hat{f}_{\alpha_m\cdot m}&=&\alpha_m\cdot\hat{f}_{m}\\
     &=& \alpha_m\times \sum\limits_{\rho:q_0 \overset{m}\leadsto q_{s-1}} m_{\rho}.
     \end{eqnarray*}
     
     For each $j<D_2$,  the boundary between $m_j$ and $m_{j+1}$ in $m$ is respected in all such computation paths $\rho$. In particular, for each path $\rho$ and sub-monomial $m_j$, we can associate a set $I_{\rho,j}\subseteq [D_1]$ such that $r \in I_{\rho,j}$, if the path $\rho$ makes transition from $q_{k-1}$ to $q_{k}$ (for some $k \in [s-1]$) or from $q_{s-2}$ to $q_0$ while reading  a variable at position $r$ in $m_j$. The transition is unique for given $\rho$ and $r \in I_{\rho,j} $.  Since computational path $\rho$ respects all boundaries of the sub-monomials, it makes exactly $s-1$ transitions and returns to $q_0$ when starting to process $m_{j+1}$. Thus, $|I_{\rho,j}|=s-1$.
     
      Let $I_{\rho,j}=\{\ell_1,\ell_2,\cdots,\ell_{s-1}\}$, where $\xi_{I_{\rho,j}}=\xi^{\ell_1}_1.\xi^{\ell_2-\ell_1}_2\cdots\xi^{D-\ell_{s-1}}_{s}$ ($\ell_1<\ell_2<\cdots<\ell_{s-1}$). From the automaton in Figure \ref{fig1}, it follows that $I_{\rho,j}$ can be any subset of size $s-1$.
 For a sub-monomial $m_j$ of $m$, let $m'_{j,\rho}$ be the transformed sub-monomial that includes both commutative variables $Y\sqcup Z$ and n.c. variables $\xi$. 
Define $c_{j,\rho}$ as the commutative part of $m'_{j,\rho}$. Then $m'_{j,\rho}=c_{j,\rho} \cdot \xi_{I_{\rho,j}}$. 

Given $\rho$, we can define the sequence of sets $I_{\rho,j}\subseteq [D_1]$ for $j\in[D_2]$ of size $s-1$ and conversely for a sequence of sets $I_{j}\subseteq [D_1]$ for $j\in[D_2]$ of size $s-1$, we can define the path $\rho$.

 We can then express  $\hat{f}_m$  as: 
 
       \begin{eqnarray*}
        \hat{f}_m&=&\sum\limits_{\rho:q_0 \overset{m}\leadsto q_{s-1}} m_{\rho}\\
        &=&\sum\limits_{\rho:q_0 \overset{m}\leadsto q_{s-1} } \left(\prod_j m'_{j,\rho}\right) \\
        &=&\sum\limits_{\rho:q_0 \overset{m}\leadsto q_{s-1} } \left( \prod_j c_{j,\rho} \cdot\xi_{I_{\rho,j}}\right) \\
        &=&\prod_{j=1}^{D_2-1}\left(\sum\limits_{\rho:q_0 \overset{m_j}\leadsto q_0 } c_{j,\rho}\cdot\xi_{I_{\rho,j}} \right) \times \left(\sum\limits_{\rho:q_0 \overset{m_{D_2}}\leadsto q_{s-1} } c_{j,\rho}\cdot\xi_{I_{\rho,j}} \right)\\
        &=&\prod_{j=1}^{D_2}\left(\sum_{\substack{I \subseteq [D_1], |I|=s-1} }c_{j,I}\xi_{I}\right)
       \end{eqnarray*}
Here, $c_{j,\rho}$ and $c_{j,I}$  are commutative monomials over $Y\sqcup Z$  defined by the path $\rho$ and the set $I$, respectively.
Recall $m=m_1\cdot m_2\cdots m_{D_2}$. The second last equality holds because the computational path $\rho:q_0 \overset{m}\leadsto q_{s-1}$ can be decomposed into $\rho:q_0 \overset{m_1}\leadsto q_0\overset{m_2}\leadsto q_0\overset{m_3}\leadsto\cdots \overset{m_{D_2-1}}\leadsto q_0\overset{m_{D_2}}\leadsto q_{s-1} $ and each sub-path is independent of other sub-path. Since we sum over all possible computational paths from $q_0$ to $q_{s-1}$ that is labeled by the monomial $m$, this equality holds.

The last two equalities hold because, $I_{\rho,j}$ can be any subset of size  $s-1$ and is independent of $I_{\rho,k}$ for $k\neq j$ (where $k\leq D_2$). 

Recall that $Mon(f)$ is the set of all monomials computed by the depth-5 $+$-regular circuit for $f$. Since coefficients of sub-monomials do not change due to the automaton's operations, we can ignore them in the notation for ease.
Therefore, we have:
 \begin{eqnarray*}
\hat{f}&=&\sum_{\substack{m \in Mon(f)}}\hat{f}_m \\
&=&\sum_{\substack{m \in Mon(f)}}\prod_{j=1}^{D_2}\left(\sum_{\substack{I \subseteq [D_1], |I|=s-1} }c_{j,I}\xi_{I}\right)\\
&=&\sum_{i\in [s]}\prod_{j\in[D_2]} \left (\sum_{\substack{I \subseteq [D_1], |I|=s-1} }Q_{i,j,I}\times \xi_I \right )
   \end{eqnarray*}  

Let $Q_{i,j,I}$ be the set of all monomials in  
$\left(\sum_{\substack{I \subseteq [D_1], |I|=s-1} }c_{j,I}\xi_{I}\right)$
that has the same non-commutative part $\xi_I$.
The last equality holds because if $m\in Mon(f)$, it can be written as $m=m_1\cdot m_2\cdots m_{D_2}$ and for some  $i \in[s]$ such that each sub-monomial $m_j \in X^{D_1}$, has a non-zero coefficient in $Q_{ij}$ and it can be processed by the substitution automaton independent of other sub-monomials.
Since we sum over all possible computational paths for each monomial in $m \in Mon(f)$, the equality holds.
Also suppose $m$ is computed by $\prod_{j\in[D_2]} Q_{i,j}$ for some $i\in[s]$ then by definition $m\in Mon(f)$. This completes the proof.
 \end{proof}

\subsection{Proof of Claim \ref{f-hat-spurious}}
\begin{proof}
Let $m\in X^D$ be a monomial computed by the depth-5 circuit that computes $f$. For some $i\in[s]$, $m$ has a non-zero coefficient in $\prod_jQ_{ij}$. The output of the substitution automaton $\mathcal{A}$ on $m$ can be expressed as $\hat{f}_m+F_m$. 
Consequently, the output of the automaton $\mathcal{A}$ on the polynomial $f$ is given by:  
  $$f'= \sum\limits_{m \in Mon(f)}(\hat{f}_m +F_m)$$ 
  which is the polynomial $f'= \hat{f} +F$.
 \end{proof}

\section{Additional Proofs and Automaton} 

\subsection{Proof of Claim \ref{ord-pow-sum-c-nc}}
\label{small_proofs}
\begin{proof}
Since $g$ is an ordered power-sum polynomial, we can treat the n.c. variables $\xi=\{\xi_1,\xi_2,\ldots,\xi_s\}$ as commutative variables without introducing any new cancellations. This is because the exponents of the variables $\xi=\{\xi_1,\xi_2,\ldots,\xi_s\}$ are different for any two distinct monomials of $g$. Thus, $g$ as a n.c. polynomial is non-zero if and only if $g$ as a commutative polynomial is non-zero. 
\end{proof}

\subsection{Substitution Automaton for Small Degree Case (see \cref{small-degree-case})}
\begin{figure}[H]
\begin{center}
\begin{tikzpicture}
\node(pseudo) at (-1,0){}; 
\node(0) at (0,0)[shape=circle,draw,double,fill=green!40, minimum size=1cm]
     {$q_0$}; 
     \node(1) at (2,0)[shape=circle,draw,minimum size=1cm] {$q_1$};
     \node(2) at (4,0)[shape=circle,draw, minimum size=1cm] {$q_2$}; 
     \node(3) at (6,0)[shape=circle,draw, minimum size=1cm] {$q_{c-1}$}; 
     (pseudo) edge
     
     \path [->] (0) 
     edge node [above] {$x_{1j}$} (1) (1) 
     edge node [above] {$x_{2j}$} (2) (2)
     edge[dotted] node [above] {$\cdots$} (3)

     (3)      edge [bend left=30]  node [below]  {$x_{cj}$}     (0)

     (pseudo) edge (0);

\end{tikzpicture}
\caption{The transition diagram for the variable $x_j : 1\leq j \leq n$}\label{automaton-small-degree}
\end{center}
\end{figure}
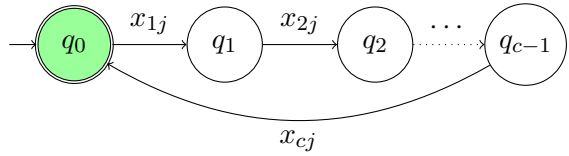 

\begin{displaymath}
\mathbf{M_{x_j}} =
\left( \begin{array}{cccccc}
0 & x_{1j} & 0 & \ldots & 0\\
0 & 0 & x_{2j} &  \ldots & 0\\
\vdots & \vdots & \vdots & \ddots &\vdots \\
0 & 0 & 0 & \ldots & x_{c-1,j}\\
x_{c,j} & 0 & 0 & \ldots & 0 
\end{array} \right)
\end{displaymath}

\bibliographystyle{alpha}

\bibliography{refs}

 \end{document}